\theoremstyle{plain}
\newtheorem{theorem}{Theorem}[section]
\newtheorem{lemma}[theorem]{Lemma}
\newtheorem{claim}[theorem]{Claim}
\newtheorem{proposition}[theorem]{Proposition}
\newtheorem{corollary}[theorem]{Corollary}
\newtheorem*{lemma*}{Lemma}
\newtheorem*{claim*}{Claim}
\newtheorem*{proposition*}{Proposition}
\newtheorem*{fact*}{Fact}
\newtheorem*{corollary*}{Corollary}
\newtheorem*{hint*}{Hint}
\theoremstyle{definition}
\newtheorem{definition}[theorem]{Definition}
\newtheorem{remark}[theorem]{Remark}
\newtheorem{question}[theorem]{Question}
\newtheorem*{theorem*}{Theorem}
\newtheorem*{definition*}{Definition}
\newtheorem*{remark*}{Remark}
\newtheorem*{notation*}{Notation}
\newtheorem*{example*}{Example}
\newtheorem*{examples*}{Examples}
\newtheorem*{question*}{Question}
\newtheorem*{answer*}{Answer}
\newtheorem*{problem*}{Problem}
\newtheorem*{solution*}{Solution}
\newtheorem*{idea*}{Idea}
\newtheorem*{conjecture*}{Conjecture}
\newcommand{\floor}[1]{\lfloor {#1} \rfloor}
\newcommand{\ceil}[1]{\lceil {#1}\rceil}
\newcommand\half{\frac12}
\newcommand{\ind}[1]{^{(#1)}}
\newcommand\defeq{\ensuremath{\stackrel{\rm def}{=}}} %
\renewcommand{\Pr}{\mathop{\bf Pr\/}}
\newcommand{\E}{\mathop{\bf E\/}}
\DeclareMathOperator*\poly{poly}
\newcommand\Enc{\textnormal{Enc}}
\newcommand\Dec{\textnormal{Dec}}
\newcommand\BDC{\textnormal{BDC}}
\newcommand{\eps}{\varepsilon}
\newcommand\rand{\ind{\textnormal{avg}}}
\newcommand{\avgtraces}{{\exp(O(\log^{1/3} n))}}
\newcommand{\ourtraces}{{\exp(O_q(\log^{1/3}(\frac{1}{\varepsilon})))}}
\begin{document}

\begin{titlepage}
\title{Coded trace reconstruction in a constant number of traces}
\author{Joshua Brakensiek\thanks{Department of Computer Science, Stanford University, Stanford, CA. Email: {\tt jbrakens@cs.stanford.edu}. Research supported by an NSF Graduate Research Fellowship.} , Ray Li\thanks{Department of Computer Science, Stanford University.  Research supported by an NSF Graduate Research Fellowship grant DGE-1656518 and NSF grant CCF-1814629. Email: {\tt rayyli@cs.stanford.edu}} , Bruce Spang\thanks{Department of Computer Science, Stanford University, Stanford, CA. Email: {\tt bspang@cs.stanford.edu}.}}
\date{\today}
\maketitle\thispagestyle{empty}

\begin{abstract}
The \emph{coded trace reconstruction} problem asks to construct a code $C\subset \{0,1\}^n$ such that any $x\in C$ is recoverable from independent outputs (``traces'') of $x$ from a binary deletion channel (BDC). We present binary codes of rate $1-\varepsilon$ that are efficiently recoverable from ${\exp(O_q(\log^{1/3}(\frac{1}{\varepsilon})))}$ (a constant independent of $n$) traces of a $\operatorname{BDC}_q$ for any constant deletion probability $q\in(0,1)$. We also show that, for rate $1-\varepsilon$ binary codes, $\tilde \Omega(\log^{5/2}(1/\varepsilon))$ traces are required. The results follow from a pair of black-box reductions that show that average-case trace reconstruction is essentially equivalent to coded trace reconstruction. We also show that there exist codes of rate $1-\varepsilon$ over an $O_{\varepsilon}(1)$-sized alphabet that are recoverable from $O(\log(1/\varepsilon))$ traces, and that this is tight.
\end{abstract}

\end{titlepage}

\section{Introduction}
The \emph{trace reconstruction} problem was first proposed in \cite{Levenshtein:2001fk, Levenshtein:2001vu} and further developed in \cite{Batu:2004um}. 
In trace reconstruction, we wish to recover an unknown binary string $x \in \{0, 1\}^n$ given a few random subsequences of $x$.
Each subsequence, or \emph{trace}, is generated by sending $x$ through the \emph{binary deletion channel with deletion probability $q$} ($\BDC_q$), which independently deletes each symbol of $x$ with probability $q \in (0, 1)$. In particular, the positions of the deleted bits are not known. For example, deleting either the first or second bit of ``110'' gives the trace ``10''.

Trace reconstruction has been primarily studied in two settings: \emph{worst-case}, in which the input string $x$ is chosen adversarially, and \emph{average-case}, when the input string $x$ is chosen uniformly at random over all possible $n$-bit strings. The fundamental question in both settings is to determine the minimum number of traces $T=T(n)$ needed in order to recover a length $n$ string $x$ correctly with high probability. In both settings, there is currently an exponential gap (as a function of $n$) for bounding $T(n)$ -- see Section~\ref{subsec:rel-work} for the best known bounds.

In this work, we consider an emerging \cite{Haeupler:2014fn, Cheraghchi:2019vd, Abroshan:2019wt} variant of the trace reconstruction known as \emph{coded trace reconstruction}. In this model, we want the smallest $T$ such that there exists a high rate code $C\subset \{0,1\}^n$ such that, for an adversarially chosen $x\in C$, we can recover $x$ with high probability from $T$ traces. This model is directly motivated by DNA storage~\cite{yazdi2017, Cheraghchi:2019vd}, in which data is stored as multiple encoded strands of DNA. Besides directly generalizing the trace reconstruction problem, coded trace reconstruction also generalizes the well-studied problem of determining the capacity of the binary deletion channel.

In this coded setting, we wish to design codes for trace reconstruction with high \emph{rate}, which is defined\footnote{All $\log$s and $\exp$s are base $2$ unless otherwise specified.} to be $\log|C|/n$.
We consider the regime in which the rate is $1 - \eps$ (i.e., $|C| \approx 2^{(1-\eps)n}$), where $\eps \in (0, 1)$ is a small constant or shrinking as a function of $n$. In particular, the key question we study is as follows.

\begin{question}
    For a given $\eps \in (0, 1)$ and positive integer $n$, what is the smallest $T$ such that we can construct a binary code of rate $1-\eps$ and length $n$ recoverable from $T$ traces?
\end{question}

\paragraph*{Contributions.}
We summarize the main contributions of our work below. See Section~\ref{sec:contributions} for formal theorem statements. In all these results, we consider any constant $q \in (0, 1)$. 
\begin{enumerate}
\item \textbf{Binary codes with constant number of traces.} For $\eps \in (0, 1)$, we construct an infinite family of binary codes of rate $1-\varepsilon$ efficiently recoverable from a constant number of traces over the $\BDC_q$ (independent of $n$). This follows as an immediate corollary (Corollary~\ref{cor:main}) of the following more general result we prove.

\item \textbf{Black-box upper bounds from average-case trace reconstruction.} We show that, if average-case trace reconstruction on length $n$ strings succeeds with sufficiently high probability in $T(n)$ traces, then there exist rate $1-\varepsilon$ codes that are decodable from $T(\tilde O_q(1/\varepsilon))$ traces over the $\BDC_q$ (Theorem~\ref{thm:main}). 
  In particular, by a result in \cite{HPP18}, $\ourtraces < \frac{1}{\varepsilon^{o(1)}}$ traces suffice (Corollary~\ref{cor:main}). 
  
  \item \textbf{Black-box lower bounds from average-case trace reconstruction.} Conversely, we show that if average-case reconstruction on length $n$ strings requires $T(n)$ traces, then reconstruction of any binary code of rate $1-\eps$ requires $T(\tilde \Omega_q(1/\sqrt{\varepsilon}))$ traces over the $\BDC_q$ (Theorem~\ref{thm:lb}).
  In particular, by a recent result \cite{Chase:2019tb}, $\tilde \Omega_q(\log^{5/2}(1/\eps))$ traces are required (Corollary~\ref{cor:lb}).
  
  \item \textbf{Near-equivalence of average-case and coded trace reconstruction.} The two black-box reductions together imply that estimating the optimal number of traces for a code of rate $1-\eps$ is equivalent to closing the lower and upper bounds within a polynomial for average-case trace reconstruction on strings of length $\poly(1/\eps)$ (Remark~\ref{rem:code-avg}). 

  \item \textbf{Optimal number of traces for constant-sized alphabet.} We also consider the coded trace reconstruction problem over larger alphabets than binary. In particular, we give rate\footnote{The rate of a code $|C|$ of length $n$ over an alphabet $\Sigma$ is $\frac{\log_{|\Sigma|}|C|}{n}$} $1-\varepsilon$ codes over an alphabet of size $O_\varepsilon(1)$ that are efficiently encodable and decodable from $O(\log_{1/q}(1/\varepsilon))$ traces (Theorem~\ref{thm:large-ub}). We show this is optimal up to a constant factor (Theorem~\ref{thm:large-lb}). This shows that coded trace reconstruction is strictly easier for larger alphabets than for binary alphabets. To the best of our knowledge, this is the first non-trivial tight result in \em any \em model of trace reconstruction for the deletion channel.
\end{enumerate}

\subsection{Related work}\label{subsec:rel-work}

We now discuss how our results are situated at the intersection of the trace reconstruction and coding theory literature.

\paragraph{Classical trace reconstruction.}

One of the main motivations for trace reconstruction is the application to DNA sequencing in computational biology \cite{Batu:2004um}. When DNA is sequenced, the results may have insertion, deletion, and substitution errors. The original goal of trace reconstruction was to understand a simplified model of how an unknown piece of DNA can be recovered from its sequences. Recently, sequencing has been used for DNA storage \cite{yazdi2017, Cheraghchi:2019vd}, in which data is encoded so that it can be stored in DNA. This code needs to be decodable using a trace reconstruction-like process, while being high rate and using as few traces as possible.

The theoretical worst-case setting of trace reconstruction, recovering an arbitrary binary string, was originally studied in \cite{Levenshtein:2001fk, Levenshtein:2001vu, Batu:2004um,Holenstein:2008wy}. 
The current state of the art was derived independently in \cite{De:2017jj} and \cite{Nazarov:2017vz}, who show that $\exp(O(n^{1/3}))$ traces suffice for any constant deletion probability $q\in(0,1)$. A very recent result~\cite{chase2020upper} shows that $\exp(O(n^{1/5}))$ traces suffice for any $q \in (0,1/2]$.
Several works have also considered lower bounds for worst-case trace reconstruction \cite{Batu:2004um,Holenstein:2008wy,MPV14,Holden:2018tx,Chase:2019tb}. The best known lower bound is $\Omega\left(\frac{n^{3/2}}{\log^{16} n}\right)$ traces \cite{Chase:2019tb}, which has an exponential gap compared to the best known upper bound. Our work does not use or address worst-case trace reconstruction.

In the average-case setting studied by \cite{Holenstein:2008wy,MPV14,PeresZ17, HPP18}, the best upper bound is given by \cite{HPP18}, who showed that, for all deletion probabilities $q\in(0,1)$, a subpolynomial $\avgtraces$ traces suffice to recover a random string with high probability. 
Several works have also considered lower bounds for average-case trace reconstruction \cite{MPV14,Holden:2018tx,Chase:2019tb}.
The current best bound of $\Omega\left(\frac{\log n^{5/2}}{(\log\log n)^{16}}\right)$ traces \cite{Chase:2019tb} again has an exponential gap.
Our work shows that resolving the optimal number of traces up to a constant factor for coded trace reconstruction is essentially equivalent to average-case reconstruction. 

Trace reconstruction over a larger alphabet is less well studied. \cite{MPV:2014el,De:2017jj} show that it is possible to turn any trace reconstruction algorithm over a non-binary alphabet into a trace over a binary alphabet and use binary trace reconstruction to solve the problem, at a small cost to the failure probability. 
For coded trace reconstruction, we show that there is a substantial benefit to using a non-binary alphabet. For constant-sized alphabets, we show a matching upper and lower bound, determining the optimal number of traces up to a constant factor.

\paragraph{Coded trace reconstruction.}
Coded trace reconstruction generalizes the classical questions above about trace reconstruction.
The worst-case trace reconstruction question over a binary alphabet asks how many traces $T(n)$ are needed to achieve error probability $o(1)$ for the code $C=\{0,1\}^n$. As we show in Section~\ref{sec:avg-to-coded}, average-case trace reconstruction is equivalent to asking how many traces $T(n)$ are needed to achieve error probability $o(1)$ for a code $C$ of size $2^n(1-o(1))$. We use this connection to average-case trace reconstruction to construct much longer codes which are recoverable from few traces.

Cheraghchi, Gabrys, Milenkovic, and Ribeiro \cite{Cheraghchi:2019vd} formulated the coded trace reconstruction problem considered here.
Among other constructions, they give explicit constructions of binary codes of rate $1-O(\frac{1}{\log \log n})$ recoverable in $\exp(O(\log\log n)^{2/3})$ traces, and rate $1-O(\frac{1}{\log n})$ code recoverable in $\poly\log n$ traces. Our work improves the number of traces and allows a wider range of rates. For any $\varepsilon \ge n^{-o(1)}$, we show that there exist binary codes of rate $1-\varepsilon$ recoverable in $\ourtraces$ traces. Taking $\varepsilon = \Theta(\frac{1}{\log\log n})$ and $\varepsilon = \Theta(\frac{1}{\log n})$ gives the respective improvements to \cite{Cheraghchi:2019vd} in the number of traces. We emphasize that all the constructions of \cite{Cheraghchi:2019vd} have polynomial time encoding and decoding, whereas our constructions have polynomial time decoding in all considered parameter settings, but only polynomial time encoding when $\varepsilon\ge \Omega(\frac{\log \log n}{\log n})$.

Although our work deals with a constant fraction of deletions, several prior works considered coding for trace reconstruction for small numbers of deletions. Haeupler and Mitzenmacher \cite{Haeupler:2014fn} showed that, for any fixed integer $T$, as the deletion probability $q$ approaches 0, there exists a binary code recoverable from $T$ traces across the $\BDC_q$ with rate $1-O(H(q^T))$, where $H$ is the binary entropy function.
By contrast, our codes handle deletion probabilities arbitrarily close to 1.
We show, for example, that there exist binary codes of rate $0.99$ recoverable from $T=O(1)$ traces of the $\BDC_{0.99}$.
Abroshan, Venkataramanan, Dolecek, and Guill\'en \cite{Abroshan:2019wt} consider coding for channels applying a constant number of deletions.
They concatenate $\ell$ Varshamov-Tenengolts~\cite{varshamov1965codes} codes of length $m$ to construct a code of length $m\ell$ and rate $1 - O(\frac{\log m}{m})$ for any $m,\ell\ge 1$.
They bound the error probability for recovering for a channel that applies exactly $\ell'$ deletions, when $\ell' < \ell$.

\paragraph{Other trace reconstruction variants.}

There has recently been a variety of work on other problems related to trace reconstruction, which our work does not address. \cite{Gabrys:2018tj} considers the problem of recovering a string from the multiset of all its length $L$ substrings. \cite{BanCFSS19} studies population recovery under the deletion channel, an extension to trace reconstruction where we recover an unknown distribution over input strings, rather than a single input string. In \cite{KrishnamurthyMMP19}, the authors consider the problems of reconstructing matrices and sparse strings from traces.

\paragraph{Codes for the deletion channel.}
The optimal rate for coded trace reconstruction with one trace is also known as the \emph{capacity} of the binary deletion channel, a well-studied and difficult problem. 
The capacity of the binary deletion channel with deletion probability $q$ is clearly at most $1-q$, the capacity of the simpler binary erasure channel.
When $q\to 0$, the capacity is known to approach $1-H(q)$, where $H(q)$ is the binary entropy function (see \cite{DG2001} for the lower bound and \cite{KanoriaM2013, KMS2010} for the upper bound).
When $q\to 1$, the capacity is known to be $\Theta(1-q)$, but the exact capacity is known only to be roughly between $0.11(1-q)$ \cite{DrineaM06, DrineaM07}, and $0.41(1-q)$ \cite{RahmatiD15}.
A polynomial time encodable/decodable code meeting this up to a constant factor was given in \cite{GuruswamiL19, ConS19}.
The current best capacity upper bounds for intermediate $q$ (e.g., $q=0.5$) are given by \cite{FertonaniD10, RahmatiD15, Cheraghchi18}. We incorporate techniques used in constructing codes for the binary deletion channel in our construction of Theorem~\ref{thm:main}.
Our work shows that, at $q=1-\delta$, if one is allowed to reconstruct from $O_\delta(1)$ traces of the BDC$_q$ rather than only one trace, the capacity of the resulting channel improves from $\Theta(\delta)$ to $0.99$.

\subsection{Main results}\label{sec:contributions}

We now define the coded trace reconstruction problem formally and state our main theorems. 
For $q\in(0,1)$ and $x\in\{0,1\}^n$, we let $\BDC_q(x)$ denote the probability distribution of output of $x$ across the $\BDC_q$.
We let $\{0,1\}^*$ denote the set of binary strings of any length.

  \begin{definition}
  For $q,\delta\in(0,1)$ and positive integers $n$ and $T$, we say a code $C\subset \{0,1\}^n$ is \emph{$(T,q,\delta)$  trace reconstructible} if there exists a \emph{decoding function} $\Dec:(\{0,1\}^*)^T\to C$ such that, for all $c\in C$,
  \begin{align}
    \Pr_{z_1,\dots,z_T\sim \BDC_q(c)}[\Dec(z_1,\dots,z_T) \neq c] < \delta.
  \end{align}
  \end{definition}
  Typically, we desire $\delta\to 0$ as $n\to\infty$. 
  We say $C$ is \emph{decodable} in time $t$ if $\Dec$ can be computed in time $t$.
  We say $C$ is \emph{encodable} in time $t$ if there exists a bijection $\Enc:\{1,\dots,|C|\}\to C$ that can be evaluated in time $t$.
  The following notation, denoting the optimal number of traces for average-case trace reconstruction, is used throughout the paper.
  \begin{definition}\label{def:rand-tr}
    For $m\ge 1, q\in(0,1),$ and $\beta\ge 0$, let $T\rand_{q,\beta}(m)$ denote the smallest integer $T$ such that there exists a trace reconstruction algorithm for the $\BDC_q$ using $T$ traces that, on a uniformly random string $x$ of length $m$, succeeds with probability (over the randomness of the string and channel) at least $1-\frac{1}{3m^{\beta}}$.
    When $\beta$ is omitted, we take $\beta=0$.
  \end{definition}
  By repetition of the reconstruction algorithm and subsequently taking a majority vote, we have $T\rand_q(m)\le T\rand_{q,\beta}(m)\le O(\beta\log m)\cdot T\rand_q(m)$, so $T\rand_{q,\beta}(m)$ and $T\rand_q(m)$ are roughly the same size for constant $\beta$.

  \paragraph{Binary upper bound.}
  We prove the following upper bound for coded trace reconstruction, which allows bounds for average-case trace reconstruction to be turned into bounds for coded trace reconstruction.
  \begin{theorem}
    \label{thm:main}
    For all $q,\varepsilon\in(0,1)$, there exists constants $n_0=1/\varepsilon^{O_q(1)}$, $\beta=\Theta_q(1), n_{R}=\Theta_q(\frac{1}{\varepsilon}\log\frac{1}{\varepsilon})$, and $\delta=2^{-\varepsilon^{O_q(1)}n}$ such that, for all $n\ge n_0$, there exists a code $C\subset \{0,1\}^n$ of rate $1-\varepsilon$ that is $(T\rand_{q,\beta}(n_R),q,\delta)$ trace reconstructible.
    Furthermore, the encoding can be done in time $\poly_{\varepsilon,q}(n)$ and trace reconstruction can be done in time $\poly(n)$.
  \end{theorem}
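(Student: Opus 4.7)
The plan is a reduction from average-case trace reconstruction to coded trace reconstruction via a synchronization-marker construction, followed by applying the assumed average-case algorithm block by block. We partition each length-$n$ codeword into $k$ data blocks of length $n_R = \Theta_q(\frac{1}{\varepsilon}\log\frac{1}{\varepsilon})$ interleaved with a fixed marker pattern $M$ of length $\ell_m = \Theta_q(\log\frac{1}{\varepsilon})$, so that $k(n_R+\ell_m) \approx n$ and the marker overhead is $\ell_m/(n_R+\ell_m) = O(\varepsilon)$ of the code length. The choice of $n_R$ is the smallest length for which the marker overhead can be made $O(\varepsilon)$ while keeping the markers long enough to be locatable (relative to the typical block length) with probability $1-O(\varepsilon)$.

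With $T := T\rand_{q,\beta}(n_R)$ as the trace budget, let $p_x$ denote the failure probability of the assumed average-case algorithm on a fixed input $x \in \{0,1\}^{n_R}$. By definition $\E_x[p_x] \le 1/(3 n_R^{\beta})$, so Markov's inequality gives that the ``good'' set $G := \{x : p_x \le \varepsilon/2\}$ has density $|G|/2^{n_R} \ge 1 - O_q(1/(\varepsilon n_R^{\beta}))$, which is $\ge 1 - O(\varepsilon^2)$ once $\beta$ is a sufficiently large constant depending on $q$; this restriction costs only $O(\varepsilon^2)$ bits of rate per block. The code $C$ is then a concatenation: take a Reed--Solomon-type outer code of rate $1 - O(\varepsilon)$ and relative distance $\Theta(\varepsilon)$ over a suitable alphabet embedded in $G$ (packing several $G$-symbols per outer symbol if $|G| < k$), and interleave the $k$ outer-code symbols with copies of $M$. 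The combined rate is $(1-O(\varepsilon))\cdot(1-O(\varepsilon^2))\cdot(1-O(\varepsilon)) = 1 - O(\varepsilon)$, which after rescaling gives rate $1 - \varepsilon$.

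Decoding proceeds in two stages on each of the $T$ received traces. First, a dynamic-programming marker-locator identifies the positions of traces of $M$ in each received string, thereby segmenting it into $k$ sub-traces, one per data block; any block whose two surrounding markers cannot be reliably located is flagged as an erasure for that trace. Second, for each block index $i$, the $T$ segmented sub-traces corresponding to block $i$ are fed to the assumed average-case algorithm, and the block is erased if the output is flagged by a consistency check (e.g., disagreement between independently recomputed candidates). Since every data block of a codeword lies in $G$, the average-case algorithm succeeds with probability $\ge 1 - \varepsilon/2$ per block. Finally, the outer Reed--Solomon decoder corrects up to $\Theta(\varepsilon)k$ erased symbols and recovers the codeword.

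The main obstacle is controlling marker-localization failures, since $\ell_m = \Theta_q(\log(1/\varepsilon))$ may be much smaller than $\log n$, and a naive union bound over all $kT$ marker locations does not suffice. We circumvent this by letting the outer code absorb an $O(\varepsilon)$ fraction of erased blocks, so we only need \emph{each individual} marker to be correctly located with probability $1 - O(\varepsilon)$, which follows from Chernoff-type tail bounds on the number of deletions landing inside $M$ and on the variable-length gaps between consecutive markers in a trace. Conditional on the deletion pattern of the markers, the failure events across the $k$ blocks of a single trace are essentially independent, so the total number of erased outer-symbols concentrates at $O(\varepsilon)k$ with failure probability $2^{-\Omega_q(\varepsilon^2 n / \log(1/\varepsilon))} = 2^{-\varepsilon^{O_q(1)} n}$, matching the claimed $\delta$. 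Polynomial-time encoding follows from standard Reed--Solomon operations over an alphabet of size $\poly_{\varepsilon,q}(n)$ together with a table-based encoding of the bijection into $G$ (hence the $\poly_{\varepsilon,q}(n)$ encoding time), and decoding runs in time $\poly(n)$ by applying the average-case algorithm $k$ times, each in $\poly(n_R) \le \poly(n)$ time.
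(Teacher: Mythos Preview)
Your construction has a genuine gap at the alignment step, and it is exactly the obstacle the paper isolates in its ``first attempt'' discussion. With a single fixed marker pattern $M$ repeated between blocks, a missed marker (all of $M$ deleted, or its trace too short to recognize) or a spurious marker (a data block whose trace happens to contain a copy of the trace of $M$) does not just corrupt one block: it shifts the index of \emph{every} subsequent segment in that trace by one. Since each trace independently suffers $\Theta(\varepsilon k)$ such events at different locations, the ``$i$-th segment'' of trace $1$ and the ``$i$-th segment'' of trace $2$ will in general correspond to different original blocks $a_{i+\Delta_1}$ and $a_{i+\Delta_2}$, and feeding these mismatched segments to the average-case reconstructor produces garbage for a constant fraction of $i$. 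Flagging a block as erased when its surrounding markers are unreliable does not help, because you cannot tell from identical markers \emph{how many} blocks an over-long or under-long segment spans; nor can you use absolute position in the trace, since after $\Theta(k)$ blocks the drift in position is $\Theta(\sqrt{n})$, which for large $n$ exceeds the inter-marker spacing $\Theta_q(n_R)$.

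The paper's fix is precisely to replace the fixed marker $M$ by a sequence of \emph{distinct} synchronization symbols $s_1,\dots,s_{n_{out}}$ from an $\eta$-synchronization string, each encoded separately via a BDC code that is decodable from a \emph{single} trace, and to structure the content and sync encodings so they cannot be confused (content blocks are $m$-protected with balanced short runs; sync blocks consist only of long runs). This lets the decoder recover, within each trace individually, the index $i$ of most segments via the indexing algorithm for synchronization strings, and only then collect the $T$ correctly-indexed segments of $a_i$ for inner trace reconstruction. Your outer Reed--Solomon layer and your restriction to the good set $G$ are both fine and mirror the paper's Lemma~2.5 and Proposition~2.13, but without the per-trace indexing mechanism the multi-trace step cannot be carried out.
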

  We can instantiate Theorem~\ref{thm:main} using the state-of-the-art construction for average-case trace reconstruction of Holden, Pemantle, and Peres \cite{HPP18}, which states that $T\rand_q(\frac{1}{\varepsilon})\le \exp(O_q(\log^{1/3}\frac{1}{\varepsilon}))$. Doing so gives the following.
  \begin{corollary}
    \label{cor:main}
    For all $q,\varepsilon \in (0,1)$, there exists constants $n_0=1/\varepsilon^{O_q(1)}, T=\ourtraces,$ and $\delta=2^{-\varepsilon^{O_q(1)}n}$ such that, for all $n\ge n_0$, there exist codes of length $n$ and rate at least $1-\varepsilon$ that are $(T,q,\delta)$ trace reconstructible.
  \end{corollary}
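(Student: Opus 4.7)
The plan is to obtain Corollary~\ref{cor:main} as an immediate substitution into Theorem~\ref{thm:main} using the state-of-the-art average-case trace reconstruction bound of Holden, Pemantle, and Peres \cite{HPP18}. Theorem~\ref{thm:main} supplies, for every $q,\varepsilon\in(0,1)$ and every $n\ge n_0 = 1/\varepsilon^{O_q(1)}$, a code $C\subset\{0,1\}^n$ of rate $1-\varepsilon$ that is $(T\rand_{q,\beta}(n_R),q,\delta)$ trace reconstructible, where $n_R = \Theta_q(\frac{1}{\varepsilon}\log\frac{1}{\varepsilon})$, $\beta=\Theta_q(1)$, and $\delta = 2^{-\varepsilon^{O_q(1)}n}$. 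The entire task is therefore to verify that $T\rand_{q,\beta}(n_R) \le \ourtraces$.

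To control $T\rand_{q,\beta}(n_R)$, I would start from the HPP18 guarantee $T\rand_q(m) \le \exp(O_q(\log^{1/3} m))$, and then apply the confidence-boosting observation recorded after Definition~\ref{def:rand-tr}, namely $T\rand_{q,\beta}(m) \le O(\beta \log m)\cdot T\rand_q(m)$. Since $\beta=\Theta_q(1)$, this yields
\begin{align}
T\rand_{q,\beta}(n_R) \;\le\; O_q(\log n_R)\cdot \exp\!\left(O_q(\log^{1/3} n_R)\right).
\end{align}
Plugging in $n_R = \Theta_q(\frac{1}{\varepsilon}\log\frac{1}{\varepsilon})$ gives $\log n_R = \Theta_q(\log(1/\varepsilon))$. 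The leading polylog factor is absorbed into the exponential because $\log(1/\varepsilon) = \exp(\log\log(1/\varepsilon))$ and $\log\log(1/\varepsilon) = o(\log^{1/3}(1/\varepsilon))$, so the whole quantity collapses to $\exp(O_q(\log^{1/3}(1/\varepsilon))) = \ourtraces$, matching the claim. Combining this bound on the number of traces with the existence statement of Theorem~\ref{thm:main} produces the desired code with the advertised $n_0$ and $\delta$.

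There is essentially no real obstacle: the difficult work is packaged into Theorem~\ref{thm:main} (the reduction from coded to average-case trace reconstruction) and into \cite{HPP18} (the subpolynomial average-case upper bound). The only arithmetic that requires any care is the absorption of the $\log(1/\varepsilon)$ factor into the $\exp(O_q(\log^{1/3}(1/\varepsilon)))$ term, and the verification that $\log n_R$ remains $\Theta_q(\log(1/\varepsilon))$ despite the additional $\log\frac{1}{\varepsilon}$ factor inside $n_R$; both are routine.
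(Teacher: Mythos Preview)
Your proposal is correct and mirrors the paper's approach exactly: the corollary is obtained by plugging the HPP18 bound into Theorem~\ref{thm:main}, and the paper's own justification is nothing more than the one-line remark preceding the corollary. The only cosmetic difference is that the paper cites Theorem~\ref{thm:hpp} in the form $T\rand_{q,\beta}(n)\le\exp(O_{q,\beta}(\log^{1/3} n))$ directly (noting that \cite{HPP18}'s proof already yields any polynomial failure probability), whereas you route through the repetition bound $T\rand_{q,\beta}(m)\le O(\beta\log m)\cdot T\rand_q(m)$; both arrive at the same $\ourtraces$.
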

  \begin{remark}
    \label{rem:ub-eps}
    In coding theory, we are sometimes interested in codes with rate quickly approaching 1, and our bounds on the number of traces hold in this setting as well.
    For every $q \in (0, 1)$, Theorem~\ref{thm:main} and Corollary~\ref{cor:main} holds for all integers $n\ge \frac{1}{\varepsilon^{\Omega_q(1)}}$.
    Thus, we obtain obtain similar results for $\varepsilon$ going to 0 with $n$ so long as $\varepsilon\ge \frac{1}{n^{O_q(1)}}$. 
    Setting $\varepsilon = O(\frac{1}{\log n})$, we have codes of rate $1 - O(\frac{1}{\log n})$ recoverable from $\exp(O_q(\log\log n)^{1/3})$ traces with failure probability $2^{-\tilde O_q(n)}$, improving upon the $\poly\log n$ number of traces in \cite{Cheraghchi:2019vd} needed for the same $\varepsilon$.
    Our construction also gives a better bound on the number of traces when  $\varepsilon=O(\frac{1}{\log\log n})$, improving from $\exp(O_q(\log\log n)^{2/3})$ traces to $\exp(O_q(\log\log\log n)^{1/3})$ traces. 
  \end{remark}
  \begin{remark}
    While we improve on the number of traces in \cite{Cheraghchi:2019vd} and also give polynomial time decoding like in \cite{Cheraghchi:2019vd}, their codes are all polynomial time encodable, whereas ours are only so when $\varepsilon \ge \Omega(\frac{\log\log n}{\log n})$: a careful look at our runtimes shows our code is encodable in time $t_{enc}(\Theta(\frac{1}{\varepsilon}\log\frac{1}{\varepsilon}))\cdot \poly n$, where $t_{enc}(n')$ is the amount of time needed to encode a string of length $n'$ used for average-case trace reconstruction, as in Lemma~\ref{lem:short}. 
    Naively we upper bound $t_{enc}(n')\le 2^{O(n')}$.
    Thus, when $\varepsilon=O(\frac{1}{\log n})$, while we improve on the number of traces from \cite{Cheraghchi:2019vd} and also give polynomial time decoding, only \cite{Cheraghchi:2019vd} has codes with both encoding and reconstruction in polynomial time.
    Furthermore, the constants in our code are quite large, making them currently impractical. 
    Still, we hope the ideas in our construction could be used for future efficient constructions.
  \end{remark}

  \paragraph{Binary lower bound.}
  We also prove the following converse, showing that the number of traces needed for rate $1-\varepsilon$ trace reconstruction is at least the number of traces needed for average-case trace reconstruction on length $\frac{1}{\varepsilon^{1/2-o(1)}}$ strings with failure probability 1/3. 
  \begin{theorem}
  \label{thm:lb}
    For all $q,\delta\in(0,1)$, for sufficiently small $\varepsilon>0$, there exists $m=\tilde\Omega_q(\frac{1}{\varepsilon^{1/2}})$ such that, if $T=T_{q}\rand(m)$, all rate $1-\varepsilon$ codes of sufficiently large length are \emph{not} $(T-1,q,\delta)$-trace reconstructible.
  \end{theorem}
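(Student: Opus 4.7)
The plan is to prove Theorem~\ref{thm:lb} via a reduction in the opposite direction from Theorem~\ref{thm:main}: from average-case trace reconstruction on length-$m$ strings to coded trace reconstruction on an arbitrary rate $1-\varepsilon$ code $C \subseteq \{0,1\}^n$, with $m = \tilde\Omega_q(1/\sqrt{\varepsilon})$. If some such $C$ of sufficiently large length were $(T-1,q,\delta)$-trace reconstructible for $T = T\rand_q(m)$, the reduction would yield an average-case algorithm on length-$m$ strings using only $T-1$ traces and succeeding with probability at least $2/3$, contradicting the definition of $T\rand_q(m)$. To this end, I first locate a ``near-random'' window in $C$ by entropy averaging: for $c$ drawn uniformly from $C$, the chain rule gives $\sum_{i=1}^n H(c_i \mid c_{<i}) = H(c) \ge (1-\varepsilon)n$, so Markov's inequality implies that at most $\varepsilon n/\gamma$ indices $i$ have $H(c_i \mid c_{<i}) < 1-\gamma$. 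Setting $\gamma = \eta/m$ for a parameter $\eta = 1/\text{polylog}(1/\varepsilon)$ leaves at most $\varepsilon n m/\eta$ ``bad'' indices, so as long as $m \le \sqrt{\eta/\varepsilon} = \tilde O_q(1/\sqrt{\varepsilon})$, pigeonhole yields a length-$m$ window $I$ of consecutive good positions, for which $H(c|_I \mid c|_{<I}) \ge m-\eta$.

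Second, I upgrade the entropy bound to near-uniformity in total variation. Writing $D_s$ for the conditional law of $c|_I$ given $c|_{<I} = s$, the deficit $\mathbb{E}_s[m - H(D_s)] \le \eta$, combined with Markov's inequality, yields a prefix $s^*$ (in fact most prefixes) with $H(D_{s^*}) \ge m - O(\eta)$; Pinsker's inequality then gives $d_{TV}(D_{s^*}, U_m) = O(\sqrt{\eta}) = o(1)$. The reduction itself proceeds as follows: given $T-1$ traces $z_1,\dots,z_{T-1}$ of an unknown $y$ drawn from a distribution close to uniform, the algorithm samples a completion $t$ from the conditional marginal of $c|_{>I}$ given $c|_{<I} = s^*$, independently simulates $\BDC_q$-traces $z_i^{s^*}, z_i^t$ of the known prefix $s^*$ and sampled completion $t$, forms the concatenated traces $w_i = z_i^{s^*} \cdot z_i \cdot z_i^t$, and calls $\Dec(w_1,\dots,w_{T-1})$, outputting the result restricted to $I$. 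Whenever $s^* \cdot y \cdot t \in C$, the concatenation is a genuine trace of this codeword, and the coded decoder recovers it with probability at least $1 - \delta$, hence recovers $y$.

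The main obstacle is controlling the distributional mismatch in the reduction: under the true distribution of a uniform codeword, $y = c|_I$ and $t = c|_{>I}$ are correlated conditional on $c|_{<I}$ (the constraint $s^* y t \in C$ ties them together), whereas the reduction samples $t$ independently of $y$. The TV-error between the true joint law and the simulated one is bounded, via Pinsker applied to the conditional distributions, by $\sqrt{I(c|_I; c|_{>I} \mid c|_{<I})/2}$. To make this $o(1)$, I would sharpen the window-selection step so that $I$ simultaneously satisfies $I(c|_I; c|_{>I} \mid c|_{<I}) = O(\eta)$; this can be arranged by applying the chain-rule averaging from the right as well (using $\sum_i H(c_i \mid c_{>i}) \ge (1-\varepsilon)n$) and intersecting the two sets of ``good'' positions, at the cost of only a constant-factor loss in $m$. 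Summing the $O(\sqrt{\eta})$ Pinsker error from $D_{s^*} \approx U_m$, the mutual-information error from the independent-$t$ sampling, and the coded decoder's own failure probability $\delta$, the overall failure of the resulting average-case algorithm stays strictly below $1/3$. Finally, instantiating with Chase's bound $T\rand_q(m) \ge \tilde\Omega_q(\log^{5/2} m)$ from~\cite{Chase:2019tb} yields Corollary~\ref{cor:lb}.
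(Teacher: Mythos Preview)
Your reduction has a genuine gap at the step where you claim to control $I(c|_I;\,c|_{>I}\mid c|_{<I})$ by ``applying the chain-rule averaging from the right as well and intersecting the two sets of good positions.'' The two one-sided bounds $H(c|_I\mid c|_{<I})\ge m-\eta$ and $H(c|_I\mid c|_{>I})\ge m-\eta$ do \emph{not} imply $H(c|_I\mid c|_{<I},c|_{>I})\ge m-O(\eta)$, which is what the mutual-information bound needs: conditioning on more can only decrease entropy, so the right-sided bound gives an upper, not a lower, bound on the doubly-conditioned quantity. Concretely, let $C$ be the concatenation of $n\varepsilon$ independent even-parity codes each of length $1/\varepsilon$ (rate exactly $1-\varepsilon$). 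Every position except the last bit of each block has $H(c_i\mid c_{<i})=1$, and every position except the first bit of each block has $H(c_i\mid c_{>i})=1$; since $m\approx 1/\sqrt{\varepsilon}\ll 1/\varepsilon$, there are many ``doubly good'' length-$m$ windows $I$ lying strictly inside a block. Yet for every such $I$ the block's parity constraint forces $I(c|_I;\,c|_{>I}\mid c|_{<I})=1$, so Pinsker gives only $d_{TV}\le\sqrt{1/2}$: your simulated input $s^*yt$ is a non-codeword with probability $1/2$, and the decoder's guarantee says nothing there. In fact one checks that in this code \emph{every} length-$m$ window either contains a bad position or has conditional mutual information at least $1$, so no refinement of the window choice rescues the argument.

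The paper sidesteps this dependency issue by not attempting to embed a single average-case instance at all. Instead it partitions every codeword into $n/m$ blocks of length $m$ and observes that a $(T,q,\delta)$-trace-reconstructible code is, a fortiori, a good code for the discrete \emph{memoryless} channel $\{0,1\}^m\to(\{0,1\}^{\le m})^T$ that sends each block to $T$ independent $\BDC_q$-traces of it (block-wise traces can be concatenated into full traces). The strong converse to Shannon's theorem then caps the rate by the channel capacity $\max_\lambda I(X_\lambda,Y_\lambda)/m$, and the technical core is showing $\max_\lambda I(X_\lambda,Y_\lambda)\le m-\Omega(1/(m\log m))$ whenever $T<T\rand_q(m)$: if $\lambda$ is far from uniform one bounds $H(X_\lambda)$ directly, while if $\lambda$ is close to uniform the assumed failure of average-case reconstruction lower-bounds $H(X_\lambda\mid Y_\lambda)$ via a reverse-Fano inequality. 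Because the memoryless-channel viewpoint treats all blocks symmetrically and never needs any single block to be independent of the rest of the codeword, the obstruction that breaks your argument never arises.
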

  Using Theorem~\ref{thm:lb}, we can adapt the state-of-the-art lower bound for average case trace reconstruction into a lower bound for coded trace reconstruction. Recently Chase~\cite{Chase:2019tb}, building off work of Holden and Lyons \cite{Holden:2018tx}, showed that $T_{q}\rand(m)\ge \tilde\Omega_q((\log m)^{5/2})$.\footnote{Here, $\tilde\Omega(\cdot)$ suppresses $\log\log$ factors. In fact, they show something stronger: even achieving success probability $\exp(m^{-0.15})$ requires that many traces.}
  Applying Theorem~\ref{thm:lb} to this result gives us the following lower bound.
  \begin{corollary}
    \label{cor:lb}
    For all $q,\delta\in(0,1)$ and $\varepsilon>0$ sufficiently small, there exists $T = \tilde\Omega_q((\log \frac{1}{\varepsilon})^{5/2})$ such that all rate $1-\varepsilon$ codes of sufficiently large length are \emph{not} $(T,q,\delta)$-trace reconstructible.
  \end{corollary}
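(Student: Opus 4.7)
The plan is to obtain Corollary~\ref{cor:lb} by directly chaining Theorem~\ref{thm:lb} with the average-case lower bound of Chase~\cite{Chase:2019tb}. Since Theorem~\ref{thm:lb} already reduces non-reconstructibility of a rate $1-\varepsilon$ code to non-reconstructibility of a random string of length $m = \tilde\Omega_q(1/\varepsilon^{1/2})$, essentially all that remains is a substitution.

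First I would fix $q,\delta\in(0,1)$ and a sufficiently small $\varepsilon>0$, and invoke Theorem~\ref{thm:lb} to produce an $m = \tilde\Omega_q(1/\varepsilon^{1/2})$ with the property that every rate $1-\varepsilon$ code of sufficiently large length fails to be $(T_q\rand(m)-1,q,\delta)$-trace reconstructible. Then I would plug this $m$ into Chase's bound $T_q\rand(m) \ge \tilde\Omega_q((\log m)^{5/2})$ to convert the statement into one depending only on $\varepsilon$.

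The only thing to check is the asymptotic calculation. Since $m = \tilde\Omega_q(1/\varepsilon^{1/2})$, we have $\log m = \tfrac{1}{2}\log(1/\varepsilon) - O(\log\log(1/\varepsilon))$, so for small enough $\varepsilon$ this is $\Theta(\log(1/\varepsilon))$, and consequently $(\log m)^{5/2} = \Theta\bigl((\log(1/\varepsilon))^{5/2}\bigr)$. The $\tilde\Omega$ notation suppresses only $\log\log$ factors, and $\log\log m = \Theta(\log\log(1/\varepsilon))$, so those factors transfer cleanly. Setting $T := T_q\rand(m) - 1 = \tilde\Omega_q\bigl((\log(1/\varepsilon))^{5/2}\bigr)$ yields the corollary.

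The main (mild) obstacle is simply ensuring that Chase's asymptotic bound is applicable at the specific $m$ produced by Theorem~\ref{thm:lb}: this requires $m$ to be large enough, which holds automatically because $m \to \infty$ as $\varepsilon \to 0$ and we are free to take $\varepsilon$ as small as needed. Beyond this bookkeeping, the argument is a one-line composition of the two cited results.
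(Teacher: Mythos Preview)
Your proposal is correct and matches the paper's approach exactly: the paper derives Corollary~\ref{cor:lb} as an immediate consequence of plugging Chase's bound $T_q\rand(m)\ge \tilde\Omega_q((\log m)^{5/2})$ into Theorem~\ref{thm:lb}, with the same $\log m = \Theta(\log(1/\varepsilon))$ substitution you describe.
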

  \begin{remark}
    \label{rem:lb-eps}
    Theorem~\ref{thm:lb} holds when $n\ge \tilde\Omega_q(\frac{1}{\varepsilon^2})$.
    Hence, similar to Remark~\ref{rem:ub-eps}, the lower bound of Theorem~\ref{thm:lb} holds for $\varepsilon$ approaching 0 with $n$, so long as $\varepsilon\ge \Omega_q(\frac{1}{n^{1/2}})$.
  \end{remark}

  \begin{remark}
    \label{rem:code-avg}
    Theorem~\ref{thm:main} and Theorem~\ref{thm:lb} together show that the optimal number of traces for a code of rate $1-\varepsilon$ is bounded above and below by the number of traces for average-case trace reconstruction of a string of length $\poly(1/\varepsilon)$.
    More precisely, there exist $m_1=\tilde\Omega_q(\frac{1}{\sqrt{\varepsilon}})$ and $m_2 = \tilde O_q(\frac{1}{\varepsilon})$ such that the optimal number of traces for rate $1-\varepsilon$ coded trace reconstruction with failure probability $\frac{1}{3}$ is between $T_{q}\rand(m_1)$ and $O_q(\log\frac{1}{\varepsilon}) \cdot T_{q}\rand(m_2)$.
    Hence any qualitative improvement to the upper or lower bounds for coded trace reconstruction implies an analogous improvement for average-case trace reconstruction and vice versa.
  \end{remark}
  
  \paragraph{Large alphabet upper and lower bounds.}
  So far, we have focused on codes for binary alphabets. By defining the deletion channel for strings over larger alphabets in the same way as the binary deletion channel, one can ask questions for coded trace reconstruction over larger alphabets. In this setting, our results are stronger in two ways. Firstly, we are able to show matching upper and lower bounds for large alphabet trace reconstruction. Secondly, these constructions are simpler and do not rely on average-case trace reconstruction results.
  \begin{theorem}
    \label{thm:large-ub}
    For all $q,\varepsilon\in(0,1)$ and infinitely many $n$, there exists a rate $1-\varepsilon$ code over an alphabet of size $2^{O(\frac{1}{\varepsilon}\log\frac{1}{\varepsilon})}$ that is $(T,q,\delta)$ trace reconstructible for $T=O(\log_{1/q} \frac{1}{\varepsilon})$ and $\delta = 2^{-\Omega(n)}$ and which is encodable in time $O(n)$ and decodable in time $O(nT)$.
  \end{theorem}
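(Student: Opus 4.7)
The plan is to combine a local position-labeling scheme with an outer erasure-correcting code. Each codeword symbol $c_i$ will be split into a short header $h_i$ encoding a position tag and a data payload $d_i$, and the $d_i$'s will form the codeword of an outer code that tolerates an $\varepsilon$-fraction of erasures. The headers are used by the decoder to align each trace to the codeword, and positions that are deleted in every trace are handed to the outer decoder as erasures. Concretely, I would take $L = \log|\Sigma| = \Theta((1/\varepsilon)\log(1/\varepsilon))$, allocate $\Theta(\log(1/\varepsilon))$ bits to a cyclic tag $h_i = i \bmod M$ with $M = \mathrm{poly}(1/\varepsilon)$, and leave the remaining $(1 - O(\varepsilon))L$ bits per symbol for the payload. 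For the outer code I would use an explicit linear-time encodable/decodable family of rate $1 - O(\varepsilon)$ correcting an $\varepsilon$-fraction of erasures with failure probability $2^{-\Omega(n)}$ (for example, an expander code applied bit-wise over the payload bits). After accounting for both the header overhead and the outer redundancy, a quick calculation confirms the overall rate is $1 - \varepsilon$.

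The decoder, given $T = \Theta(\log_{1/q}(1/\varepsilon))$ traces, performs a single left-to-right streaming scan of each trace, using the headers to assign each surviving symbol to a codeword position: the header modulo $M$ pins down the position up to a multiple of $M$, and greedy matching against the most recent previously assigned position resolves the remaining ambiguity. A codeword position is declared present (with its payload recovered) as soon as some trace aligns a symbol to it, and is declared an erasure otherwise; the outer decoder then fills in the erasures. Each position is deleted in all $T$ traces independently with probability $q^T = O(\varepsilon)$, so by a Chernoff bound at most $\varepsilon n / 2$ positions are erased with probability $1 - 2^{-\Omega(n)}$, well within the outer code's correction radius. Encoding is $O(n)$ from the explicit header rule plus the outer encoder, and decoding is $O(nT)$ from the $T$ linear scans plus one outer decoding.

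The main obstacle will be the alignment analysis. A single trace can misalign starting at some position $i$ whenever a window of $\ge M$ consecutive positions near $i$ is entirely deleted in that trace, an event of probability at most $q^M$ per starting position; such a misalignment can, in principle, cascade through the rest of that trace. The key idea I would rely on is that cross-referencing against the other $T-1$ traces either repairs the alignment — since the event that $M$ consecutive positions are deleted in \emph{every} trace has probability only $q^{TM}$ per starting position — or else downgrades the cascading error into a localized erasure that the outer code absorbs. Turning this into a quantitative bound with an effective per-position failure rate small enough that $n$ times it is $2^{-\Omega(n)}$, while simultaneously respecting the rate budget $1 - \varepsilon$ and the alphabet budget $2^{O((1/\varepsilon)\log(1/\varepsilon))}$, will require a careful joint choice of the parameters $M$ and $T$ and is where the bulk of the technical work lies.
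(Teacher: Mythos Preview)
Your overall architecture---attach a short position label to each symbol of an outer error-correcting code, use the labels to align each trace, and hand positions that never survived (or were mis-indexed) to the outer decoder---is exactly the paper's approach. The paper also pairs each outer-code symbol $c_i$ with a label $s_i$ over a $\mathrm{poly}(1/\varepsilon)$-size alphabet and takes the final alphabet to be the product, matching your header/payload split and yielding the same alphabet size and rate calculation.

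The gap is in your choice of label. Cyclic tags $h_i = i \bmod M$ with greedy left-to-right matching fail once a window of $M$ consecutive positions is deleted in a trace: the next surviving symbol carries the same tag as the first deleted position, greedy matching assigns it there, and every subsequent symbol in that trace is off by exactly $M$. Since $M$ is a constant independent of $n$, the expected number of such windows in one trace is $\Theta(nq^M)\to\infty$, so with high probability \emph{every} trace is misaligned on all but a vanishing prefix. Your proposed fix, cross-referencing traces, has no concrete mechanism: different traces retain different positions, so there is no direct way to use a correctly aligned trace $t_2$ to repair a misaligned trace $t_1$, and the assertion that a cascading error can be ``downgraded to a localized erasure'' is precisely the hard part you have not supplied. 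Majority-voting on payloads does not help either, since past position $\Theta(q^{-M})$ most traces are misaligned, each by its own random multiple of $M$.

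The paper sidesteps this entirely by taking $s_1,\dots,s_n$ to be an $\eta$-synchronization string (Haeupler--Shahrasbi) rather than a cyclic tag. Synchronization strings come with a provable, per-trace, error-free, linear-time deletion-only indexing algorithm (Theorem~\ref{thm:ss-3} in the paper): given any trace with at most a $\delta$ fraction of deletions, the algorithm labels each surviving symbol with either its true index or $\perp$, and at most $\frac{\eta}{1-\eta}\cdot n\delta$ symbols receive $\perp$. There is no cascading and no cross-trace coordination. Choosing $\eta=\Theta(\varepsilon^3/T)$, the total number of positions either deleted in every trace or assigned $\perp$ in some trace is at most $(2q^T+2T\eta)n<\mathrm{poly}(\varepsilon)\cdot n$, well within the outer code's radius. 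Replacing your cyclic tags with a synchronization string turns your alignment analysis from the ``bulk of the technical work'' into a one-line citation.
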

  And as the following lower bound shows, this is tight in terms of the number of traces.
  \begin{theorem}
    \label{thm:large-lb}
    Any code (over any alphabet) of rate $1-\eps$ is not $(\floor{\log_{1/q}\frac{1}{\varepsilon}}, q,o(1))$ trace reconstructible.
  \end{theorem}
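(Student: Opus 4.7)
The plan is to exploit \emph{obliterated} positions---coordinates of the codeword $x\in C$ that happen to be deleted in every one of the $T$ traces. Let $D_1,\dots,D_T\subseteq [n]$ be the deletion patterns (each coordinate included independently with probability $q$), and let $O := D_1\cap \cdots \cap D_T$. Since $T = \lfloor \log_{1/q}(1/\varepsilon)\rfloor \leq \log_{1/q}(1/\varepsilon)$, we have $q^T \geq \varepsilon$, and $|O| \sim \mathrm{Bin}(n,q^T)$ has mean $q^T n \geq \varepsilon n$.

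The key step is an indistinguishability claim: conditional on $O$, the joint distribution of the traces $(z_1,\dots,z_T)$ depends on $x$ only through its restriction $x_{[n]\setminus O}$. Indeed, each trace $z_j$ is the subsequence of $x$ on $[n]\setminus D_j \subseteq [n]\setminus O$, and the conditional law of $(D_1,\dots,D_T)$ given $O$ is independent of $x$. In particular, no decoder can distinguish two codewords of $C$ that agree on $[n]\setminus O$.

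Now fix any decoder $\Dec$ and draw $c$ uniformly from $C$. Partitioning $C$ by the projection $\pi:c\mapsto c_{[n]\setminus O}$, the conditional distribution of $\Dec$'s output given $c$ and $O$ depends on $c$ within a projection class $C_y := \pi^{-1}(y)$ only through $y$, so $\sum_{c\in C_y}\Pr[\Dec = c\mid c,O]\leq \sum_{c'\in C}\Pr[\Dec=c'\mid c^*, O]=1$ for any representative $c^*\in C_y$. Summing over the at most $|\Sigma|^{n-|O|}$ nonempty classes and using $|C|\geq |\Sigma|^{(1-\varepsilon)n}$,
\[
\Pr[\Dec(z_1,\dots,z_T) = c \mid O] \;\leq\; \frac{|\Sigma|^{n-|O|}}{|C|} \;\leq\; |\Sigma|^{\varepsilon n - |O|}.
\]
Whenever $|O|\geq \varepsilon n + 1$ this is at most $|\Sigma|^{-1}\leq 1/2$, so the average, and hence maximum, decoding error over $c\in C$ is at least $\tfrac{1}{2}\Pr[|O|\geq \varepsilon n + 1]$.

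The last remaining step is to show $\Pr[|O|\geq \varepsilon n + 1]$ is bounded away from $0$. In the generic case $q^T > \varepsilon$, Chernoff gives $1-o(1)$. The one mildly delicate case is the boundary $q^T = \varepsilon$, which occurs precisely when $\log_{1/q}(1/\varepsilon)$ is an integer: then $E|O|=\varepsilon n$ exactly and one needs a central limit / Berry--Esseen estimate to get $\Pr[|O|\geq \varepsilon n + 1]\geq 1/2-o_n(1)$. This boundary estimate is the only part of the argument requiring care. Either way, the average error is $\Omega(1)$, so some codeword is misdecoded with probability $\Omega(1)$, ruling out $(T,q,o(1))$-reconstructibility.
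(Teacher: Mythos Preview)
Your proof is correct. It shares the same core observation as the paper's proof: the positions deleted in all $T$ traces (your set $O$) carry no information to the decoder, and each position lands in $O$ independently with probability $q^T \geq \varepsilon$. The paper packages this as a reduction to the erasure channel with erasure probability $q^T$: it shows how to simulate $T$ independent deletion-channel outputs from a single erasure-channel output, and then invokes the capacity $1-q^T$ of that channel. You instead unpack the impossibility side directly: conditional on $O$ the traces depend only on $c_{[n]\setminus O}$, so at most $|\Sigma|^{n-|O|}$ codewords are distinguishable, and a pigeonhole over projection classes bounds the average success probability. The advantage of your route is that it is self-contained (no black-box appeal to channel capacity) and, notably, it handles the boundary case $q^T=\varepsilon$ explicitly via a CLT/Berry--Esseen estimate; the paper's one-line capacity argument as written yields only the strict inequality $T<\log_{1/q}(1/\varepsilon)$. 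The paper's route is shorter and makes the conceptual link to erasure-channel capacity transparent.
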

  We do not know if the dependence on $\varepsilon$ for the alphabet size in Theorem~\ref{thm:large-ub} is optimal. We leave understanding the trade-off between alphabet size and number of traces as an open question for future work.

\subsection{Techniques}

In this section we describe our constructions. We first combine synchronization strings \cite{HS17} and erasure codes \cite{GI05} to give our large alphabet construction (Theorem~\ref{thm:large-ub}), and match this construction with a simple lower bound (Theorem~\ref{thm:large-lb}).

Extending these ideas to our binary code construction (Theorem~\ref{thm:main}) requires more work, and we introduce a novel technique for binary code concatenation, turning our large alphabet code from Theorem~\ref{thm:large-ub} into a binary code. This concatenation also leverages codes for the binary deletion channel (e.g. \cite{GuruswamiL19}), and bounds for average-case trace reconstruction \cite{HPP18}.

We finish this section by describing our lower bound for coded trace reconstruction for the binary alphabet (Theorem~\ref{thm:lb}).
Trace reconstruction lower bounds usually find a hard pair of strings and prove that it takes many traces to distinguish these strings.
Coded trace reconstruction can simply avoid these hard pairs of strings, which makes applying prior results difficult.
Using techniques from information theory, we are able to transfer average-case trace reconstruction lower bounds to the coded setting.

\paragraph*{Large alphabet construction and lower bound.}
As a warm-up, first observe that any binary code $C\subset \{0,1\}^n$ can be turned into a code $C'$ over an alphabet of size $2n$ by mapping each codeword $(r_1,\dots,r_n)$ to a codeword $((r_1,1),(r_2,2),\dots,(r_n,n))\in(\{0,1\}\times [n])^n$.
This code has very low rate, but has the useful property that the deletion channel is essentially turned into an erasure channel: from a received string, we can always recover the indices of the received symbols, and thus the corresponding $r_i$.
If $C$ is a code of rate $1-\varepsilon$ tolerating a $\delta = \poly(\varepsilon)$ fraction of erasures, $C'$ is recoverable from $O(\log_{1/q} \frac{1}{\eps})$ traces: with high probability at most $q^T < \delta$ fraction of symbols are never received, producing less than $\delta n$ erasures, which can be corrected.

Our construction for large alphabets (Theorem~\ref{thm:large-ub}) uses the above intuition, but relies on synchronization strings to avoid ruining the rate of the resulting code. Instead of specifying the exact position of each symbol, we include a symbol of a synchronization string \cite{HS17} from a much smaller alphabet of size $\poly\left(\frac{1}{\varepsilon}\right)$.
We take our starting code $C$ to be over a large alphabet of size $2^{O(\frac{1}{\varepsilon}\log\frac{1}{\varepsilon})}$ and tolerate a $\delta=\poly (\varepsilon)$ fraction of erasures \cite{GI05}. Increasing the size of the alphabet beyond that of \cite{GI05} helps ensure the correct rate when combining with the synchronization string.
At the cost of a few more erasures, we can convert the outputs on the deletion channel into outputs with erasures and correct the erasures.

For the lower bound (Theorem~\ref{thm:large-lb}), any code of rate $1-\eps$ recovering from $T$ traces must also be able to recover from the erasure channel with erasure probability $q^T$, which has capacity at most $1 - q^T$. Therefore, $1-\eps < 1-q^T$ so $\log_{1/q} \frac{1}{\eps}$ traces are necessary for the erasure channel, and thus the deletion channel.

\paragraph*{Binary alphabet construction.}  Our construction for binary alphabets (Theorem~\ref{thm:main}) uses additional ideas beyond those in the large alphabet construction. 
Again, we use a high rate error correcting code with codewords $(r_1,\dots,r_{n_{out}}) \in C$ and a synchronization string $(s_1,\dots,s_{n_{out}})$.
Naively, one might ``concatenate'' the large alphabet construction with a high rate code of length $n_{in}=O(\frac{1}{\varepsilon}\log \frac{1}{\varepsilon})$ recoverable from a $O_\varepsilon(1)$ number of traces (which exists by \cite{HPP18}), so that each pair $(r_i,s_i)$ is encoded in a binary string $a_i$ of length $n_{in}$, and the final codeword is the concatenation $a_1||\cdots||a_{n_{out}}$.
Then, to recover the message, we first use the $T$ traces of the codeword $a_1||\cdots||a_{n_{out}}$ to recover $T$ traces of each $a_i$.
As in \cite{Cheraghchi:2019vd}, we can make sure we know where the traces of the $a_i$ start and finish by adding buffers of long runs on the ends of each $a_i$.
From the traces of each $a_i$, we run the inner trace reconstruction to recover each $a_i$, and thus recover the pair $(r_i,s_i)$.
We then run the outer error correction to fix any incorrectly decoded $r_i$'s.

This construction does not work for a subtle reason.
Because the length of each $a_i$ is a constant, we expect a (very small) constant fraction of the $a_i$'s buffers to be deleted, and we also expect a (very small) constant fraction of $a_i$'s to have deletions applied so that the interior of the $a_i$ looks like a buffer (we call this a ``spurious'' buffer).
From the $T$ traces of the codeword, we try to recover $T$ traces of each of the $a_i$'s using the buffers, but these $T$ traces, supposedly of $a_i$, might contain some traces of, e.g., $a_{i-5}$ or $a_{i+3}$.
Therefore, we need to know the synchronization symbols $s_i$ to determine which substrings of each of the $T$ traces belong to which $a_i$.
Thus, recovering the synchronization symbols must happen \emph{before} running trace reconstruction on the $a_i$'s.
However, the synchronization symbols $s_i$ are encoded in the $a_i$, so in this construction the synchronization symbols cannot be recovered until \emph{after} the trace reconstruction.

To avoid this issue, our construction crucially encodes the content symbol $r_i$ and the synchronization symbol $s_i$ separately. To our knowledge, this kind of concatenation has not appeared in other constructions of deletion codes.
Each content symbol $r_i$ is encoded using a high rate code of length $n_R=\Theta(\frac{1}{\varepsilon}\log\frac{1}{\varepsilon})$ obtained from bounds on average-case trace reconstruction. 
Each synchronization symbol is encoded in a code of length $n_S=\Theta(\log\frac{1}{\varepsilon})$ decodable in, crucially, 1 trace from the binary deletion channel. We can afford a very low rate code for the synchronization symbols because they are over a much smaller alphabet than the content symbols. Furthermore, we structure the encoded content symbols and encoded synchronization symbols so that they are not easily confused with each other. 

For the final decoding algorithm, we first recover the synchronization symbols within each trace. We then use the synchronization strings to determine the parts of each trace that corresponding to traces of a particular $a_i$. We then use these traces of $a_i$ in trace reconstruction to recover each content symbol $r_i$.
Finally, we use the error correction of the outer code $C$ to fix any mistakes in this process.

\paragraph*{Binary alphabet lower bound.}
Our binary lower bound (Theorem~\ref{thm:lb}) reduces coded trace reconstruction to constructing a code over an appropriately chosen memoryless channel, i.e. a channel where each alphabet symbol is corrupted independently or the other symbols.
In particular, we partition the input string $x \in \{0, 1\}^n$ into $n/m$ substrings of length $m \approx 1/\sqrt{\eps}$.
We then upper bound the rate of a code $C\subset (\{0,1\}^{m})^{n/m}$ over alphabet $\{0,1\}^m$ recovering a sequence $x$ of length $m$ substrings from $T=T\rand_{q}(m)$ independent traces of each of the $n/m$ substrings.
This is easier than recovering $x$ from $T$ independent traces of itself, so any rate upper bound for the code for $n/m$ substrings yields a rate upper bound for the original coded trace reconstruction problem.

Now, we can view the problem as coding over a discrete memoryless channel: we view our binary code as a code of length $n/m$ over the input alphabet $\mathcal{X}=\{0, 1\}^m$ and the channel produces outputs in $\mathcal{Y}=(\{0,1\}^*)^T$, corresponding to $T$ independent traces of the elements of $\mathcal{X}$.
By Shannon's noisy channel coding theorem~\cite{Shannon48}, the capacity of this channel equals the maximum, over distributions $\lambda$ on $\mathcal{X}$, of the mutual information $I(X_\lambda,Y_\lambda)$, where $X_\lambda\in \mathcal{X}$ is sampled from $\lambda$ and $Y_\lambda\in \mathcal{Y}$ is a tuple of $T$ strings each sampled as an independent trace of $X_\lambda$. 
Thus, to upper bound the rate of $C$, it suffices to upper bound the mutual information $I(X_\lambda,Y_\lambda)$ for all distributions $\lambda$ on $\mathcal{X}$.
If the distribution $\lambda$ is ``far'' from the uniform distribution, we can upper bound the mutual information by the entropy of $X_\lambda\sim \lambda$. Otherwise, if $\lambda$ is ``close'' to the uniform distribution, the mutual information is limited by the performance of average-case trace reconstruction. In either case, we get an upper bound on the mutual information which implies an upper bound on the rate of a code correctable from $T$ traces.

\subsection{Paper organization}

In Section~\ref{sec:prelim}, we define a few building blocks for our work. These include synchronization strings, codes for the binary deletion channel, and high rate error correcting codes. 
In Section~\ref{sec:large}, we present the proofs of our coded trace reconstruction results over large alphabets in Theorems~\ref{thm:large-ub} and Theorem~\ref{thm:large-lb}.
These proofs are simpler and serve as warm-ups for our results over binary alphabets, which require additional ideas.
In Section~\ref{sec:bin-sketch}, we sketch the proof of Theorem~\ref{thm:main}, showing how to convert upper bounds for average-case trace reconstruction into upper bounds for coded trace reconstruction.
In the remainder of Section~\ref{sec:binary-ub}, we formally prove Theorem~\ref{thm:main}.
In Section~\ref{sec:binary-lb}, we prove Theorem~\ref{thm:lb}, giving a black-block reduction from lower bounds for average-case trace reconstruction to lower bounds for coded trace reconstruction. Appendix~\ref{app:missing} fills in various technical details omitted from the main body.

\section{Preliminaries}
\label{sec:prelim}

\subsection{Basics}

All logs and exps are base 2 unless otherwise specified.
For an alphabet $\Sigma$, we let $\Sigma^*$ denote the set of strings over $\Sigma$ of any length.
For strings $w,w'$, we let $ww'$ denote the concatenation of strings $w$ and $w'$.
We may also denote the concatenation by $w||w'$ for clarity.
For a string $w$ and integer $i$, let $w^i$ denote the string $ww\cdots w$ with $w$ repeated $i$ times. A \emph{substring} is a sequence of consecutive characters in a string.
A \emph{run} is a maximal substring of a string all of whose bits are the same.
A \emph{partial function} $f:A\nrightarrow B$ is a function from a subset of $A$ to $B$.
For $x\in(0,1)$, let $H(x)= -x\log x - (1-x)\log(1-x)$ denote the binary entropy function.

A \emph{code} $C$ of length $n$ over an alphabet $\Sigma$ is a subset of $\Sigma^n$.
The elements of $C$ are called \emph{codewords}, and $n$ is called the \emph{length} of the code.
If $|\Sigma|=2$, we say $C$ is a binary code.
The \emph{rate} of a code $C$ is defined to be $\frac{\log|C|}{n\log|\Sigma|}$.
A code may have an associated \emph{message set} $\mathcal{M}$ and \emph{encoding function} $\Enc:\mathcal{M}\to C$, which is an injective map from messages to codewords.
By default, $\mathcal{M}=\{1,\dots,|C|\}$.
A code is \emph{decodable under the $\BDC_q$ with failure probability $\delta$} if it is $(1,q,\delta)$ trace reconstructible.
To \emph{construct} a code means to produce a description of its encoding and decoding functions.
Given two codes $C_1\subset \Sigma_1^{n_1}$ and $C_2\subset \Sigma_2^{n_2}$ with $|\Sigma_1|\le |C_2|$, a \emph{concatenation} of $C_1$ and $C_2$ is a code $C\subset \Sigma_2^{n_1n_2}$ whose codewords are $\Enc_2(c_1)||\dots||\Enc_2(c_{n_1})$ where $c_1\cdots c_{n_1}\in C_1$, and where $\Enc_2:\Sigma_1\to C_2$ is a fixed injective map.

We use the following forms of the Chernoff bound (e.g.,~\cite{DBLP:books/daglib/0025902})
\begin{lemma}[Chernoff bound -- discrete]
  Let $X_1,\dots,X_n$ be independent and identically distributed random variables with mean $\mu$ supported on $\{0,1\}$ 
  Then, for $\delta\ge 0$, 
  \begin{align}
    \label{eq:cher-1}
    \Pr[X_1+\cdots+X_n \le (1-\delta)\cdot n\mu] &\le e^{-\frac{\delta^2}{2}\cdot n\mu}  \\
    \label{eq:cher-2}
    \Pr[X_1+\cdots+X_n \ge (1+\delta)\cdot n\mu] &\le e^{-\frac{\delta^2}{2+\delta}\cdot n\mu}   .
  \end{align}
\end{lemma}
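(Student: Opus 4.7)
The plan is to apply the standard Chernoff (exponential moment) method and then reduce the two tail bounds to a pair of elementary numerical inequalities. Let $S = X_1 + \cdots + X_n$, and fix $\delta \ge 0$.

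First I would handle the upper tail. For any $t > 0$, Markov's inequality applied to $e^{tS}$ gives
\begin{align}
\Pr[S \ge (1+\delta)n\mu] \le e^{-t(1+\delta)n\mu}\, \E[e^{tS}].
\end{align}
By independence, $\E[e^{tS}] = \prod_i \E[e^{tX_i}]$, and since each $X_i \in \{0,1\}$ has mean $\mu$, we have $\E[e^{tX_i}] = 1 + \mu(e^t - 1) \le \exp(\mu(e^t - 1))$, using $1 + x \le e^x$. Thus $\E[e^{tS}] \le \exp(n\mu(e^t - 1))$. Optimizing the exponent over $t > 0$ by choosing $t = \ln(1+\delta)$ yields
\begin{align}
\Pr[S \ge (1+\delta)n\mu] \le \left(\frac{e^\delta}{(1+\delta)^{1+\delta}}\right)^{n\mu}.
\end{align}
To get the stated form, I would then verify the elementary inequality $(1+\delta)\ln(1+\delta) - \delta \ge \frac{\delta^2}{2+\delta}$ for all $\delta \ge 0$. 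This can be checked by setting $f(\delta) = (1+\delta)\ln(1+\delta) - \delta - \frac{\delta^2}{2+\delta}$, observing $f(0) = 0$, and showing $f'(\delta) \ge 0$ (which reduces, after simplification, to a polynomial inequality that is transparent to verify).

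For the lower tail I would run the same argument with $t > 0$ applied to $e^{-tS}$, obtaining $\Pr[S \le (1-\delta)n\mu] \le \exp(-t(1-\delta)n\mu) \cdot \exp(n\mu(e^{-t} - 1))$ (for $\delta \in [0,1]$, with the bound being trivial for $\delta \ge 1$ since $S \ge 0$). Setting $t = -\ln(1-\delta) > 0$ produces $\left(\frac{e^{-\delta}}{(1-\delta)^{1-\delta}}\right)^{n\mu}$, and the desired bound $e^{-\delta^2 n\mu/2}$ follows from the one-variable inequality $(1-\delta)\ln(1-\delta) + \delta \ge \delta^2/2$ on $[0,1)$, which I would verify by Taylor expansion: $(1-\delta)\ln(1-\delta) + \delta = \sum_{k \ge 2} \frac{\delta^k}{k(k-1)} \ge \frac{\delta^2}{2}$.

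The only place requiring any real work is establishing these two scalar inequalities bounding $(1\pm\delta)\ln(1\pm\delta)$; everything else is the mechanical Chernoff template. Since this is a textbook statement, in practice I would either cite a standard reference (e.g., Mitzenmacher--Upfal) or write out the above argument in a few lines; no ideas specific to trace reconstruction are needed.
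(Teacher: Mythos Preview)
The paper does not prove this lemma at all: it is stated as a standard Chernoff bound and attributed to a textbook reference (e.g., Mitzenmacher--Upfal), with no argument given. Your proposal supplies exactly the textbook exponential-moment proof, which is correct and is precisely what any such reference would contain, so in that sense the approaches agree.

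One small slip worth flagging: in your lower-tail step, the Markov bound on $e^{-tS}$ gives
\[
\Pr[S \le (1-\delta)n\mu] \le e^{+t(1-\delta)n\mu}\,\E[e^{-tS}],
\]
not $e^{-t(1-\delta)n\mu}$ as written. With the correct sign and $t = -\ln(1-\delta)$ you do arrive at $\bigl(e^{-\delta}/(1-\delta)^{1-\delta}\bigr)^{n\mu}$ as you state, so this is only a typo in the intermediate expression and does not affect the argument.
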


\begin{lemma}[Chernoff bound -- continuous]
  Let $X_1,\dots,X_n$ be independent and identically distributed random variables with mean $\mu$ supported on $[0,1]$ 
  Then, for $\delta\ge 0$, 
  \begin{align}
    \label{eq:cher-4}
    \Pr[X_1+\cdots+X_n \ge (1+\delta)\cdot n\mu] &\le e^{-2\delta^2\cdot \mu^2n}.
  \end{align}
\end{lemma}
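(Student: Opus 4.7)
The plan is to recognize this as Hoeffding's inequality for bounded random variables, and to prove it via the standard exponential moment (Chernoff--Markov) method. The $[0,1]$ support of the $X_i$ gives an MGF bound purely in terms of the range, which (unlike the Bernoulli case in the previous lemma) only allows a $\mu^2$ dependence in the exponent rather than $\mu$, and this is exactly what the claimed bound reflects.

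First, I would apply Markov's inequality to $\exp\bigl(s\sum_i X_i\bigr)$ for a parameter $s>0$ to be chosen later, and use independence to factor the MGF as $\prod_i \E[e^{sX_i}]$. Writing $Y_i = X_i - \mu \in [-\mu,\,1-\mu]$, the event becomes $\sum_i Y_i \ge \delta n \mu$, and it suffices to control $\E[e^{sY_i}]$.

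The main technical input is Hoeffding's lemma: for any mean-zero random variable supported on an interval of length $\ell$, the MGF is at most $\exp(s^2\ell^2/8)$. Here $\ell = 1$, so $\E[e^{sY_i}] \le e^{s^2/8}$, yielding
\[
\Pr\!\bigl[X_1+\cdots+X_n \ge (1+\delta)n\mu\bigr] \;\le\; e^{-s\delta n\mu}\cdot e^{ns^2/8}.
\]
Choosing $s = 4\delta\mu$ to minimize the exponent gives $-4\delta^2 n\mu^2 + 2\delta^2 n\mu^2 = -2\delta^2\mu^2 n$, which matches \eqref{eq:cher-4} exactly.

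There is no real obstacle: the only nontrivial step is invoking Hoeffding's lemma, whose standard proof uses convexity of $e^{sx}$ on $[-\mu,1-\mu]$ together with a second-order Taylor expansion of the log-MGF. I would either cite Hoeffding's lemma directly or, if a self-contained argument is desired, include the two-line convexity estimate followed by the optimization above.
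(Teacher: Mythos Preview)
Your proof is correct and follows the standard Hoeffding route. Note, however, that the paper does not actually prove this lemma: it is stated in the preliminaries as a standard Chernoff/Hoeffding bound with a citation to a textbook, so there is no ``paper's own proof'' to compare against. Your derivation via Hoeffding's lemma and optimization of the exponential parameter is exactly the textbook argument one would expect behind that citation.
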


\subsection{Short codes from average-case trace reconstruction}\label{sec:avg-to-coded}

In this section, we show a connection between short codes for trace reconstruction and average-case trace reconstruction. We use this connection to construct short, high-rate, trace reconstructible codes, which are building blocks in our main result.

The current state of the art for the optimal number of traces for average-case trace reconstruction is due to Holden, Pemantle, and Peres \cite{HPP18}, who show the following bound on $T\rand_{q,\beta}(n)$.
  \begin{theorem}[\cite{HPP18}]
  \label{thm:hpp}\label{thm:avg-tr}
  For all $q\in(0,1)$ and $\beta\ge 1$, we have $T\rand_{q,\beta}(n) \le \exp(O_{q,\beta}(\log^{1/3}n))$.
  \end{theorem}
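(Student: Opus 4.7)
The plan is to combine two ingredients: a \emph{windowing} strategy that exploits the randomness of $x$ to slice the reconstruction problem into independent subproblems of length $\ell=\Theta(\log n)$, and the worst-case trace reconstruction algorithm of De--O'Donnell--Servedio and Nazarov--Peres, which uses $\exp(O(\ell^{1/3}))$ traces on arbitrary strings of length $\ell$. Setting $\ell=\Theta(\log n)$ would then give $\exp(O(\log^{1/3} n))$ traces overall, matching the claimed bound. The factor of $\beta$ in the exponent gets absorbed by boosting the per-window success probability from constant to $1-1/\poly(n)$, which only multiplies the number of traces by $O(\log n)=o(\exp(\log^{1/3} n))$.

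First, I would fix an ``anchor pattern'' $w\in\{0,1\}^{k}$ with $k=\Theta(\log\log n)$ chosen so that in a uniformly random string $x$ of length $n$, occurrences of $w$ appear at roughly uniformly spaced positions of density $2^{-k}$. By a Chernoff argument applied to a Bernstein/martingale concentration on the positions of $w$-occurrences, I would show that with probability at least $1-n^{-\beta-1}$, consecutive $w$-occurrences in $x$ are at distance in $[\frac{\ell}{2},2\ell]$ apart for $\ell=\Theta(2^k)$. These anchors partition $x$ into windows $B_1,\dots,B_M$ of length $\Theta(\ell)$ each, where $M\le 2n/\ell$.

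Next I would show that in \emph{each individual trace} $y$ of $x$, the anchors can be identified with their correct counterparts in $x$. The idea is that after BDC$_q$, an occurrence of $w$ in $x$ survives as $w$ in $y$ with probability $(1-q)^k=2^{-\Theta(k)}$, which is still large enough that we expect $\Omega(n/\ell)$ surviving anchors per trace. More importantly, using standard concentration on the deletion channel (distances between surviving anchors concentrate around $(1-q)\cdot\text{original distance}$), I would argue that a dynamic-programming alignment between anchor occurrences in $y$ and in $x$ succeeds with probability $1-n^{-\Omega(1)}$, correctly partitioning $y$ into substrings $y^{(1)},\dots,y^{(M)}$ where $y^{(i)}$ is a valid trace of $B_i$ (perhaps with a few boundary errors absorbed into an error margin). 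Repeating this across $T$ traces yields $T$ independent traces of each $B_i$, conditioned on successful alignment.

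Finally, I would apply the worst-case trace reconstruction algorithm of \cite{De:2017jj,Nazarov:2017vz} to each window $B_i$ using $T=\exp(C_q\cdot \ell^{1/3})\cdot \Theta(\beta\log n)$ traces; with $\ell=\Theta(\log n)$ this is $\exp(O_{q,\beta}(\log^{1/3} n))$. Each per-window reconstruction fails with probability at most $n^{-\beta-2}$, and a union bound over the $M\le n$ windows plus the alignment step gives total failure probability at most $\tfrac{1}{3n^\beta}$. The \textbf{main obstacle} is the alignment step: one must argue that the anchor positions inside traces reliably match the anchor positions in $x$ even though deletions shift indices, and one must handle the small but nonzero probability that the anchor pattern $w$ ``accidentally'' appears inside a block $B_i$ or is spuriously created by deletions between blocks. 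The cleanest way to handle this is to require anchors to be followed by a distinguishing ``guard'' pattern (e.g.\ a long run of $0$s followed by $1$s), so that random windows almost never contain a valid anchor+guard, while genuine anchors survive the channel with probability bounded below by a constant independent of $n$.
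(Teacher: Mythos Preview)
The paper does not prove this theorem: it is quoted from \cite{HPP18} and used as a black box, with only the remark that the failure probability $n^{-\beta}$ for general $\beta$ follows either directly from their argument or by a standard amplification trick. So there is no proof in the paper to compare your sketch against---only the original \cite{HPP18} argument.

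Your outline does match the high-level shape of what \cite{HPP18} do (localize to windows of length $\ell=\Theta(\log n)$, then invoke the worst-case $\exp(O(\ell^{1/3}))$ bound on each), but the alignment step as you have written it has a real gap. You propose a \emph{fixed} anchor pattern $w$ of length $k=\Theta(\log\log n)$ and a dynamic-programming match of $w$-occurrences in the trace $y$ against $w$-occurrences in $x$; but $x$ is the unknown string, so this is circular. Even in a left-to-right scheme where the relevant prefix of $x$ has already been recovered, the quantitative picture for fixed short anchors does not close: a length-$k$ pattern survives the $\BDC_q$ verbatim with probability $(1-q)^{k}=(\log n)^{-\Theta_q(1)}$, so only $n/(\log n)^{1+\Theta_q(1)}$ of the $n/\log n$ genuine anchors survive in any given trace, whereas (since each trace bit is still marginally uniform) copies of $w$ appear in the trace at density $2^{-k}=1/\log n$, i.e.\ $\Theta(n/\log n)$ in total---so spurious anchors outnumber genuine surviving ones by a $(\log n)^{\Theta_q(1)}$ factor. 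Your guard-pattern fix hits the same wall: to be rare across $n$ positions the guard must have length $\Omega(\log n)$, and then it survives the channel only with probability $n^{-\Omega_q(1)}$. What \cite{HPP18} actually do is considerably more delicate: they reconstruct left to right and use the already-recovered, \emph{string-specific} $\Theta(\log n)$-window as the alignment target, located in each trace by a statistical test rather than by demanding verbatim survival; pushing this alignment down to accuracy $O(\log n)$ is exactly the technical core of that paper, and your sketch does not supply it.
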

  Note that the paper \cite{HPP18} only states Theorem~\ref{thm:hpp} for failure probability $1/n$, but their proof works in the same way for any polynomial failure probability $1/n^{\beta}$.
  There is also a slick way to amplify the failure probability in average-case trace reconstruction: with polynomially more traces, we can turn failure probability $1/n$ into $1/n^{\beta}$, by appending random bits to each trace and running trace reconstruction for $n'=n^\beta$ (see e.g., Theorem 3.2 of \cite{BCSS2019b}).

We now have the following two simple observations that results for average-case trace reconstruction show the existence of codes for coded trace reconstruction and vice versa.
\begin{claim}
  If there exists a code of size $2^n(1-o(1))$ that is $(T,q,o(1))$ trace reconstructible, then average case trace reconstruction can be done in $T$ traces with failure probability $o(1)$.
\end{claim}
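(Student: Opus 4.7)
The plan is to use the code's decoder $\Dec$ directly as an average-case trace reconstruction algorithm: given $T$ traces $z_1,\dots,z_T$ of a uniformly random string $x\in\{0,1\}^n$, simply output $\Dec(z_1,\dots,z_T)$. The whole argument is a one-line union bound, so the only content is setting it up correctly.

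First, I would observe that since $|C|\ge 2^n(1-o(1))$, a uniformly random $x\in\{0,1\}^n$ satisfies $x\in C$ with probability $1-o(1)$; call the complementary event $E_1$, with $\Pr[E_1]=o(1)$. Next, the hypothesis that $C$ is $(T,q,\delta)$ trace reconstructible for some $\delta=o(1)$ says that for \emph{every} fixed $c\in C$,
\begin{align}
\Pr_{z_1,\dots,z_T\sim \BDC_q(c)}[\Dec(z_1,\dots,z_T)\ne c] \le \delta = o(1).
\end{align}
In particular, this bound survives averaging over any distribution on $C$, so conditioned on $x\in C$ (where $x$ is uniform on $C$), the probability that $\Dec$ fails is at most $\delta$; call this event $E_2$.

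Finally, the overall failure probability of the induced average-case algorithm is at most $\Pr[E_1]+\Pr[E_2\mid \neg E_1]\le o(1)+o(1)=o(1)$, as desired. There is no real obstacle here: the worst-case-over-$C$ guarantee of the code transfers immediately to an average-over-$\{0,1\}^n$ guarantee precisely because $C$ covers all but a $o(1)$ fraction of $\{0,1\}^n$. The converse direction (showing that average-case trace reconstruction yields a code of size $2^n(1-o(1))$) would be slightly more involved, as it would require extracting a large subset of ``good'' strings on which the average-case algorithm works, but that is not what this claim asks.
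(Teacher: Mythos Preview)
Your proposal is correct and matches the paper's own proof, which is the single sentence ``The probability that a random string is both in the code and is decoded correctly from $T$ traces is at least $1-o(1)$.'' Your write-up just unpacks this into the two events $E_1$ and $E_2$ and applies the union bound explicitly; there is no substantive difference in approach.
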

 \begin{proof}
   The probability that a random string is both in the code and is decoded correctly from $T$ traces is at least $1-o(1)$.
 \end{proof}
 And conversely,
\begin{lemma}
  \label{lem:avg-tr}
  Let $\beta>1$, $q\in(0,1)$, and $T=T\rand_{q,2\beta}(n)$.
  For all positive integers $n$, there exists a code $C$ with $|C|\ge (1-n^{-\beta})2^n$ that is $(T,q,n^{-\beta})$ trace reconstructible.
\end{lemma}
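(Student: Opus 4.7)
The plan is to obtain $C$ by simply taking a large subset of $\{0,1\}^n$ on which the average-case trace reconstruction algorithm already succeeds with the desired per-input probability, and then adjusting the decoder to always output a codeword of $C$. The construction is a standard Markov-style pruning argument applied to the success probability on each individual input.

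First, let $A$ be an average-case trace reconstruction algorithm realizing the bound in Definition~\ref{def:rand-tr} at parameter $2\beta$: given $T = T\rand_{q,2\beta}(n)$ independent $\BDC_q$ traces of a uniformly random $x \in \{0,1\}^n$, $A$ outputs $x$ with probability at least $1 - \frac{1}{3 n^{2\beta}}$, where the probability is over both $x$ and the channel. For each fixed $x \in \{0,1\}^n$, let
\begin{align}
p_x \;\defeq\; \Pr_{z_1,\dots,z_T \sim \BDC_q(x)}\bigl[A(z_1,\dots,z_T) \neq x\bigr]
\end{align}
be the failure probability of $A$ on input $x$. Averaging over $x$ uniform in $\{0,1\}^n$ gives $\Ex_x[p_x] \le \frac{1}{3 n^{2\beta}}$.

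Second, I would apply Markov's inequality to the nonnegative random variable $p_x$ at threshold $n^{-\beta}$:
\begin{align}
\Pr_x\bigl[p_x > n^{-\beta}\bigr] \;\le\; \frac{\Ex_x[p_x]}{n^{-\beta}} \;\le\; \frac{1}{3 n^{\beta}}.
\end{align}
Let $C \defeq \{x \in \{0,1\}^n : p_x \le n^{-\beta}\}$. Then $|C| \ge 2^n\bigl(1 - \tfrac{1}{3n^{\beta}}\bigr) \ge (1 - n^{-\beta})\cdot 2^n$, which matches the size claimed in the lemma.

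Third, I would promote $A$ to a decoder for $C$. Fix any $c_0 \in C$ and define $\Dec : (\{0,1\}^*)^T \to C$ by $\Dec(z_1,\dots,z_T) = A(z_1,\dots,z_T)$ whenever that output lies in $C$, and $\Dec(z_1,\dots,z_T) = c_0$ otherwise. For $x \in C$, whenever $A$ outputs $x$ we have $A$'s output in $C$, so $\Dec$ also outputs $x$; hence $\Pr[\Dec \ne x] \le p_x \le n^{-\beta}$, as required for $(T,q,n^{-\beta})$ trace reconstructibility. There is no real obstacle here: the only thing to watch is that the decoder is required by the definition to output an element of $C$, which the redirection to $c_0$ handles without increasing the per-codeword failure probability. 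The argument is essentially Markov plus averaging, so the only delicate bookkeeping is matching the exponents $\beta$ vs.\ $2\beta$ to make the Markov slack absorb the factor from $\Ex_x[p_x]$ up to the desired $n^{-\beta}$ threshold.
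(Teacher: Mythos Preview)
Your proof is correct and follows essentially the same approach as the paper: define the per-input failure probability, apply Markov's inequality at threshold $n^{-\beta}$ using $\E_x[p_x]\le \frac{1}{3}n^{-2\beta}$, and take $C$ to be the set of inputs below threshold. Your explicit redirection of the decoder's output to a fixed $c_0\in C$ when $A$'s output falls outside $C$ is a nice bit of care that the paper's proof elides.
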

\begin{proof}
  For any string $x\in \{0,1\}^n$, let $\delta_x$ denote the probability that $x$ is recovered incorrectly using the algorithm solving trace reconstruction for random traces on the $\BDC_q$ in $T$ traces with failure probability $n^{-2\beta}$.
  By definition of $T=T\rand_{q,2\beta}(n)$, we have $\E_x[\delta_x] \le  \frac{1}{3}n^{-2\beta}$, so, by Markov's inequality, $\Pr_x[\delta_x \ge n^{-\beta}] < n^{-\beta}$.
  Setting $C$ to be the set of all $x$ with $\delta_x\le n^{-\beta}$ and using the same trace reconstruction algorithm gives that $C$ is $(T,q,n^{-\beta})$-trace reconstructible, and has at least $(1-n^{-\beta})2^n$ codewords.
\end{proof}

We need to combine these short trace reconstruction codes into a longer one in Theorem~\ref{thm:main}. The following notion helps prevent these short codes from being confused with the other components of our construction.
\begin{definition}
  \label{def:protect}
  A string $w$ is \emph{$m$-protected} if it can be written as $w=0^{m}w^\circ 1^{m}$, where $w^\circ$ starts with a 1, ends with a 0, and every substring of $w^\circ$ of length $m'\ge m/4$ has between $\frac{m'}{4}$ and $\frac{3m'}{4}$ 1s (inclusive).
  In any $m$-protected string $w$, we let $w^\circ$ denote the string $w$ with the leading $m$ 0s and the trailing $m$ 1s deleted.
  We refer to $w^\circ$ as the \emph{interior} of $w$.
  A code is $m$-protected if all of its codewords are $m$-protected.
\end{definition}

We use short codes which are both $m$-protected and trace reconstructible in our construction. The following Lemma (see Appendix~\ref{app:short} for details) shows that these codes exist.
\begin{lemma}
  For all $q\in(0,1)$ and $\beta \ge 150$, there exists an absolute constant $\varepsilon_0 = \varepsilon_0(\beta,q)  > 0$ such that the following holds.
  For all $\varepsilon\in(0,\varepsilon_0)$ and $n\ge 8\beta\frac{1}{\varepsilon}\log\frac{1}{\varepsilon}$, if $m = \floor{\beta\log n}$ and $T = T\rand_{q,6\beta}(n)$, there exist codes of length $n$ and rate at least $1-\frac{\varepsilon}{2}$ that are $m$-protected and $(T,q,n^{-3\beta})$ trace reconstructible.
  \label{lem:short}
\end{lemma}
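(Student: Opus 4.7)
The plan is to start from a random string of length $n-2m$, enforce the protection properties by a Chernoff-plus-union-bound argument, and then apply a Markov argument in the style of Lemma~\ref{lem:avg-tr} to carve out a large subcode on which the average-case trace reconstruction algorithm of Theorem~\ref{thm:avg-tr} succeeds with the desired per-codeword failure probability $n^{-3\beta}$.

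Concretely, I would let $\mathcal{P}\subseteq\{0,1\}^n$ denote the set of $m$-protected strings $w=0^m w^\circ 1^m$. A uniformly random $w^\circ\in\{0,1\}^{n-2m}$ satisfies the boundary conditions (starts with $1$, ends with $0$) with probability $1/4$. For the balance condition, fix a substring window of length $m'\ge m/4$; by a standard Chernoff bound, the probability that this window has fewer than $m'/4$ or more than $3m'/4$ ones is at most $2e^{-m'/8}\le 2e^{-m/32}$. Union-bounding over the at most $n^2$ choices of substring window, the balance condition fails with probability at most $2n^2 e^{-m/32}$. Since $m=\lfloor\beta\log n\rfloor$, this is $n^{-\Omega(\beta)}$, which for $\beta\ge 150$ is $\le n^{-3}$ say. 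Therefore $|\mathcal{P}|\ge \tfrac14(1-n^{-3})\,2^{n-2m}$, and in particular $\log_2|\mathcal{P}|\ge n-2m-3$.

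Next, let $T=T\rand_{q,6\beta}(n)$ and let $\mathcal{A}$ denote the algorithm achieving expected failure $\le \tfrac{1}{3}n^{-6\beta}$ on a uniformly random string of length $n$. Writing $\delta_w$ for the failure probability of $\mathcal{A}$ on a specific $w\in\{0,1\}^n$, we have $\E_{w\sim\{0,1\}^n}[\delta_w]\le \tfrac{1}{3}n^{-6\beta}$, so $\E_{w\sim \mathcal{P}}[\delta_w]\le \frac{2^n}{|\mathcal{P}|}\cdot\tfrac{1}{3}n^{-6\beta}\le n^{2\beta}\cdot n^{-6\beta} = n^{-4\beta}$ using $|\mathcal{P}|\ge 2^{n-2m-3}=\Omega(2^n/n^{2\beta})$. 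By Markov's inequality, the set $C=\{w\in\mathcal{P}:\delta_w\le n^{-3\beta}\}$ satisfies $|C|\ge (1-n^{-\beta})|\mathcal{P}|$. Using $\mathcal{A}$ as the decoder, the resulting code $C$ is by construction $m$-protected and $(T,q,n^{-3\beta})$ trace reconstructible.

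It remains to check the rate. We have $\log_2|C|\ge n-2m-3-O(n^{-\beta})$, so the rate is at least $1-\frac{2m}{n}-o(1)\ge 1-\frac{2\beta\log n}{n}-o(1)$. The hypothesis $n\ge 8\beta\frac{1}{\varepsilon}\log\frac{1}{\varepsilon}$ gives $n\varepsilon\ge 8\beta\log(1/\varepsilon)$, and for $\varepsilon<\varepsilon_0(\beta,q)$ sufficiently small we can arrange $\log n\le 2\log(1/\varepsilon)$ at the boundary (and the inequality only improves for larger $n$), giving $2\beta\log n/n\le \varepsilon/3$ and therefore rate $\ge 1-\varepsilon/2$ once $\varepsilon_0$ is small enough to swallow the $o(1)$ slack. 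The main obstacles are just bookkeeping: verifying the Chernoff/Markov constants come out favorably under $\beta\ge 150$, and checking that the rate loss $2m/n$ from the protection buffers is absorbed by $\varepsilon/2$ in the prescribed range of $n$ and $\varepsilon$; no substantially new ideas beyond Lemma~\ref{lem:avg-tr} and the probabilistic verification of the protection property are needed.
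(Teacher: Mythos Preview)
Your proposal is correct and follows essentially the same approach as the paper: bound the number of $m$-protected strings via Chernoff plus a union bound over substrings (the paper's Lemma~\ref{lem:pad}), then combine with the Markov argument of Lemma~\ref{lem:avg-tr} to extract a large trace-reconstructible subcode. The only cosmetic difference is that the paper takes a direct intersection $C=C_1\cap C_2$ (using that the bad set for trace reconstruction has size $\le 2^{n-3m}$, which is swamped by $|C_2|\ge 2^{n-2m-3}$), whereas you rerun Markov conditioned on $\mathcal{P}$; both yield $|C|\ge 2^{n-3m}$ and hence rate $\ge 1-\tfrac{3m}{n}>1-\tfrac{\varepsilon}{2}$.
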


\subsection{Synchronization strings}

Synchronization strings \cite{HS17} are useful tools for turning synchronization errors (insertions and deletions) into erasures (replacing symbol with the symbol `?') and substitution errors (replacing symbol with another symbol).
Here, we state the construction of synchronization strings that we use and a few useful properties.

\newcommand{\ID}{\operatorname{ID}}

\begin{definition}[Insertion-deletion distance]
Given two strings $S \in \Sigma^n$ and $T \in \Sigma^m$, the \emph{insertion-deletion distance} between $S$ and $T$, denoted $\ID(S, T)$ is the minimum number of characters that needed to be inserted into $S$ and deleted from $S$ to produce $T$. 
\end{definition}

Insertion-deletion distance is similar to \emph{edit distance} which allows for substitutions at a cost of $1$. Observe that if $S$ and $T$ have disjoint character sets, then $\ID(S, T)$ is the sum of their lengths. 

\begin{definition}[$\eta$-synchronization string]
  String $S\in \Sigma^n$ is an $\eta$-synchronization string if for every $1\le i < j < k \le n+1$, we have that $\ID(S[i,j), S[j,k)) > (1-\eta)(k-i)$.
\end{definition}
\begin{theorem}[Theorems 4.5 and 4.7 of \cite{HS18}]
  \label{thm:ss}
  For any $\eta\in(0,1)$ and all $n$, one can construct an $\eta$-synchronization string of length $n$ in time $\poly(n)$ over an alphabet of size $6000\eta^{-4}$.
\end{theorem}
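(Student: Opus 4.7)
The plan is to establish the theorem in two stages: first prove existence of $\eta$-synchronization strings over an alphabet of size $O(\eta^{-4})$ via the probabilistic method (using the asymmetric Lovász Local Lemma), then derandomize to obtain a $\poly(n)$-time construction.

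First I would reformulate the synchronization condition via the identity $\ID(A,B) = |A| + |B| - 2 \cdot \operatorname{LCS}(A,B)$: the requirement $\ID(S[i,j), S[j,k)) > (1-\eta)(k-i)$ is equivalent to $\operatorname{LCS}(S[i,j), S[j,k)) < \tfrac{\eta}{2}(k-i)$. Sampling $S \in \Sigma^n$ uniformly at random, for each triple $1 \le i < j < k \le n+1$ define the bad event $B_{ijk}$ that this LCS bound is violated. A standard counting argument (number of candidate matchings of length $\ell = \lceil \eta(k-i)/2 \rceil$ inside the two substrings, times the $|\Sigma|^{-\ell}$ probability that a fixed matching is realized) gives $\Pr[B_{ijk}] \le |\Sigma|^{-c\eta(k-i)}$ for a universal $c > 0$, provided $|\Sigma|$ exceeds a suitable polynomial in $\eta^{-1}$.

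Next I would apply the asymmetric Lovász Local Lemma, exploiting that $B_{ijk}$ and $B_{i'j'k'}$ are independent whenever the index intervals $[i,k)$ and $[i',k')$ are disjoint. Bucketing dependent events by scale $\ell = k - i$ and choosing LLL weights $x_{ijk}$ that decay geometrically in $\ell$ at the same rate as $\Pr[B_{ijk}]$, the LLL inequality reduces to a convergent geometric sum that is satisfied once $|\Sigma| \ge 6000\,\eta^{-4}$. For the constructive part I would invoke the Moser--Tardos resampling algorithm; alternatively, a direct greedy derandomization builds $S$ character-by-character, at each step selecting any $\sigma \in \Sigma$ that does not close off a bad triple with the already-committed prefix. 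The same counting bound, now summed only over triples ending at the current position, shows that the number of forbidden characters is strictly less than $|\Sigma|$, so a valid choice always exists and the running time is $\poly(n)$.

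The main obstacle is obtaining the tight $\eta^{-4}$ dependence rather than a larger polynomial in $\eta^{-1}$. The naive LCS enumeration $\binom{\ell}{\eta\ell/2}^2$ is too lossy, and a naive LLL application with bounded-degree dependency graphs fails because the dependency degree is unbounded in $n$. The resolution is two-fold: one uses a refined matching-enumeration bound that exploits monotonicity of the index pairs realizing an LCS, and one uses the asymmetric LLL with scale-dependent weights so that the contribution from all triples of a given scale $\ell$ forms a geometric tail. A secondary technical point is verifying that the greedy derandomization inherits the same alphabet bound, which follows from the fact that the union bound over only those bad triples involving the freshly appended position is no worse than the LLL-scale sum.
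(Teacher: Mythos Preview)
The paper does not prove this theorem at all; it is quoted verbatim as a black box from \cite{HS18} (Haeupler--Shahrasbi) and used off the shelf. There is therefore no ``paper's own proof'' to compare against.

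For what it is worth, your outline is essentially the approach of the original Haeupler--Shahrasbi existence proof: rewrite the synchronization condition as an LCS bound, define bad events $B_{ijk}$, bound their probabilities by enumerating candidate common subsequences, and apply the asymmetric Lov\'asz Local Lemma with scale-dependent weights. Invoking Moser--Tardos for the constructive part is also the standard route and is sound. One caution about your alternative ``greedy character-by-character'' derandomization: the bad event $B_{ijk}$ depends on all of $S[i,k)$, not just the last symbol, so it is not true that each bad triple ending at the current position forbids only a small set of single characters---if the LCS of the already-committed prefix is already at the threshold, \emph{every} choice of the new character is forbidden for that triple. A plain union bound over forbidden characters therefore does not go through; you would need a genuine inductive invariant (closer to entropy compression or conditional LLL) rather than the one-line argument you sketch. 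Since you already cite Moser--Tardos as the primary route, this is not fatal, but the greedy alternative as written has a gap.
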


We now describe some useful properties of synchronization strings.
Informally, a \emph{string matching} between two strings describes how to transform one string into the other via insertions and deletions. 
We use a definition of string matching equivalent to the one introduced in \cite{HS17}.
\newcommand{\del}{\text{del}}
\begin{definition}[String matching]
For strings $c$ and $c'$ of length $n$ and $n'$, respectively, a \emph{string matching} is a strictly increasing partial function $i^*:[n']\nrightarrow[n]$ such that, for all $j$ in the domain of $i^*$, we have $c_{i^*(j)} = c'_{j}$.
Given a string matching, an index $j\in[n']$ is called \emph{successfully transmitted} if it is in the domain of $i^*$, and is called an \emph{insertion} otherwise.
An element $i\in[n]$ is called a \emph{deletion} if it is not in the codomain of $i^*$.
\end{definition}

A \emph{$(n,\delta)$-indexing algorithm for a string $S$} takes as input a string $S'$ of length $n'$ with an unknown string matching $i^*:[n']\nrightarrow[n]$ having at most $n\delta$ insertions and deletions and outputs an index in $[n]\cup\{\perp\}$ for every index in $[n']$.
We say the algorithm \emph{decodes index $j\in [n']$ correctly} under a string matching $i^*$ if it outputs $i^*(j)$ for index $j$ when $i^*(j)$ exists and outputs $\perp$ if it does not exist.
A \emph{misdecoding} of an algorithm is a successfully transmitted, incorrectly decoded index $j\in[n']$.
An indexing algorithm is \emph{error free} if every $j \in [n']$ is correctly decoded or is assigned $\perp$.

Haeupler and Shahrasbi proved many results showing that synchronization strings yield indexing algorithms with few misdecodings.
In this work, we use the following two results.

\begin{theorem}[Theorem 5.10 of \cite{HS17}]
Let $S$ be an $\eta$-synchronization string of length $n$.
Then there exists an $(n,\delta)$-indexing algorithm for $S$ guaranteeing at most $\frac{2n\delta}{1-\eta}$ misdecodings.
Furthermore, this algorithm runs in time $O(n^4)$
\label{thm:ss-2}
\end{theorem}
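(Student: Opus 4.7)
The plan is to analyze the natural global decoder based on minimum insertion-deletion distance between prefixes. Given the received string $S' \in \Sigma^{n'}$ with $n' \le (1+\delta)n$, I would fill the dynamic programming table $D[i,j] := \ID(S[1..i], S'[1..j])$ for all $i \in [n]$, $j \in [n']$; this is standard and has $O(n \cdot n')$ cells, each computable in $O(1)$ time from three neighbours, giving $O(n^2)$ preprocessing. The decoder then assigns to position $j$ the index $i$ that lies on a globally cheapest $\ID$-alignment of $S$ to $S'$ routed through cell $(i,j)$, and outputs $\perp$ when no such $i$ is sufficiently supported (i.e., any candidate would force surplus local cost near $j$). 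Backtracking a globally cheapest alignment and locating the decoded index for every $j$ costs at most $O(n^4)$ in the worst case, which dominates.

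The heart of the argument is a structural lemma: whenever the decoder assigns $f(j) = i' \neq i^*(j)$ at a successfully transmitted position $j$, the rival alignment witnessing $f(j) = i'$ together with the true channel alignment (which matches $j$ to $i^*(j)$) carve out two consecutive substrings $S[a,b)$ and $S[b,c)$ of $S$ whose concatenation is aligned by the combined alignments using only $e_j$ insertions and deletions drawn from the channel's edit budget. In particular $\ID(S[a,b), S[b,c)) \le e_j$. But the $\eta$-synchronization property forces $\ID(S[a,b), S[b,c)) > (1-\eta)(c-a)$, so $e_j \ge (1-\eta)(c-a) \ge 1-\eta$. Every misdecoding thus charges at least $(1-\eta)$ units of local error to the channel.

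The counting step finishes the proof. Ordering the misdecoded positions along $j$, I would pick canonical witnessing windows straight from the DP back-pointers so that each of the at most $n\delta$ channel errors lies in at most two such windows. Summing the per-misdecoding lower bounds from the structural lemma yields $(1-\eta)\cdot(\text{number of misdecodings}) \le 2 n\delta$, which rearranges to the claimed bound $\frac{2n\delta}{1-\eta}$.

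The main obstacle is making the charging step precise: one must choose the witnessing windows so that each channel edit is contained in at most two of them. I would attempt this by taking each window to be the \emph{branching gap} between the rival and true alignments in the DP, and showing that the resulting family of intervals has multiplicity at most $2$. The factor of $2$ in the theorem is exactly the cost of this double-covering. The cases $i' < i^*(j)$ (rival alignment lags) and $i' > i^*(j)$ (rival alignment leads) must be treated symmetrically, and verifying the multiplicity bound across these two cases is where the bulk of the technical care sits.
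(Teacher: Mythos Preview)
This theorem is not proved in the paper; it is quoted as Theorem~5.10 of Haeupler--Shahrasbi~\cite{HS17} and used as a black box. There is therefore no proof in the paper to compare your proposal against.

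For what it is worth, your outline is in the spirit of the argument in~\cite{HS17}: a global minimum-cost alignment decoder, together with a charging argument in which each misdecoded position forces two adjacent substrings $S[a,b)$, $S[b,c)$ of $S$ to have small insertion-deletion distance, contradicting the $\eta$-synchronization property unless enough channel errors fall in the corresponding window. The factor $2/(1-\eta)$ arises exactly from the double-counting bound you describe. That said, your sketch is not yet a proof: the step you flag as ``the main obstacle''---choosing witnessing windows so that each channel edit lies in at most two of them---is precisely where the technical content lives, and you have only stated an intention to attempt it. Your structural lemma is also stated informally enough that the inequality $e_j \ge (1-\eta)(c-a)$ does not obviously yield the desired sum bound without knowing that the $(c-a)$'s themselves add up correctly (you need to charge to channel errors, not just get a per-misdecoding lower bound of $1-\eta$). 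If you want to complete this, you should consult~\cite{HS17} directly rather than the present paper.
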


\begin{theorem}[Theorem 6.18 of \cite{HS17}]
  Let $S$ be an $\eta$-synchronization string of length $n$.
  There exists a linear time error-free deletion-only $(n,\delta)$-indexing algorithm for $S$ guaranteeing at most $\frac{\eta}{1-\eta}\cdot n\delta$ misdecodings.
\label{thm:ss-3}
\end{theorem}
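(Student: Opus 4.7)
The plan is to design a two-sided greedy matching algorithm and charge each ambiguous output to a deletion via the $\eta$-synchronization inequality. First I would compute, in a single left-to-right pass over $S'$ and $S$, the quantity $\ell(j)$ defined as the smallest $i\in[n]$ for which the length-$j$ prefix of $S'$ embeds as a subsequence of $S[1,i+1)$, and analogously, in a right-to-left pass, the quantity $r(j)$ defined as the largest $i\in[n]$ for which the length-$(n'-j+1)$ suffix of $S'$ starting at position $j$ embeds as a subsequence of $S[i,n+1)$. The indexing algorithm outputs $\ell(j)$ when $\ell(j)=r(j)$ and $\perp$ otherwise. Both $\ell$ and $r$ are computable in $O(n)$ time via one linear scan each, so the whole procedure is linear time.

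Error-freeness is immediate: since $S'$ is a subsequence of $S$ in the deletion-only setting, the true matching $i^*:[n']\to[n]$ is a valid embedding and therefore satisfies $\ell(j)\le i^*(j)\le r(j)$ for every $j$. When $\ell(j)=r(j)$ this pins $i^*(j)$ uniquely to the output value, and otherwise the algorithm safely returns $\perp$, so no incorrect non-$\perp$ index is ever produced.

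The bulk of the work is bounding the number of misdecoded indices, i.e., those $j$ with $\ell(j)<r(j)$, by $\frac{\eta}{1-\eta}\cdot n\delta$. I would group the ambiguous indices into maximal runs $[a,b]$ and, on each run, observe that the left-greedy and right-greedy embeddings provide two distinct valid matchings of the same block of $S'$ into two disjointly offset windows of $S$. Splicing the left-greedy matching on the indices before $a$ with the right-greedy matching on the indices from $a$ onward yields a second valid deletion matching that agrees with the left-greedy matching outside the run but shifts by $r(a)-\ell(a)>0$ inside it. Applied at the boundary between the two matchings, the defining inequality $\ID(S[i,j),S[j,k))>(1-\eta)(k-i)$ for a carefully chosen triple surrounding the run converts the shift into a local lower bound of the form (ambiguity on this run) $\le \frac{\eta}{1-\eta}\cdot$(deletions charged to the run). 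Summing these local bounds over disjoint runs gives the global bound.

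The main obstacle is this last step: selecting, for each ambiguous run, a triple $(i,j,k)$ so that the synchronization inequality yields the tight $\frac{\eta}{1-\eta}$ ratio without double-counting deletions across neighboring runs. This is what distinguishes the deletion-only bound from the weaker factor $\frac{2}{1-\eta}$ appearing in the general indexing result (Theorem~\ref{thm:ss-2}), which must also absorb insertions.
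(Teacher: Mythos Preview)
The paper does not prove this statement: it is quoted as Theorem~6.18 of \cite{HS17} and invoked as a black box, so there is no in-paper argument to compare your proposal against.

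On the substance of your sketch: the two-sided greedy algorithm you describe is correct, runs in linear time, and the error-freeness argument (that $\ell(j)\le i^*(j)\le r(j)$ pins down $i^*(j)$ whenever $\ell(j)=r(j)$) is sound. This is essentially the algorithm analyzed in \cite{HS17}.

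The gap is exactly where you say it is, and it is the entire content of the theorem. You assert that on each maximal ambiguous run $[a,b]$ one can choose a triple $(i,j,k)$ so that the $\eta$-synchronization inequality yields $(\text{run length})\le \frac{\eta}{1-\eta}\cdot(\text{local deletions})$, but you neither exhibit the triple nor verify the inequality. Concretely, if the unambiguous neighbors of the run are anchored at positions $p=\ell(a-1)=r(a-1)$ and $q=\ell(b+1)=r(b+1)$ in $S$, one needs $(b-a+1)\le \eta(q-p-1)$. When the left- and right-greedy images of $[a,b]$ are disjoint (i.e.\ $\ell(b)<r(a)$), a single application of the synchronization inequality at a split point $j\in(\ell(b),r(a)]$ does this via the common-subsequence bound on $\ID$. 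But when the images overlap ($\ell(b)\ge r(a)$), no single split works, and the argument must be iterated or rephrased globally. Your proposal does not address this case distinction, and the ``splice and apply at the seam'' step as stated does not go through without it. The plan is salvageable along these lines, but as written the $\frac{\eta}{1-\eta}$ bound is asserted rather than proved.
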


\subsection{Binary deletion channel codes}

The following lemma gives codes for the BDC$_q$ with failure probability at most $\delta$ and length $O(\log \delta^{-1})$.
In our application, we take $\delta=\poly\frac{1}{\varepsilon}$, where $1-\varepsilon$ is the rate of our code.
A similar construction appears in \cite{GuruswamiL19} (Proof of Theorem 1). 
We provide a proof in Appendix~\ref{app:bdc} for completeness. 
\begin{lemma}
  \label{lem:bdc}
  For all $q\in(0,1)$ and positive integers $K$ and $m$, there exists a binary code $C:[2^K]\to \{0,1\}^{3Km}$ where every codeword has exactly $2K$ runs, all of which have length either $m$ or $2m$ and decodable in linear time under the $\BDC_q$ with failure probability at most $6K\cdot 2^{-(1-q)m/20}$.
\label{thm:ss-4}
\end{lemma}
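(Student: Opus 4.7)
The plan is to design a very structured code in which each message bit is encoded by the relative ordering of two run lengths, and then to decode by simply reading off run lengths in the received trace. This makes both the encoder and decoder linear time and reduces the failure analysis to a routine Chernoff calculation.

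For the construction, I would map a message $b_1\cdots b_K\in\{0,1\}^K$ to the string obtained by concatenating $K$ consecutive pairs of runs, where the $i$-th pair has lengths $(m,2m)$ if $b_i=0$ and $(2m,m)$ if $b_i=1$, and where the run bit values simply alternate $0,1,0,1,\ldots$ throughout. Every codeword then has exactly $2K$ runs, each of length $m$ or $2m$, and total length $K\cdot 3m = 3Km$, matching the statement. Decoding would proceed by scanning the received string $y$ once to compute its run-length encoding: if $y$ does not contain exactly $2K$ runs, the decoder declares failure (outputting an arbitrary codeword); otherwise, since adjacent input runs carry opposite bit values and no merging can have occurred without a whole run disappearing, the $i$-th received run must be the channel image of the $i$-th input run, so the decoder groups the received runs into $K$ consecutive pairs and outputs $0$ or $1$ according to whether the first or the second run in each pair is longer.

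The failure analysis has two ingredients. Decoding can only fail if either some input run is deleted entirely (in which case its two flanking runs merge and the total run count drops below $2K$), or a pair of surviving runs is misidentified because the length-$m$ run retained at least as many symbols as the length-$2m$ run. Total deletion of a fixed run of length $\ge m$ occurs with probability at most $q^m \le e^{-(1-q)m} \le 2^{-(1-q)m/20}$, and a union bound over the $2K$ runs contributes $2K\cdot 2^{-(1-q)m/20}$. For a single pair, with $X\sim\mathrm{Bin}(m,1-q)$ and $Y\sim\mathrm{Bin}(2m,1-q)$ independent, applying the two-sided Chernoff bound at the midpoint $\tfrac{3}{2}(1-q)m$ yields $\Pr[X\ge Y] \le 2\cdot 2^{-(1-q)m/20}$, and a union bound over the $K$ pairs contributes another $2K\cdot 2^{-(1-q)m/20}$. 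Summing gives the advertised $6K\cdot 2^{-(1-q)m/20}$, with some slack.

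The main obstacle, and really the only place the argument can slip, is calibrating the Chernoff exponents so that both the single-run-deletion tail and the pair-misidentification tail fit under a common $2^{-(1-q)m/20}$ bound. The constant $20$ is loose, but obtaining it cleanly requires applying the Chernoff inequality with slightly asymmetric deviation parameters (e.g.\ $\delta=1/2$ on the upper tail of $X$ and $\delta=1/4$ on the lower tail of $Y$) and then carefully converting between base-$e$ and base-$2$ exponents using $\log_2 e > 1$.
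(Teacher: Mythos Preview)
Your proposal is correct and follows essentially the same approach as the paper: encode messages by the pattern of run lengths in $\{m,2m\}$, decode by examining the run lengths of the trace, and bound failure via Chernoff on each run plus a union bound. The paper differs only in minor implementation details: it assigns to each $\sigma\in[2^K]$ an arbitrary alternating string with $K$ length-$1$ runs and $K$ length-$2$ runs (using $\binom{2K}{K}\ge 2^K$) and then stretches by $m$, and it decodes by comparing each received run length to an absolute threshold $1.4(1-q)m$ rather than your pairwise comparison. Your comparative decoder has the pleasant side effect of not needing to know $q$, while the paper's threshold decoder does; otherwise the analyses and the resulting constants line up.
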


\begin{remark}
  The code above has rate $\frac{1}{3m}$, which approaches 0 as $m$ grows. Using a construction similar to \cite{GuruswamiL19}, if we drop the requirement of runs having length exactly $m$ or $2m$, it is possible to achieve a failure probability $2^{-\Omega(m)}$ with a code of rate $c(1-q)$ for some absolute $c>0$. We use the result in Lemma~\ref{lem:bdc} as the proof is simpler and the result is sufficient.
\end{remark}

\subsection{High rate error correcting codes}
Our constructions leverage high rate (rate $1-\varepsilon$) error correcting codes that are polynomial time encodable and decodable from a $\poly(\varepsilon)$ fraction of worst-case substitution errors.
For the details of these constructions and their parameters, see Appendix~\ref{app:ecc}.

For our binary upper bound, it suffices to use the following variant of a construction by Justesen \cite{Justesen72}.
Conveniently, it gives codes for \emph{all} sufficiently large $n$, rather than only infinitely many $n$.
This property is necessary for Remark~\ref{rem:ub-eps}, where we wish to take $\varepsilon\to 0$ as $n\to\infty$ in our binary upper bound construction.
\begin{proposition}
  \label{prop:justesen}
  For every $\varepsilon\in(0,\frac{1}{2})$ and $\Sigma$ whose size is a power of 2, there exists an $n_0=\tilde \Theta(\frac{1}{\varepsilon^2})$ such that, for all $n\ge n_0$, there exists a code of length $n$ over alphabet $\Sigma$ of rate $1-\varepsilon$ that is encodable and decodable in time $O_\varepsilon(n^2)$ from up to a fraction $\frac{\varepsilon^2}{500\log\frac{1}{\varepsilon}}$ of worst-case substitution errors.
\end{proposition}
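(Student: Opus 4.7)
The plan is to construct a Justesen-style concatenated code at the binary level and then regroup bits to obtain a code over $\Sigma$ (which is valid because $|\Sigma|$ is a power of $2$). Let $s=\log_2|\Sigma|$. The outer code will be a Reed--Solomon code of rate $1-\varepsilon/2$ over the field $\mathbb{F}=\mathrm{GF}(2^{k_{\text{in}}})$, and the inner code will be a binary linear code of rate $1-\varepsilon/2$, block length $n_{\text{in}}=\Theta(\log(1/\varepsilon)/\varepsilon)$, and relative distance $\delta_{\text{in}}\ge \varepsilon/(3\log(1/\varepsilon))$; such an inner code exists by the Gilbert--Varshamov bound.

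First I would precompute the inner code by exhaustive search over all $k_{\text{in}}\times n_{\text{in}}$ binary generator matrices with $k_{\text{in}}=\lceil n_{\text{in}}(1-\varepsilon/2)\rceil$, checking the minimum distance of each candidate. The Gilbert--Varshamov bound guarantees existence, and the search runs in $2^{O(n_{\text{in}}^2)}=O_\varepsilon(1)$ time as a one-time precomputation. For any outer block length $N\le 2^{k_{\text{in}}}$, I would then take the $[N,K]$ Reed--Solomon code with $K=\lceil N(1-\varepsilon/2)\rceil$, which is unique-decodable from an $\varepsilon/4$ fraction of symbol errors in $O_\varepsilon(N^2)$ time. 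Concatenation yields a binary code of length $Nn_{\text{in}}$ and rate $(1-\varepsilon/2)^2\ge 1-\varepsilon$; regrouping consecutive blocks of $s$ bits into $\Sigma$-symbols produces a $\Sigma$-ary code of length $Nn_{\text{in}}/s$ with the same rate. To hit an arbitrary target length $n\ge n_0$ over $\Sigma$, I would pick $N$ so that $Nn_{\text{in}}/s$ is as close to $n$ as possible and fill any remaining symbols with a short auxiliary inner code obtained by the same exhaustive search. The resulting rate loss from the padding is $O(n_{\text{in}}/(ns))$, which is absorbed into $\varepsilon$ provided $n\ge n_0=\tilde\Theta(1/\varepsilon^2)$.

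For decoding, I would unique-decode each of the $N$ inner blocks up to radius $\delta_{\text{in}} n_{\text{in}}/2$ by brute force in $2^{O(n_{\text{in}})}=O_\varepsilon(1)$ time per block, then invoke the standard Reed--Solomon decoder. The key counting step: a worst-case $e$-fraction of $\Sigma$-symbol errors corresponds to at most an $e$-fraction of bit errors in the concatenated codeword, and by averaging at most $\lfloor 2eN/\delta_{\text{in}}\rfloor$ inner blocks can contain more than $\delta_{\text{in}} n_{\text{in}}/2$ bit errors. Plugging in $e=\varepsilon^2/(500\log(1/\varepsilon))$ and $\delta_{\text{in}}\ge \varepsilon/(3\log(1/\varepsilon))$ yields strictly fewer than $\varepsilon N/4$ corrupted outer symbols, which the RS decoder then corrects.

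The main technical obstacle I anticipate is handling alphabets with $s\gg \log(1/\varepsilon)$: a direct $\Sigma$-ary Gilbert--Varshamov inner code would achieve only distance $\Theta(\varepsilon/s)$, which is too small. Running concatenation at the binary level and only regrouping at the end neatly sidesteps this, since a single $\Sigma$-symbol error amounts to at most $s$ bit errors and hence preserves the binary error fraction. The remaining nuisance is covering every $n\ge n_0$ rather than only multiples of $n_{\text{in}}/s$, which the padding step above resolves at the cost of pushing $n_0$ from $\tilde\Theta(1/\varepsilon)$ up to $\tilde\Theta(1/\varepsilon^2)$.
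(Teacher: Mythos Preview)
Your construction has a genuine gap: with the inner block length fixed at $n_{\text{in}}=\Theta(\tfrac{1}{\varepsilon}\log\tfrac{1}{\varepsilon})$ and the Reed--Solomon code taken over $\mathrm{GF}(2^{k_{\text{in}}})$, the outer length is constrained to $N\le 2^{k_{\text{in}}}$, so the total binary length $Nn_{\text{in}}$ is bounded by a constant depending only on $\varepsilon$. You therefore cannot ``hit an arbitrary target length $n\ge n_0$'' by varying $N$; the construction simply runs out of room once $n$ exceeds $2^{k_{\text{in}}}n_{\text{in}}$. The proposition, however, asks for codes of \emph{all} sufficiently large lengths $n$.

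The paper repairs this by letting the inner dimension $m$ grow with $n$ (taking $m$ to be the smallest integer with $m\ge 12/\varepsilon$ and $m\cdot 2^m\ge n$, so $m\approx\max(12/\varepsilon,\log n)$), which makes the Reed--Solomon alphabet $\mathbb{F}_{2^m}$ large enough. But then your brute-force search over all $k_{\text{in}}\times n_{\text{in}}$ generator matrices would take $2^{\Theta(m^2)}=2^{\Theta(\log^2 n)}$ time, which is superpolynomial. The paper instead searches only the Wozencraft ensemble $\{x\mapsto (x,\sigma'(\alpha\cdot\sigma^{-1}(x))):\alpha\in\mathbb{F}_{2^m}\}$, which has just $2^m$ members; a counting argument shows some $C_\alpha$ meets the GV distance, and the search runs in $\tilde O(2^{2m})=O_\varepsilon(n^2)$ time. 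Once you make this change, the remainder of your analysis (the bit-regrouping for $\Sigma$, the averaging bound on the number of corrupted inner blocks, and the RS decoding) goes through and matches the paper's argument.
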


It would suffice to use Proposition~\ref{prop:justesen} for our large alphabet construction (Theorem~\ref{thm:large-ub}) as well.
However, using the following error correcting code of Guruswami and Indyk \cite{GI05} allows linear time encoding/decoding of our large alphabet construction.
\begin{proposition}
  \label{prop:gi}
  For every $\varepsilon\in(0,\frac{1}{2})$ and $\Sigma$ whose size is a power of 2, there exist an infinite family of codes over $\Sigma$ of rate $1-\varepsilon$ encodable in linear time and decodable in linear time from up to a fraction $\frac{1}{40}\varepsilon^3$ of worst-case substitution errors.
\end{proposition}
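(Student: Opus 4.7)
The plan is to derive this proposition directly from the main construction of \cite{GI05}, which provides, for every $\alpha \in (0, 1/2)$, an infinite family of rate $1 - \alpha$ codes over an alphabet of size $\poly(1/\alpha)$ that are linear-time encodable and linear-time decodable from a fraction $\Omega(\alpha)$ of adversarial substitution errors. First I would instantiate this with $\alpha := \varepsilon$ to obtain a code of rate $1 - \varepsilon$ correcting a $c\varepsilon$ fraction of errors for an absolute constant $c > 0$. Since $\frac{1}{40}\varepsilon^3 \le c\varepsilon$ throughout $\varepsilon \in (0, 1/2)$, this code already meets the stated correction radius with considerable slack; the loose $\varepsilon^3$ bound is retained only because it is all that Theorem~\ref{thm:large-ub} needs, so there is no reason to optimize further.

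The remaining step is to match the alphabet to the arbitrary power-of-2 $\Sigma$ demanded by the proposition. The code from \cite{GI05} lives over a specific alphabet $\Sigma_0$ whose size is polynomial in $1/\varepsilon$; after embedding $\Sigma_0$ into the smallest power-of-2 alphabet containing it (which at most doubles $|\Sigma_0|$), one may assume $|\Sigma_0|$ itself is a power of 2. When $|\Sigma| \ge |\Sigma_0|$, embed $\Sigma_0 \hookrightarrow \Sigma$ by any fixed injection and extend the encoder/decoder in the obvious way; this is a constant-time operation per symbol, so rate and linear-time complexity are both preserved. When $|\Sigma| < |\Sigma_0|$ (in particular the binary case), instead use standard code concatenation with a short constant-length inner code over $\Sigma$ correcting a constant fraction of substitution errors, which is encodable and decodable in constant time per block by brute force and hence linear time overall; the composed code still has rate $1 - O(\varepsilon)$ and correction fraction $\Omega(\varepsilon)$, which subsumes $\frac{1}{40}\varepsilon^3$ after possibly shrinking $\varepsilon$ by an absolute constant factor.

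The main obstacle, such as it is, is bookkeeping rather than mathematics: carrying the explicit constants from \cite{GI05}'s decoding guarantee and from the alphabet-adjustment step through to verify the precise $\frac{1}{40}\varepsilon^3$ bound, and ensuring that linear-time complexity is preserved under both the symbol embedding and (when needed) the concatenation. There is no new mathematical content beyond the cited result.
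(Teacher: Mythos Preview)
Your proposal has a genuine gap in the alphabet-adjustment step for the case $|\Sigma| \ge |\Sigma_0|$. Embedding $\Sigma_0 \hookrightarrow \Sigma$ via a fixed injection does \emph{not} preserve rate: the rate of a code $C \subset \Sigma^n$ is $\frac{\log|C|}{n\log|\Sigma|}$, so viewing the same set of codewords over the larger alphabet $\Sigma$ multiplies the rate by $\frac{\log|\Sigma_0|}{\log|\Sigma|}$, which can be arbitrarily small. To preserve rate you would need to pack several $\Sigma_0$-symbols into one $\Sigma$-symbol, but that is only clean when $|\Sigma|$ is a power of $|\Sigma_0|$, which two arbitrary powers of $2$ need not satisfy.

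The paper sidesteps this by invoking the \emph{binary} theorem of \cite{GI05} (their Theorem~5), which gives rate-$R$ binary codes decodable from a fraction
\[
\delta \ \ge\ \max_{R<r<1}\ \frac{(1-r-\varepsilon)\,H^{-1}(1-R/r)}{2}
\]
of substitution errors. Taking $R=1-\varepsilon'$, $\varepsilon=\varepsilon'/10$, $r=1-\varepsilon'/2$ and using $H^{-1}(x)\ge x^2/4$ yields $\delta \ge (\varepsilon')^3/40$; the cubic exponent is thus the honest output of this calculation, not deliberate slack. The passage to an arbitrary $\Sigma$ with $|\Sigma|=2^\ell$ is then just grouping $\ell$ consecutive bits into one $\Sigma$-symbol, which \emph{does} preserve both rate and error fraction precisely because every such $|\Sigma|$ is a power of $|\Sigma_0|=2$. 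Starting from binary is what makes the alphabet change painless, and is the piece your argument is missing.
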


\section{Optimal number of traces for large alphabet codes}
\label{sec:large}

We begin by describing the upper and lower bounds for coded trace reconstruction over a large alphabet. Many of the tools used in this section are important building blocks for the analysis of coded trace reconstruction over a binary alphabet.

\subsection{Upper bound}
\begin{proof}[Proof of Theorem~\ref{thm:large-ub}]
  We start by defining a few parameters for our construction.

  \textbf{Parameters.}
  Let $T = \ceil{\log_{1/q} \frac{160}{\varepsilon^3}}$.
  Let $q' = \frac{1+q}{2}$ and $\eta=\frac{\varepsilon^3}{160T}$.
  Let $\Sigma_S$ be an alphabet such that there exist $\eta$-synchronization strings over $\Sigma_S$, and assume $|\Sigma_S|$ is a power of 2.
  We may take $|\Sigma_S|=O_q(\poly\frac{1}{\varepsilon})$ by Theorem~\ref{thm:ss}.

  \textbf{Code.}
  Let $C_1$ be a length $n$ erasure code over an alphabet $\Sigma_C$ of size $|\Sigma_S|^{\ceil{2/\varepsilon}}$, rate at least $1-\frac{\varepsilon}{2}$, and decodable from a $\frac{\varepsilon^3}{40}$ fraction of worst-case substitution errors, given by Proposition~\ref{prop:gi}.
  Let $s_1,s_2,\dots,s_n$ be an $\eta$-synchronization string over alphabet $\Sigma_S$.
  Let $\Sigma = \Sigma_C \times \Sigma_S$.
  Let $C$ be a code with encoding $\mathcal{M}\to \Sigma^n$ whose codewords are $(c_1,s_1),\dots,(c_n,s_n)$ for codewords $(c_1,\dots,c_n)\in C$.

  \textbf{Decoding algorithm.}
  For $t\in[T]$, let $z\ind{t} = (x_1\ind{t},y_1\ind{t}),\dots,(x_{n\ind{t}}\ind{t},y_{n\ind{t}}\ind{t})$ be the $t$th trace, which has length $n\ind{t}$.
  Call a trace $z\ind{t}$ for $t\in[T]$ \emph{useful} if $n\ind{t}\ge (1-q')\cdot n$.
  \begin{enumerate}
  \item For every useful trace $z\ind{t}$, run the error-free deletion-only $(n, q')$-indexing algorithm in Theorem~\ref{thm:ss-3} to obtain indices $i_1\ind{t},\dots,i_{n\ind{t}}\ind{t}\in[n]\cup\{\perp\}$.
  \item For $i=1,\dots,n$, if there exists a useful $t\in[T]$ and index $j\in [n\ind{t}]$ such that $i_j\ind{t} = i$, then let $\hat c_i = x_j\ind{t}$. Otherwise, let $\hat c_i = \perp$.
  \item Run the erasure decoding for $C_1$ on the string $(\hat c_1,\dots,\hat c_n)$ to obtain a message in $\mathcal{M}$.
  \end{enumerate}

  \textbf{Efficiency.} 
    The code $C_1$ and synchronization string can each be constructed in polynomial time.
    Since $C_1$ has linear time encoding, so does our code.
    Decoding takes time $O(n\log \frac{1}{\varepsilon})$:
    the indexing algorithm for synchronization strings takes linear time by Theorem~\ref{thm:ss} and we run it $T$ times, and decoding the code $C_1$ from the resulting erasures takes linear time by Proposition~\ref{prop:gi}.

  \textbf{Rate.}
  The rate of the code $C_1$ is at least $1-\frac{\varepsilon}{2}$, so there are $|\Sigma_C|^{n(1-\frac{\varepsilon}{2})} = |\Sigma|^{n(1-\frac{\varepsilon}{2})\cdot \frac{\log|\Sigma_C|}{\log|\Sigma|}} \ge |\Sigma|^{n(1-\varepsilon)}$ codewords.
  The inequality follows as $\frac{\log|\Sigma_C|}{\log|\Sigma|} > 1 - \frac{\varepsilon}{2}$
  Hence, the rate of $C$ is at least $1-\varepsilon$.

  \textbf{Analysis.}
  First, the probability that some trace is not useful is equal to the probability that a binomial $B(n,1-q)$ is at most $(1-q')n = \frac{1-q}{2}n$, which, by the Chernoff bound, is at most $e^{-(1-q)n/8}$.
  Thus, the probability that there exists a trace that is not useful is, by the union bound, at most $T\cdot e^{-(1-q)n/8}\le 2^{-\Omega(n)}$.

  For all useful $t\in[T]$, $z\ind{t}$ is obtained from applying at most $q' n$ deletions to $c$.
  Thus, the $(n,q')$ indexing-algorithm in Theorem~\ref{thm:ss-3} succeeds with at most $\frac{\eta}{1-\eta}\cdot nq' < 2\eta n$ misdecodings.
  Hence, for all $j\in[n\ind{t}]$, we either have $i_j\ind{t} = \perp$ or $j$ is correctly decoded, in which case $x_j\ind{t} = c_{i_j}$.
  We conclude that, for all $i=1,\dots,n$, we either have $\hat c_i = c_i$ or $\hat c_i=\perp$.
  We now simply need to lower bound the number of $\hat c_i$ that are not $\perp$.
  If every trace is useful, for each index $i$ with $\hat c_i =\perp$, either $(c_i,s_i)$ is deleted in every trace or some trace has a misdecoding at the image of $(c_i,s_i)$.
  The expected number of symbols $(c_i,s_i)$ deleted in every trace is $q^Tn$, so by the Chernoff bound~\ref{eq:cher-2}, the probability that there are more than $2q^Tn$ symbols deleted in every trace is $2^{-\Omega_q(n)}$.
  Across all traces, the total number of misdecodings is at most $T\cdot 2\eta n$ by above.
  Thus, with probability at least $1-2^{-\Omega_q(n)}$, there are at most $2q^Tn+2T\eta n < \frac{\varepsilon^3}{40}n$ indices $i$ with $\hat c_i = \perp$.
  Hence, as the code $C_1$ tolerates $\frac{\varepsilon^3}{40}\cdot n$ errors (and thus erasures), we decode our message correctly. 
\end{proof}

\subsection{Lower bound}

\newcommand{\EC}{\textnormal{EC}}
\newcommand{\DC}{\textnormal{DC}}
\begin{proof}[Proof of Theorem~\ref{thm:large-lb}]
For brevity, let $\DC_q$ denote the deletion channel with deletion probability $q$.
Let $\EC_q$ denote the erasure channel with erasure probability $q$. 
That is $\EC_q$ takes an input string and independently with probability $q$ replaces each symbol with the symbol `?'.

We show that a $(T,q,o(1))$ trace reconstructible code over the $\DC_q$ is a code for $\EC_{q^T}$ with block error probability $o(1)$.
To do this, we show that we can turn an output of $\EC_{q^T}$ into $T$ independent outputs of $\DC_q$.
From a single symbol sent over $\EC_{q^T}$, one can produce $T$ independent copies of the symbol sent across $\EC_q$: if the output is an erasure, return $T$ erasures, and if the output is the original symbol, return the output of $T$ independent copies of the symbol over $\EC_q$, conditioned on not all outputs being erasures.
Using the above, from a single output from $\EC_{q^T}$, one symbol at a time, produce $T$ independent outputs over $\EC_q$, and replace the erasures with deletions to obtain $T$ independent outputs over $\DC_q$, as desired.
Since the capacity of $\EC_{q^T}$ is $1-q^T$ (see e.g. \cite{Shannon48}), we have that our code cannot be $(T,q,o(1))$ trace reconstructible when $1-\varepsilon > 1- q^T$, i.e. $T < \log_{1/q}\frac{1}{\varepsilon}$.
\end{proof}

\section{Upper bound on traces for binary codes}
\label{sec:binary-ub}

In this section, we prove Theorem~\ref{thm:main}.

\subsection{Proof sketch}\label{sec:bin-sketch}

As the proof of Theorem~\ref{thm:main} is involved, we start with a sketch of the proof.
Throughout this proof sketch, fix $q$ to be some constant between 0 and 1.
We prove Theorem~\ref{thm:main} when $n$ is any sufficiently large multiple of a constant (the constant is $n_R+n_S = \Theta_q(\frac{1}{\varepsilon}\log\frac{1}{\varepsilon})$ in the proof).
To extend to all sufficiently large $n$ we simply pad the beginning of codewords in an existing code with 0s.

The proof uses concatenation on top of the construction for Theorem~\ref{thm:large-ub}.
Recall that the code in Theorem~\ref{thm:large-ub} is obtained by ``zipping'' codewords $r_1,\dots,r_n\in\Sigma_R^n$ from a high-rate error correcting code with a fixed synchronization string $s_1,\dots,s_n\in\Sigma_S^n$, where $|\Sigma_R|\ge |\Sigma_S|^{\Omega(1/\varepsilon)}$.
We call the elements of $\Sigma_R$ \emph{content symbols} and the elements of $\Sigma_S$ \emph{synchronization symbols}.

\textbf{A first attempt.}
Naively, we could concatenate the code in Theorem~\ref{thm:large-ub} of rate $1-\Theta(\varepsilon)$ over the large alphabet $\Sigma_R\times \Sigma_S$ with binary code $C_R$ with encoding $\Enc_{R}:\Sigma_R\times \Sigma_S\to \{0,1\}^{n_{R}}$ of length $n_{R} = \tilde \Theta(\frac{1}{\varepsilon})$ and rate $1-\Theta(\varepsilon)$ that is decodable from $T=\exp(O(\log^{1/3}\frac{1}{\varepsilon}))$ traces, giving a concatenated code of rate $1-\Theta(\varepsilon)$ (such a code exists by \cite{HPP18}).
In this way, the binary codewords are of the form $\Enc_R(r_1,s_1)||\cdots||\Enc_R(r_n,s_n)$.
Then, perhaps, from $T$ traces, we could run the trace reconstruction algorithm for $C_R$ to recover guesses $(\hat r_i,\hat s_i)$ for $(r_i,s_i)$, and then run the outer decoding to correct any errors/insertions/deletions.

\textbf{The problem and the fix.}
The problem with the above approach is that we need to recover the synchronization information of the inner codewords \emph{before} we run the inner trace reconstruction algorithm: we do not know, for instance, where the trace of $\Enc_{R}(r_1,s_1)$ ends and the trace of $\Enc_{R}(r_2,s_2)$ starts.
To fix this, we need the following key idea: separately encode the content symbol $r_i$ and the synchronization symbol $s_i$.
Further, in order to ensure that the encoded content bits and the encoded synchronization bits are not confused, we ensure that (1) the encoded synchronization bits only have long runs and (2) the encoded content bits are relatively dense in both 0s and 1s in every small interval (with the exception of one long run at the beginning and end of the string).
This yields the encoding of $(r_1,s_1),\dots,(r_n,s_n)$ depicted below.
\newcommand\encodeDiagram{
  \underbrace{
    \underbrace{0\ldots0}_{m\text{-bit buffer}}
    \underbrace{\text{interior of $a_1$}}_{n_{R}-2m\text{ bits}}
    \underbrace{1\ldots1}_{m\text{-bit buffer}}
  }_{a_1=\Enc_R(r_1)}
  \:\Big|\Big|\:
  \underbrace{
    \underbrace{0\ldots0}_{k_1\in\{m,2m\}}\underbrace{1\ldots1}_{\ell_1}\cdots \underbrace{0\ldots0}_{k_K}\underbrace{1\ldots1}_{\ell_K}
  }_{\text{$b_1=\Enc_S(s_1)$: $2K$ long runs}}
  \:\Big|\Big|\:
  \cdots
  \:\Big|\Big|\:
  \Enc_R(r_n)
  \:\Big|\Big|\:
  \Enc_S(s_n)
}
\begin{align}
  \encodeDiagram
\end{align}

\textbf{Code construction sketch.}
We take our outer error correcting code $C_{out}$ to have length $n_{out}$, rate $1-\Theta(\varepsilon)$, tolerate a $\Theta(\varepsilon^3)$ fraction of worst-case substitution errors, and with alphabet $\Sigma_R$ equal in size to the number of codewords in $C_R$, which we may take to be a power of two by arbitrarily throwing out at most half of the codewords.
Such a code exists by Proposition~\ref{prop:gi}.
We think of $n_{out}$ as growing and all other parameters as fixed.
We take a synchronization string $s_1,\dots,s_{n_{out}}$ with constant synchronization parameter $\eta=\Theta(1)$.
We take the length of the encoding $\Enc_R(r_i)$ of $r_i$ to be $\Theta(\frac{1}{\varepsilon}\log\frac{1}{\varepsilon})$, and the length of the encoding $\Enc_S(s_i)$ to be $\Theta(\log\frac{1}{\varepsilon})$, so the rate is at least $1-\varepsilon$.
We ensure that the encoding of $r_i$ is recoverable from $T$ traces with failure probability at most $O(\varepsilon^{100})$.
We also ensure that each encoded word of $r_i$ is $m$-protected in the sense of Definition~\ref{def:protect}.
The average-case trace reconstruction results of Holden, Pemantle, and Peres \cite{HPP18} implies that such a code exists (see Lemma~\ref{lem:short}).
We also ensure that the encoding of $s_i$ is recoverable from \emph{one} trace of the $\BDC_q$ with failure probability $\varepsilon^{100}$. 
Note that, since the synchronization parameter $\eta$ is a constant, we have that $|\Sigma_S|$ is a constant, so such a code $C_S$ with encoding $\Enc_S:\Sigma_S\to \{0,1\}^{\Theta(\log(1/\varepsilon))}$ for the $\BDC_q$ exists (see Lemma~\ref{lem:bdc}).

\textbf{Decoding algorithm sketch.}
Our decoding algorithm is depicted in Figure~\ref{fig:2} and divides into three steps.
\begin{enumerate}
\item (Trace alignment) For each $t\in[T]$, for all $i\in[n_{out}]$, determine an estimate $\widehat{\tau\ind{t}(a_i)}$ for the bits from the $i$th content symbol 
\item (Inner trace reconstruction) For $i\in[n_{out}]$, run the trace reconstruction for the code $C_R$ on $\widehat{\tau\ind{1}(a_i)},\dots,\widehat{\tau\ind{T}(a_i)}$ to recover an estimate for $\hat r_i$.
\item (Outer error correction) Run the error correction for $C_{out}$ on the estimates $\hat r_1,\dots,\hat r_{n_{out}}$.
\end{enumerate}
\begin{figure}
  \label{fig:2}
  \begin{center}
    \usetikzlibrary{calc}
    \begin{tikzpicture}
      \tikzstyle{caption}=[font=\small,text width=6.2em,minimum height=1em,align=left]
      \tikzstyle{block}=[font=\small,text width=4em,minimum height=1em,align=center]
      \tikzstyle{error1}=[fill=red,fill opacity=0.4]
      \tikzstyle{error2}=[fill=red,fill opacity=0.4]
      \tikzstyle{error3}=[fill=blue,fill opacity=0.4]
      \foreach \i in {0,...,6} {
        \foreach \j in {-1,...,6} {
          \coordinate (\i-\j) at (\i*2,-\j);
        }
      }
      \foreach \j in {0,...,6} {
          \coordinate (-1-\j) at (-2.5,-\j);
      }

      \node[caption] at (-1-0) (00) {Step 1 $(t=1)$:};
      \node[block,draw] at (0-0) (00) {$\widehat{\tau\ind{1}(a_1)}$};
      \node[block,draw,error1] at (1-0) (10) {$\widehat{\tau\ind{1}(a_2)}$};
      \node[block     ] at (2-0) (20) {$\cdots$};
      \node[block,draw] at (3-0) (30) {$\widehat{\tau\ind{1}(a_i)}$};
      \node[block     ] at (4-0) (40) {$\cdots$};
      \node[block,draw] at (5-0) (50) {$\widehat{\tau\ind{1}(a_{n_{out}})}$};

      \node[caption] at (-1-1) (00) {Step 1 $(t=2)$:};
      \node[block,draw] at (0-1) (01) {$\widehat{\tau\ind{2}(a_1)}$};
      \node[block,draw] at (1-1) (11) {$\widehat{\tau\ind{2}(a_2)}$};
      \node[block     ] at (2-1) (21) {$\cdots$};
      \node[block,draw,error2] at (3-1) (31) {$\widehat{\tau\ind{2}(a_i)}$};
      \node[block     ] at (4-1) (41) {$\cdots$};
      \node[block,draw] at (5-1) (51) {$\widehat{\tau\ind{2}(a_{n_{out}})}$};

      \node[caption] at (-1-2) (00) {Step 1 $(t=2)$:};
      \node[block,draw] at (0-2) (02) {$\widehat{\tau\ind{3}(a_1)}$};
      \node[block,draw] at (1-2) (12) {$\widehat{\tau\ind{3}(a_2)}$};
      \node[block     ] at (2-2) (22) {$\cdots$};
      \node[block,draw] at (3-2) (32) {$\widehat{\tau\ind{3}(a_i)}$};
      \node[block     ] at (4-2) (42) {$\cdots$};
      \node[block,draw] at (5-2) (52) {$\widehat{\tau\ind{3}(a_{n_{out}})}$};

      \node[block     ] at (0-3) (03) {$\vdots$};
      \node[block     ] at (1-3) (13) {$\vdots$};
      \node[block     ] at (2-3) (23) {$\vdots$};
      \node[block     ] at (3-3) (33) {$\vdots$};
      \node[block     ] at (4-3) (43) {$\vdots$};
      \node[block     ] at (5-3) (53) {$\vdots$};

      \node[caption] at (-1-4) (00) {Step 1 $(t=T)$:};
      \node[block,draw] at (0-4) (04) {$\widehat{\tau\ind{T}(a_1)}$};
      \node[block,draw] at (1-4) (14) {$\widehat{\tau\ind{T}(a_2)}$};
      \node[block     ] at (2-4) (24) {$\cdots$};
      \node[block,draw] at (3-4) (34) {$\widehat{\tau\ind{T}(a_i)}$};
      \node[block     ] at (4-4) (44) {$\cdots$};
      \node[block,draw] at (5-4) (54) {$\widehat{\tau\ind{T}(a_{n_{out}})}$};

      \coordinate (x1) at ($(50.north west)+(-0.2,0.15)$);
      \coordinate (x2) at ($(50.north east)+(0.2,0.15)$);
      \coordinate (x3) at ($(54.south east)+(0.2,-0.15)$);
      \coordinate (x4) at ($(54.south west)+(-0.2,-0.15)$);
      \draw[thick,blue]  (x1) -- (x3);
      \draw[thick,blue]  (x2) -- (x4);

      \node[caption    ] at (-1-5) (-25) {Step 2:};
      \node[block,draw] at (0-5) (05) {$\hat r_1$};
      \node[block,draw,error1] at (1-5) (15) {$\hat r_2$};
      \node[block     ] at (2-5) (25) {$\cdots$};
      \node[block,draw,error2] at (3-5) (35) {$\hat r_i$};
      \node[block     ] at (4-5) (45) {$\cdots$};
      \node[block,draw,] at (5-5) (55) {$\hat r_{n_{out}}$};
      \draw[very thick,blue] (55.south west) -- (55.north east);
      \draw[very thick,blue] (55.south east) -- (55.north west);

      \node[caption    ] at (-1-6) (-16) {Step 3:};
      \node[draw,text width=25em,align=center] at (5,-6) (m) {Output in $\mathcal{M}$};
      \draw[thick] (-4.5,-4.55) -- (11.5,-4.55);

      \draw[->] (05.south) -- (m);
      \draw[->] (15.south) -- (m);
      \draw[->] (35.south) -- (m);
      \draw[->] (55.south) -- (m);

    \end{tikzpicture}
  \end{center}
  \caption{Decoding: Inner trace reconstruction and outer error correction.
  Index $i$ is incorrect only if (i) some trace $t$ incorrectly guesses the image of $a_i$ (shaded red), or (ii) the inner trace reconstruction procedure $\Dec_{R}$ fails (X-ed out blue).}
\end{figure}
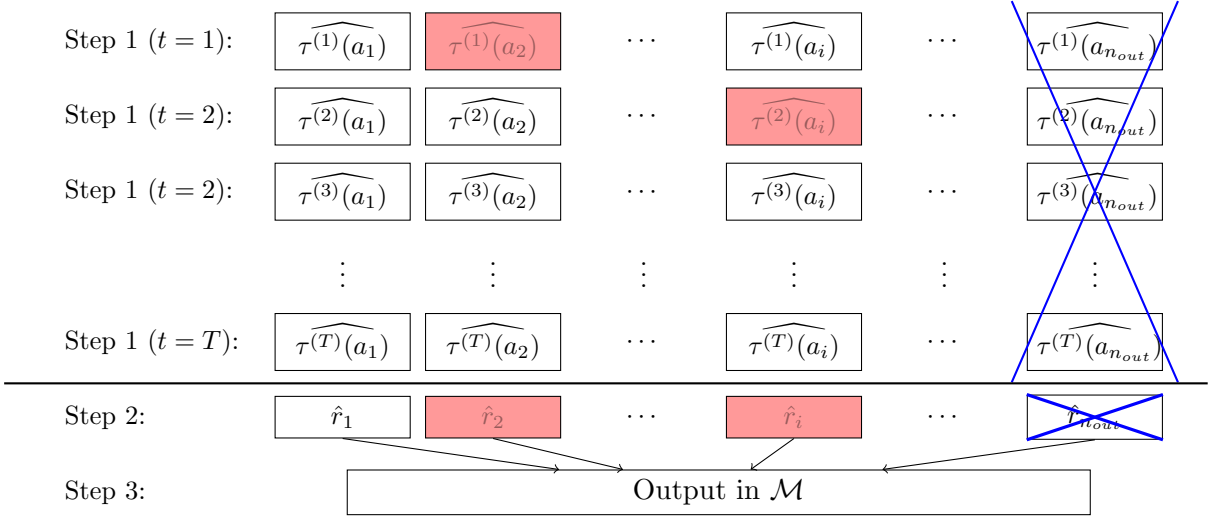

\textbf{Decoding analysis sketch.}
Let $a_i = \Enc_{R}(r_i)$ and $b_i = \Enc_S(s_i)$ be the binary encodings of the $i$th content symbol and $i$th synchronization symbol, respectively.
We call $a_i$ a \emph{content block} and $b_i$ a \emph{synchronization block}.
For $t\in[T]$ and $i\in[n_{out}]$, let $\tau\ind{t}(a_i)$ and $\tau\ind{t}(b_i)$ denote the images of the $i$th content symbol and $i$th synchronization symbol, respectively, in the $t$th trace.

The key to the analysis is that, by using the indexing algorithm for synchronization strings, with high probability for every trace $t$, the estimates of all but a $O(\varepsilon^{100})$ fraction of the images $\widehat{\tau\ind{t}(a_i)}$ are \emph{exactly} correct, i.e. satisfy $\tau\ind{t}(a_i) = \widehat{\tau\ind{t}(a_i)}$.
This is because, with high probability, in each trace, we can find $(1-O(\varepsilon^{100}))n_{out}$ pairs of strings $(x_j\ind{t},y_j\ind{t})$ equal to some $(\tau\ind{t}(a_i),\tau\ind{t}(b_i))$ by simply scanning the trace.
Then, from the substrings $\tau\ind{t}(b_i)$, we can recover a $1-O(\varepsilon^{100})$ fraction of the synchronization symbols $s_i$.
Using the synchronization symbols, we run the indexing algorithm for the synchronization string to match the pairs $(x_j\ind{t},y_j\ind{t})$ to the correct index $i\in[n_{out}]$, so that, in each trace, $1-O(\varepsilon^{100})$ fraction of the pairs are indexed correctly.
This produces $(1-O(\varepsilon^{100}))n_{out}$ accurate estimates $\widehat{\tau\ind{t}(a_i)}$ in every trace.

If the above holds, by the union bound, for all but a $O(T\cdot \varepsilon^{100})\le O(\varepsilon^{99})$ (recall $T=\exp(O_q(\log^{1/3}(\frac{1}{\varepsilon})))=\varepsilon^{-o(1)}$, and assume $\varepsilon$ is sufficiently small) fraction of indices $i\in[n_{out}]$, the image of the inner codeword $a_i$ is correctly determined in \emph{every trace}.
Among these indices $i\in[n_{out}]$, we expect the inner trace reconstruction algorithm to fail on a $O(\delta_{R})$ fraction, and for the rest of these indices,
\begin{align}
  \hat r_i 
  = \Dec_{in}(\widehat{\tau\ind{1}(a_i)},\dots,\widehat{\tau\ind{T}(a_i)})
  = \Dec_{in}(\tau\ind{1}(a_i),\dots,\tau\ind{T}(a_i))
  = r_i.
\end{align}
Thus, the fraction of indices $i\in[n_{out}]$ for which $\hat r_i\neq r_i$ is $O(\gamma T + \delta_{R}) = O(\varepsilon^{99})$ with high probability, which, by our choice of parameters, is less than the fraction of substitution errors tolerated by our outer code.
Hence, the outer error correction succeeds with high probability, as desired.
We point out that all of the big-Os in the outer error fractions do not have a dependence on $q$, so the parameters of the outer code do not need to depend on $q$.

\subsection{Construction}

First we define the code.

\textbf{Parameters}.\footnote{We make no attempt to optimize the constants in the proof.}
Let $\beta=\frac{10^4}{(1-q)^3}$, let $\tilde\varepsilon_0 = \tilde \varepsilon_0(\beta, q)$ be given by Lemma~\ref{lem:short}.
Let $n_{R} \defeq \floor{10^4\beta\frac{1}{\varepsilon}\log(\frac{1}{\varepsilon})}$. This is the length of our inner codeword (``R'' for ``reconstruction'').
Let $T  \defeq T\rand_{q,6\beta}(n_{R})$. This is the number of traces we use.
Let $\delta_{R} \defeq n_{R}^{-3\beta}$. This is an upper bound on the failure probability of the inner code's reconstruction algorithm.
By Theorem~\ref{thm:avg-tr}, $T \le \exp(O(\log^{1/3}n_{R})) < \varepsilon^{o(1)}$.
Thus, for $\varepsilon$ sufficiently small, we have (i) $T < \frac{1}{\varepsilon}$, (ii) $\varepsilon < \beta^{-1}$, and (iii) $\varepsilon < \tilde \varepsilon_0$. 
For the rest of the proof, assume $\varepsilon$ is such that all three items hold.

Let $m \defeq \floor{\beta\log n_{R}}$. This is the size of a ``buffer''.
Let $m'\defeq \frac{1}{2}(1-q)m$. This is the threshold for deciding whether a run in a trace is interpreted as a buffer or not.
Let $\eta \defeq \frac{1}{3}$ be the synchronization parameter.
Let $K\defeq 20$. This is the number of bits in a synchronization symbol.
Let $n_S \defeq 60m$. This is the number of bits in an encoded synchronization symbol.
Let $\delta_S\defeq 6K \cdot 2^{-(1-q)m/40}$. This is an upper bound on the probability a synchronization symbol is decoded correctly.
Let $\gamma \defeq 2^{-(1-q)m/80}$. This is an upper bound on the probability an inner codeword is ``incorrectly parsed'' (defined below).
In this way, $\gamma T < \varepsilon^{100}$.
Let $\delta_{out}=\frac{1}{50000}\varepsilon^3$ be a bound on the outer code's error tolerance.
Throughout this analysis we think of $q,\varepsilon,\beta,m,m',K,n_R,n_S, \delta_{R}, \delta_S, \delta_{out}, T, \gamma,\eta$ as fixed, and $n_{out}$, the length of the outer code defined below, as growing.

\begin{figure}
\begin{center}
\def\arraystretch{1.2}
  \begin{tabular}[width=\textwidth]{|c|c|c|p{9cm}|}
    \hline
    Parameter  & Value & $\lim_{\varepsilon\to 0}$ & Description \\
    \hline
    $q$ & & & Deletion probability  \\\hline
    $\varepsilon$ & & & Constructed code has rate $1-\varepsilon$  \\\hline
    $\beta$ & $\frac{10^4}{(1-q)^3}$ & $\Theta_q(1)$ & Large constant  \\\hline
    $n_{R}$ & $\floor{10^4\beta\frac{1}{\varepsilon}\log(\frac{1}{\varepsilon})}$ & $\tilde \Theta_q(\frac{1}{\varepsilon})$ & Content block length  \\\hline
    $T$ & $T\rand_{q,6\beta}(n_{R})$ & $\varepsilon^{-o_q(1)}$ & Number of traces used \\\hline
    $\delta_{R}$ & $n_{R}^{-3\beta}$ & $O(\varepsilon^{100})$ &  Upper bound on inner code $C_R$'s trace reconstruction failure probability \\\hline
    $m$ & $\floor{\beta\log n_{R}}$  & $\Theta_q(\log \frac{1}{\varepsilon})$ & Size of a buffer \\\hline
    $m'$ & $\frac{1}{2}(1-q)m$ & $\Theta_q(\log\frac{1}{\varepsilon})$ & Threshold for interpreting an output run as a buffer \\\hline
    $\eta$    & $\frac{1}{3}$ & $\Theta(1)$ & Synchronization parameter \\\hline
    $K$ & $20$ & $\Theta(1)$ & Number of bits in synchronization symbol: $|\Sigma_S| = 2^K$ \\\hline
    $n_S$ & $60m$ & $\Theta_q(\log\frac{1}{\varepsilon})$ & Number of bits in encoded synchronization symbol \\\hline
    $\delta_S$ & $6K \cdot 2^{-(1-q)m/40}$ & $O(\varepsilon^{100})$ & Upper bound on inner code $C_S$'s decoding failure probability\\\hline
    $\gamma $ & $2^{-(1-q)m/80}$ & $O(\varepsilon^{100})$ & Upper bound on probability content block is ``incorrectly parsed'' \\\hline
    $n_{out}$ & $\to \infty$ & $\to\infty$ & Outer code length \\ \hline
    $\delta_{out}$ & $\frac{1}{50000}\varepsilon^3$ & $\Omega(\varepsilon^3)$ & Lower bound on the outer code's error tolerance \\\hline
  \end{tabular}
\end{center}
\end{figure}

\textbf{Inner codes.}
By our choice of parameters, $\beta\ge 150$, $\varepsilon < \tilde \varepsilon_0$, $n_{R}\ge 8\beta\frac{1}{\varepsilon}\log\frac{1}{\varepsilon}$, and $T=T\rand_{q,6\beta}(n_R)$.
By Lemma~\ref{lem:short} there exists a code $C_R$ of length $n_R$ and rate $1-\frac{\varepsilon}{2}$ with message set $\Sigma_R$ and encoding function $\Enc_{R}:\Sigma_R\to \{0,1\}^{n_R}$ all of whose codewords are $m$-protected (as $m=\floor{\beta \log n_{R}}$), and that is $(T,q,\delta_{R})$ trace reconstructible.
By removing at most half of the codewords (arbitrarily), we may assume the alphabet size $|\Sigma_R|$ is a power of 2, and the rate is at least $1-\frac{\varepsilon}{2} - \frac{1}{n_{R}}$.
Let the corresponding decoding function be $\Dec_R:(\{0,1\}^*)^T\to \Sigma_R$.

Let $\eta=\frac{1}{3}$, and let $s_1,\dots, s_n$ be an $\eta$-synchronization string of length $n$ over alphabet $\Sigma_S$ of size $2^{K}$: such strings exist by Theorem~\ref{thm:ss} and are constructible in polynomial time.
By Lemma~\ref{lem:bdc}, there exists a code $C_S$ with encoding function $\Enc_S:\Sigma_S\to \{0,1\}^{n_S}$ and decoding function $\Dec_S:\{0,1\}^*\to \Sigma_S$ that is decodable under the BDC$_q$ with failure probability at most $\delta_S$, all of whose codewords start with a 0, end with a 1, and have runs of length exactly $m$ or $2m$.

\textbf{Outer code.}
Let $C_{out}:\mathcal{M}\to \Sigma_R^{n_{out}}$ be a code of length $n_{out}$ and rate $1-\frac{\varepsilon}{10}$ over the alphabet $\Sigma_R$ correcting a $\frac{(\frac{\varepsilon}{10})^2}{500\log\frac{10}{\varepsilon}} > \delta_{out}$ fraction of worst-case errors, given by Proposition~\ref{prop:justesen}.

\textbf{Encoding.}
Our encoding is as follows. 
\begin{enumerate}
\item Take a message and encode it with $C_{out}$ to obtain symbols $r_1,\dots,r_n\in\Sigma_R$.
\item Let $a_i = \Enc_R(r_i)$ and $b_i=\Enc_S(s_i)$. We call $a_i$ a \emph{content block} and $b_i$ a \emph{synchronization block}.
\item Concatenate $c=a_1||b_1||a_2||b_2||\cdots||a_n||b_n$.
\end{enumerate}
\begin{align}
\encodeDiagram
\end{align}

\textbf{Length and Rate.}
The length is $n_{out}\cdot (n_{R}+n_S)$.
The outer code rate is $(1-\frac{\varepsilon}{10})$, the inner code $C_R$ rate is $1-\frac{\varepsilon}{2} - \frac{1}{n_{R}}$, and the synchronization symbols multiply the rate by $1-\frac{n_S}{n_{R}+n_S} > 1 - \frac{60\beta\log n_{R}}{10^4\beta\frac{1}{\varepsilon}\log\frac{1}{\varepsilon}} = 1 - \frac{\varepsilon}{10}$.
The total rate is thus at least $(1-\frac{\varepsilon}{10})(1-\frac{\varepsilon}{2} - \frac{1}{n_R})(1-\frac{\varepsilon}{10}) > 1-\varepsilon$.

\textbf{Decoding.}
Let $z\ind{1},\dots,z\ind{T}$ be the traces.
We use the following notation for the ``Trace Alignment'' step of the decoding below.
The crucial elements of the Trace Alignment step's analysis are given in Definition~\ref{def:parse}, Lemma~\ref{lem:good-2}, and Lemma~\ref{lem:good-1}.
For every trace, call a (maximal) run of length greater than $m'$ a \emph{decoded buffer}.
Call every bit in a decoded buffer a \emph{decoded buffer bit}, and call all other bits \emph{decoded content bits}.
For every trace, define a \emph{decoded content block} to be a substring of the form $0^{t_0}w1^{t_1}$, where the first $t_0$ 0s and the last $t_1$ 1s each form a decoded buffer, and $w$ is a nonempty string of decoded content bits. Note in particular that $w$ must start with a $1$ and end with a $0$. If two decoded content blocks overlapped, say $0^{t_0}w1^{t_1}$ and $0^{t_0'}w'1^{t_1'}$, then $0^{t_0}$ is the same run of bits as $0^{t_0'}$, because $w$ does not consist of any decoded buffers. Likewise, $1^{t_1}$ is the same run of bits as $1^{t_1'}$ so $w = w'$. Therefore, any two decoded content blocks are disjoint.

We can thus enumerate the decoded content blocks of a trace $t$ in order $x_1\ind{t},\dots,x_{n\ind{t}}\ind{t}$, where $n\ind{t}$ is the number of decoded content blocks in trace $t$.
For each decoded content block $x_j\ind{t}$, we define the associated \emph{decoded synchronization block} $y_j\ind{t}$ as the substring between $x_j\ind{t}$ and $x_{j+1}\ind{t}$ (or the end of the string, if $k=n\ind{t}$).\footnote{Note that there may be some bits at the beginning of the string that are neither in decoded content blocks nor in decoded synchronization blocks.}
Our decoding algorithm is as follows.

\begin{enumerate}
\item (Trace alignment) For each trace $t\in[T]$, compute $\widehat{\tau\ind{t}(a_1)},\dots,\widehat{\tau\ind{t}(a_{n_{out}})}$ as follows:
  \begin{enumerate}
    \item Compute the decoded content blocks $x\ind{t}_1,\dots,x\ind{t}_{n\ind{t}}$ of $z\ind{t}$ along with their associated decoded synchronization blocks $y\ind{t}_1,\dots,y\ind{t}_{n\ind{t}}$.
    \item For all $j\in[n\ind{t}]$, decode a synchronization symbol $\hat s_j\ind{t} \defeq \Dec_S(y\ind{t}_{j})\in\Sigma_S$ from the decoded synchronization block. 
    \item From the string $\hat s_1\dots \hat s_{n\ind{t}}$, obtain indices $\hat i_1\ind{t},\dots,\hat i_{n\ind{t}}\ind{t}$ using the $(n,13\gamma)$ indexing algorithm in Theorem~\ref{thm:ss-2}.
    \item For $j=1,\dots,n\ind{t}$, let $\widehat{\tau\ind{t}(a_{\hat i_j})}\defeq x\ind{t}_{j}$, and let $\widehat{\tau\ind{t}(a_i)} = \perp$ for $i\notin \{\hat i_1,\dots,\hat i_{n\ind{t}}\}$.
    Here, $\widehat{\tau\ind{t}(a_i)}$ is a string denoting our guess for the image of $a_i$ in the $t$th trace.
  \end{enumerate}
\item (Inner trace reconstruction) For $i=1,\dots,n$, let $\hat r_i = \Dec_{R}(\widehat{\tau\ind{1}(a_i)},\dots,\widehat{\tau\ind{T}(a_i)})\in \Sigma_R$.
\item (Outer error correction) Run $\Dec_{out}(\hat r_1,\dots,\hat r_n)$ to obtain a message in $\mathcal{M}$.
\end{enumerate}

\textbf{Run time.} 
The encoding runs in time $O(n^2)$: the outer encoding runs in time $O(n^2)$, and each of the $O(n)$ inner encodings runs in time $O_\varepsilon(1)$. 

The decoding runs in $O_\varepsilon(n^4)$ time.
Determining the decoded content blocks and decoded synchronization blocks can be done in linear time $O(T\cdot n_{out}) = O(n_R\cdot n_{out}) \le O(n)$.
Here, we used that $T < n_R$ as $\varepsilon$ is sufficiently small, so in particular, this decoding time has no dependence on $q$.
The code $C_S$ is decodable from deletions in linear time by Lemma~\ref{lem:bdc}, so computing all synchronization symbols $\tilde s_j\ind{t}$ takes $O(n)$ time.
The indexing step for the synchronization string takes time $O(n_{out}^4)$ in each trace, so all indexing steps take time $O(T\cdot n_{out}^4)\le O(n^4)$.
Thus, the entire trace alignment step takes time $O(n^4)$.
Each inner trace reconstruction step takes $n_R^{1+o(1)}$ time \cite{HPP18}, so the entire inner trace reconstruction step (Step 2) takes $n_R^{1+o(1)}\cdot n_{out} \le n^{1+o(1)}$ time by running the decoder for \cite{HPP18}.
The outer error correction (Step 3) runs in time $O(n^2)$.
The total decoding time is thus $O(n^4)$.

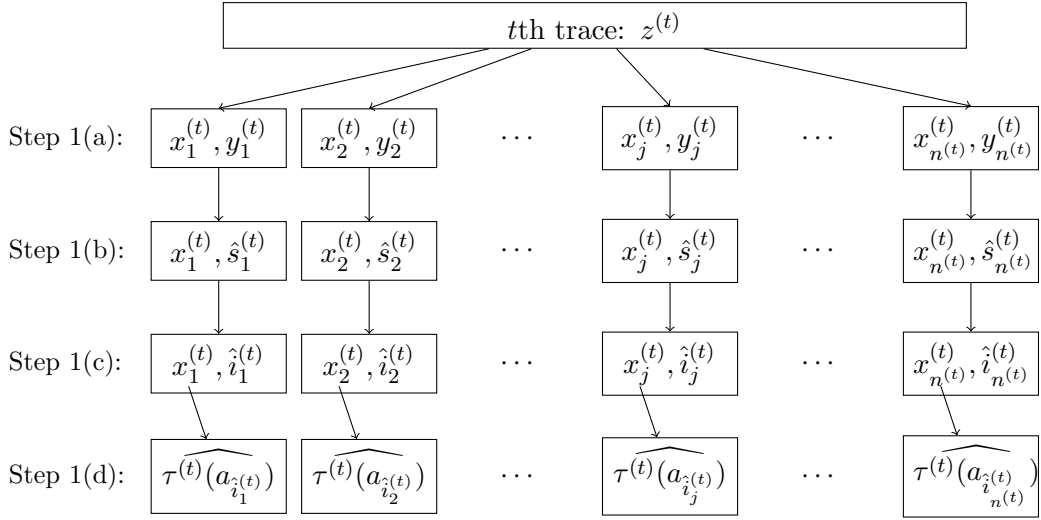
\begin{figure}
\begin{center}

\begin{tikzpicture}
  \tikzstyle{caption}=[font=\small,text width=4.1em,minimum height=1em,align=left]
  \tikzstyle{block}=[text width=4em,minimum height=1em,align=center]
  \foreach \i in {-1,...,6} {
    \foreach \j in {-1,...,6} {
      \coordinate (\i-\j) at (\i*2,-\j*1.5);
    }
  }
  \node[draw,text width=25em, align=center] at (5,1.5) (z) {$t$th trace: $z\ind{t}$};

  \node[caption   ] at (-1-0) (00) {Step 1(a):};
  \node[block,draw] at (0-0) (00) {$x_1\ind{t},y_1\ind{t}$};
  \node[block,draw] at (1-0) (10) {$x_2\ind{t},y_2\ind{t}$};
  \node[block     ] at (2-0) (20) {$\cdots$};
  \node[block,draw] at (3-0) (30) {$x_j\ind{t},y_j\ind{t}$};
  \node[block     ] at (4-0) (40) {$\cdots$};
  \node[block,draw] at (5-0) (50) {$x_{n\ind{t}}\ind{t},y_{n\ind{t}}\ind{t}$};

  \node[caption  ] at (-1-1) (-11) {Step 1(b):};
  \node[block,draw] at (0-1) (01) {$x_1\ind{t},\hat s_1\ind{t}$};
  \node[block,draw] at (1-1) (11) {$x_2\ind{t},\hat s_2\ind{t}$};
  \node[block,    ] at (2-1) (21) {$\cdots$};
  \node[block,draw] at (3-1) (31) {$x_j\ind{t},\hat s_j\ind{t}$};
  \node[block,    ] at (4-1) (41) {$\cdots$};
  \node[block,draw] at (5-1) (51) {$x_{n\ind{t}}\ind{t},\hat s_{n\ind{t}}\ind{t}$};

  \node[caption  ] at (-1-2) (-12) {Step 1(c):};
  \node[block,draw] at (0-2) (02) {$x_1\ind{t},\hat i_1\ind{t}$};
  \node[block,draw] at (1-2) (12) {$x_2\ind{t},\hat i_2\ind{t}$};
  \node[block,    ] at (2-2) (22) {$\cdots$};
  \node[block,draw] at (3-2) (32) {$x_j\ind{t},\hat i_j\ind{t}$};
  \node[block,    ] at (4-2) (42) {$\cdots$};
  \node[block,draw] at (5-2) (52) {$x_{n\ind{t}}\ind{t},\hat i_{n\ind{t}}\ind{t}$};

  \node[caption  ] at (-1-3) (-13) {Step 1(d):};
  \node[block,draw] at (0-3) (03) {$\widehat{\tau\ind{t}(a_{\hat i_1\ind{t}})}$};
  \node[block,draw] at (1-3) (13) {$\widehat{\tau\ind{t}(a_{\hat i_2\ind{t}})}$};
  \node[block,    ] at (2-3) (23) {$\cdots$};
  \node[block,draw] at (3-3) (33) {$\widehat{\tau\ind{t}(a_{\hat i_j\ind{t}})}$};
  \node[block,    ] at (4-3) (43) {$\cdots$};
  \node[block,draw] at (5-3) (53) {$\widehat{\tau\ind{t}(a_{\hat i_{n\ind{t}}\ind{t}})}$};

  \draw[->] (z) -- (00.north);
  \draw[->] (z) -- (10.north);
  \draw[->] (z) -- (30.north);
  \draw[->] (z) -- (50.north);

  \draw[->] (00) -- (01);
  \draw[->] (10) -- (11);
  \draw[->] (30) -- (31);
  \draw[->] (50) -- (51);

  \draw[->] (01) -- (02);
  \draw[->] (11) -- (12);
  \draw[->] (31) -- (32);
  \draw[->] (51) -- (52);

  \draw[->] (0-2)+(-0.4,-0.3) -- (03);
  \draw[->] (12)+(-0.4,-0.3) -- (13);
  \draw[->] (32)+(-0.4,-0.3) -- (33);
  \draw[->] (52)+(-0.4,-0.3) -- (53);
\end{tikzpicture}

\end{center}
\caption{Decoding: Trace alignment}
\label{fig:1}
\end{figure}

\subsection{Analysis}

\subsubsection{Notation}
Throughout this analysis, we think of all the bits of $c,z\ind{1},\dots,z\ind{t}$ as distinct.
We may informally refer to $z\ind{t}$ as ``trace $t$''.
Let $\tau\ind{t}$ denote the $t$th \emph{deletion pattern}, i.e.\ a map from bits in the codeword $c$ to bits in $t$th trace $z\ind{t}$.\footnote{Formally, $\tau\ind{t}$ is an injective and surjective partial function from the bits of $c$ to the bits of $z\ind{t}$, such that the undeleted bits of $c$ form the domain of $\tau\ind{t}$, and these bits are mapped in order to the bits of $z\ind{t}$.}
In this way, $\tau\ind{t}(c) = z\ind{t}$, and if $c'$ is a substring of $c$, then $\tau\ind{t}(c')$ is a substring of $z\ind{t}$.
Throughout the analysis, if $w$ and $w'$ are substrings of a trace $z\ind{t}$, we write $w=\ind{t}w'$ to indicate that they are the same substring of trace $z\ind{t}$.

\subsubsection{Correctly parsed indices}

\begin{definition}[Spurious buffer]
  We say that a \emph{spurious buffer} of a trace $z\ind{t}$ is a decoded buffer that is a substring of the image of a content block's interior, i.e. a substring of $\tau\ind{t}(a_i^\circ)$ for some $i$.
\end{definition}

\begin{definition}
  \label{def:parse}
  For $t\in[T], i\in\{0,\dots,n_{out}+1\}$, we say index $i$ is \emph{intact in trace $t$} if $i=0$, $i=n+1$, or all of the following hold:
  \begin{enumerate}
  \item At least $m'$ of the $m$ leading 0s of $a_i$ are not deleted in $\tau\ind{t}$
  \item At least $m'$ of the $m$ trailing 1s of $a_i$ are not deleted in $\tau\ind{t}$
  \item $\tau\ind{t}(a_i)$ has no spurious buffers. 
  \item The image of all runs of $b_i$ under $\tau\ind{t}$ have length at least $m'$. 
  \item $\Dec_{S}(\tau\ind{t}(b_i))\neq s_i$ 
  \end{enumerate}
  For $t\in[T]$, call an index $i\in\{1,\dots,n_{out}\}$ \emph{correctly parsed in trace $t$} if indices $i-1,i$, and $i+1$ are all intact in trace $t$, and \emph{incorrectly parsed in trace $t$} otherwise.
\end{definition}
Note that the event ``$i$ is intact in trace $t$'' depends only on the images of $a_i$ and $b_i$ under $\tau\ind{t}$, and hence all such events are independent.
The following lemma justifies the terminology in Definition~\ref{def:parse}.
\begin{lemma}
  If index $i\in[n_{out}]$ is correctly parsed in trace $t\in[T]$, there exists an index $j\in[n\ind{t}]$ such that $\tau\ind{t}(a_i)=\ind{t}x_j\ind{t}$, and $\tau\ind{t}(b_i)=\ind{t}y_j\ind{t}$.
  \label{lem:good-2}
\end{lemma}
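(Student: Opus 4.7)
The plan is a purely structural argument using the five conditions in Definition~\ref{def:parse}. I will first establish that $\tau\ind{t}(a_i)$ appears in the trace $z\ind{t}$ as a decoded content block, and then identify the next decoded content block to pin down $y_j\ind{t}$.

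First, I argue that the bit of $z\ind{t}$ immediately preceding $\tau\ind{t}(a_i)$ is a $1$ and the bit immediately following is a $0$. The former uses that $i-1$ is intact: by Condition~4 applied to $b_{i-1}$, every run of $b_{i-1}$ has image of length at least $m'>0$, and since $b_{i-1}$ ends in a $1$-run (codewords of $C_S$ end with $1$), the last bit of $\tau\ind{t}(b_{i-1})$ is $1$. The latter follows similarly from $i$ intact and the fact that $b_i$ starts with a $0$-run. Consequently the leading $0$-run of $\tau\ind{t}(a_i)$ and its trailing $1$-run are each maximal runs of $z\ind{t}$.

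Next, I show $\tau\ind{t}(a_i)$ has the form $0^{p}w1^{q}$ required of a decoded content block. Conditions~1 and~2 on $i$ give $p\ge m'$ and $q\ge m'$, because any $0$s or $1$s from $a_i^\circ$ that happen to merge into the boundary runs only increase $p$ or $q$; hence both boundary runs are decoded buffers. The middle portion $w$ lies inside $\tau\ind{t}(a_i^\circ)$, and by Condition~3 (no spurious buffers) no substring of $\tau\ind{t}(a_i^\circ)$ is a decoded buffer, so every bit of $w$ is a decoded content bit; by maximality of the boundary runs $w$ starts with $1$ and ends with $0$, and the balanced-density property of an $m$-protected interior together with Conditions~1--3 guarantees $w$ is nonempty. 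Thus $\tau\ind{t}(a_i)$ is exactly a decoded content block; let $j$ be its index in the left-to-right enumeration, so $x_j\ind{t}=\ind{t}\tau\ind{t}(a_i)$.

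Finally, the same argument applied to $i+1$ (intact, with the adjacent conditions on $b_i$ and the first run of $b_{i+1}$ supplied by $i$ and $i+1$ intact) shows $\tau\ind{t}(a_{i+1})$ is a decoded content block as well. Condition~4 applied to $b_i$ says every run of $\tau\ind{t}(b_i)$ is a decoded buffer, so no bit of $\tau\ind{t}(b_i)$ is a decoded content bit, ruling out any decoded content block lying partially or wholly inside $\tau\ind{t}(b_i)$. Hence $\tau\ind{t}(a_{i+1})=\ind{t}x_{j+1}\ind{t}$, and by definition $y_j\ind{t}$, the substring of $z\ind{t}$ between $x_j\ind{t}$ and $x_{j+1}\ind{t}$, is exactly $\tau\ind{t}(b_i)$. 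The main subtlety is the boundary bookkeeping---ensuring that merging a few interior $0$s or $1$s into the boundary runs does not invalidate their decoded-buffer status, and that $w$ is nonempty---but Conditions~1--3 together with the $m$-protected structure of $a_i^\circ$ handle this cleanly.
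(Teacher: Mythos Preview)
Your proof is correct and follows essentially the same approach as the paper's: both argue that $\tau\ind{t}(a_i)$ is flanked by a $1$ on the left and a $0$ on the right (via property~4 for $b_{i-1}$ and $b_i$), has decoded buffers at each end (properties~1--2) and no further decoded buffers inside (property~3), then repeat for $a_{i+1}$ and use property~4 on $b_i$ to conclude these are consecutive decoded content blocks. Your write-up is in fact slightly more explicit than the paper's about the boundary bookkeeping (merging of interior bits into the buffer runs) and the nonemptiness of $w$, which the paper leaves implicit.
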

\begin{proof}
  First, we show that, since indices $i-1$ and $i$ are intact in trace $t$, the image $\tau\ind{t}(a_i)$ of content block $a_i$ forms a decoded content block in $z\ind{t}$.
  The image of the substring $b_{i-1}||a_i||b_i$ of our codeword $c$ under the $t$h deletion pattern $\tau\ind{t}$ is $\tau\ind{t}(b_{i-1})||\tau\ind{t}(a_i)||\tau\ind{t}(b_i)$.
  Further, the substring $\tau\ind{t}(b_{i-1})$ ends in a 1 (property 2), the substring $\tau\ind{t}(a_i)$ begins with a 0 and ends with a 1 (properties 1 and 2), and the substring $\tau\ind{t}(b_i)$ starts with a 0 (property 1).
  Hence $\tau\ind{t}(a_i)$ starts with a decoded buffer of 0s (property 1), ends with a decoded buffer of 1s (property 2), and has no other decoded buffers (property 3), so $\tau\ind{t}(a_i)$ is a decoded content block.
  Thus, there exists some $j\in[n\ind{t}]$ such that $\tau\ind{t}(a_i) =\ind{t}x_j\ind{t}$.

  Similarly, since indices $i$ and $i+1$ are intact in trace $t$, the substring $\tau\ind{t}(a_{i+1})$ is a decoded content block by the same reasoning.
  Since index $i$ is intact in trace $t$, all the bits in the image of $b_i$ under $\tau\ind{t}$ are decoded buffer bits (property 4), so there are no decoded content bits between $\tau\ind{t}(a_i)$ and $\tau\ind{t}(a_{i+1})$.
  Thus, $\tau\ind{t}(a_i)$ and $\tau\ind{t}(a_{i+1})$ are consecutive decoded content blocks, so $\tau\ind{t}(b_i) =\ind{t} y_j\ind{t}$, as desired.
\end{proof}
As an immediate corollary, we have the following lemma.
\begin{lemma}
  Let $t\in[T]$.
  If there are at least $k$ indices $i\in[n_{out}]$ that are correctly parsed in trace $t$, then the sequences of pairs $(\tau\ind{t}(a_1),s_1),\dots,(\tau\ind{t}(a_n),s_n)$ and $(x_1\ind{t}, \hat s_1\ind{t}),\dots,(x_{n\ind{t}}\ind{t},\hat s_{n\ind{t}}\ind{t})$  have a common subsequence of length $k$.
  \label{lem:good-3}
\end{lemma}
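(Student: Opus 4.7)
The plan is to apply Lemma~\ref{lem:good-2} index-by-index to produce, for each of the $k$ correctly parsed indices $i$, a matching position $j_i \in [n\ind{t}]$, and then to argue that the resulting pairs coincide and are selected in strictly increasing order in both sequences. Concretely, by Lemma~\ref{lem:good-2}, for every correctly parsed $i$ there exists $j_i \in [n\ind{t}]$ with $\tau\ind{t}(a_i) =\ind{t} x_{j_i}\ind{t}$ and $\tau\ind{t}(b_i) =\ind{t} y_{j_i}\ind{t}$. I would then reduce the lemma to verifying two claims: (a) the pair $(\tau\ind{t}(a_i), s_i)$ equals $(x_{j_i}\ind{t}, \hat s_{j_i}\ind{t})$, and (b) the map $i \mapsto j_i$ is strictly increasing across the correctly parsed indices.

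For (a), the first coordinates agree immediately from Lemma~\ref{lem:good-2}. For the second, unwind the definition of $\hat s_{j_i}\ind{t}$ from the decoding description:
\begin{align}
\hat s_{j_i}\ind{t} \defeq \Dec_S\!\bigl(y_{j_i}\ind{t}\bigr) = \Dec_S\!\bigl(\tau\ind{t}(b_i)\bigr) = s_i,
\end{align}
where the middle equality is the second assertion of Lemma~\ref{lem:good-2} and the last is property 5 of being intact (interpreting the condition in Definition~\ref{def:parse} as the successful decoding $\Dec_S(\tau\ind{t}(b_i)) = s_i$).

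For (b), I would use that the deletion pattern $\tau\ind{t}$ is order-preserving, so the substrings $\tau\ind{t}(a_1), \tau\ind{t}(a_2), \ldots$ appear disjointly and left-to-right in $z\ind{t}$ in the same order as $a_1, a_2, \ldots$ appear in $c$. The decoded content blocks $x_1\ind{t}, \ldots, x_{n\ind{t}}\ind{t}$ are also enumerated left-to-right in $z\ind{t}$ by construction. Hence if $i < i'$ are both correctly parsed, then $x_{j_i}\ind{t} = \tau\ind{t}(a_i)$ lies strictly to the left of $x_{j_{i'}}\ind{t} = \tau\ind{t}(a_{i'})$ in $z\ind{t}$, forcing $j_i < j_{i'}$.

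Combining (a) and (b), the $k$ correctly parsed indices yield $k$ positions $j_{i_1} < j_{i_2} < \cdots < j_{i_k}$ with matching pairs, which is exactly a common subsequence of length $k$ of the two listed sequences. There is no real obstacle here: Lemma~\ref{lem:good-2} does the structural heavy lifting, and the only content-bearing step is the monotonicity check in (b), which follows directly from $\tau\ind{t}$ being order-preserving together with the left-to-right enumeration of the $x_j\ind{t}$'s.
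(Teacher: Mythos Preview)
Your proposal is correct and follows the same approach as the paper: invoke Lemma~\ref{lem:good-2} to match each correctly parsed $i$ to some $j_i$, then apply $\Dec_S$ to the second coordinate to convert $\tau\ind{t}(b_i) = y_{j_i}\ind{t}$ into $s_i = \hat s_{j_i}\ind{t}$. You are actually more explicit than the paper in spelling out the monotonicity step (b) and in noting that property~5 of Definition~\ref{def:parse} should be read as $\Dec_S(\tau\ind{t}(b_i)) = s_i$ (the paper's statement has a typo with $\neq$).
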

\begin{proof}
  By Lemma~\ref{lem:good-2}, the sequences of pairs of strings $(\tau\ind{t}(a_1),\tau\ind{t}(b_1)),\dots,(\tau\ind{t}(a_n),\tau\ind{t}(b_n))$ and $(x_1\ind{t}, y_1\ind{t}),\dots,(x_{n\ind{t}}\ind{t},y_{n\ind{t}}\ind{t})$ have a common subsequence of length $k$, namely the subsequence corresponding to the $k$ correctly parsed pairs $(\tau\ind{t}(a_i),\tau\ind{t}(b_i))$.
  The result follows as $\Dec_{in}(\tau\ind{t}(b_i)) = s_i$ and $\Dec_{in}(y_j\ind{t}) = \hat s_j\ind{t}$, so we can apply the $\Dec_{in}$ operator to the second element in each pair of each sequence to obtain the desired result.
\end{proof}

\subsubsection{Bounding incorrectly parsed indices}
The following lemma guarantees that for all $t\in[T]$ and $i\in[n_{out}]$, the string $\tau\ind{t}(a_i)$ has no spurious buffers with high probability.
\begin{lemma}
  For any $t\in[T],i\in[n_{out}]$, the expected number of spurious buffers in $\tau\ind{t}(a_i)$ is at most $e^{-(1-q)m/40}$.
  \label{lem:good-0}
\end{lemma}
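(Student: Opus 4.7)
The plan is to bound the expected number of spurious buffers via linearity. For each position $j$ in the interior $a_i^\circ$, let $F_j$ be the event that position $j$ survives the deletion channel and, among the surviving positions of $a_i^\circ$ strictly after $j$, the first $m'$ all carry the bit value $b := a_i^\circ[j]$. Any spurious buffer --- a maximal run of length $>m'$ in $\tau\ind{t}(a_i^\circ)$ --- has its first bit in the trace coming from some position $j \in a_i^\circ$ at which $F_j$ holds, so the expected number of spurious buffers is at most $\sum_{j \in a_i^\circ} \Pr[F_j]$.

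To bound a single $\Pr[F_j]$, let $T$ be the position in $a_i^\circ$ of the $m'$-th $b$-survivor after $j$, so $F_j$ is the event that no $\bar b$-bit in $(j,T)$ survives. I split according to whether $T \ge j + m/4$ or $T < j + m/4$. In the first case, every $\bar b$-bit in $[j+1,j+m/4]$ must be deleted; the density condition from Definition~\ref{def:protect} guarantees at least $m/16$ such bits in this length-$m/4$ window, so the probability is at most $q^{m/16} \le e^{-(1-q)m/16}$ using the elementary inequality $q^x \le e^{-(1-q)x}$. In the second case, at least $m'$ $b$-survivors must lie in $[j+1,j+m/4]$; for $q<5/8$ this is vacuous because $m'>3m/16$ while density caps the number of $b$-bits in the window at $3m/16$, and for $q\ge 5/8$ a multiplicative Chernoff bound on $\mathrm{Bin}(3m/16,1-q)$ yields $e^{-\Omega((1-q)m)}$. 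Combining, $\Pr[F_j] \le 2e^{-(1-q)m/16}$ uniformly in $q$. Boundary positions with $n_R - 2m - j < m/4$ are handled by the same density and Chernoff estimates on a shorter window (density on a length-$m/4$ suffix still upper-bounds the number of $b$-bits available); there are at most $O(m)$ such positions, each contributing $e^{-\Omega((1-q)m)}$. Summing, $\sum_j \Pr[F_j] \le 3 n_R \cdot e^{-(1-q)m/16}$, and the choice $\beta = 10^4/(1-q)^3$ forces $(1-q)m = \Theta(\log(n_R)/(1-q)^2)$, so $n_R \cdot e^{-(1-q)m/16} = n_R^{1-\Theta(1/(1-q)^2)} \le e^{-(1-q)m/40}$ with substantial room.

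The main obstacle is making the per-position bound small uniformly in $q \in (0,1)$. A naive density-based bound of the form $q^{m'/3}$ --- arguing that $\Omega(m')$ intervening $\bar b$-bits must be deleted before the $m'$-th $b$-survivor --- degrades to $\exp(-\Theta((1-q)^2 m))$ when $q$ is close to $1$, which alone would be too weak to beat the target $\exp(-(1-q)m/40)$. The fix is to replace the $\Theta(m')$-length window by a fixed length-$m/4$ window: this guarantees $\Omega(m)$ many $\bar b$-bits (rather than only $\Omega(m')$), yielding a $q$-uniform exponent of $(1-q)m/16$ that is then comfortably absorbed by the generous $\beta = 10^4/(1-q)^3$ built into $m$.
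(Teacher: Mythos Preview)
Your proof is correct and follows essentially the same strategy as the paper: both associate to each spurious buffer a window of length $\lfloor m/4\rfloor$ in $a_i^\circ$, then combine the density guarantee of $m$-protected strings (at least $m/16$ bits of each parity in any such window) with a Chernoff tail to get a per-window bound of $O(e^{-(1-q)m/16})$, and finally sum over $O(n_R)$ windows and absorb the $n_R$ factor using $m=\lfloor\beta\log n_R\rfloor$. The only difference is bookkeeping: the paper union-bounds over fixed length-$\lfloor m/4\rfloor$ substrings, calling one a ``spurious buffer indicator'' when either all bits of one parity are deleted or at least $m'$ of one parity survive (its two cases corresponding to whether the buffer's preimage has length $\ge m/4$ or $<m/4$), whereas you index by the first surviving position $j$ of the buffer and split on the stopping time $T$ --- your two cases $T\ge j+m/4$ versus $T<j+m/4$ map directly onto the paper's two, and yield the identical estimates.
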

\begin{proof}
  Call a substring of a content block's interior $a_i^\circ$ \emph{spurious buffer indicator} if it has length exactly $\floor{\frac{m}{4}}$ and at least one of the following occur:
  \begin{enumerate}
  \item All of the 0s are deleted.
  \item All of the 1s are deleted.
  \item At least $m'$ of the 0s are not deleted.
  \item At least $m'$ of the 1s are not deleted.
  \end{enumerate}
  Consider a spurious buffer of 0s.
  Let $c'$ denote the minimal substring of $c$ containing the spurious buffer's preimage.
  In $c'$, all the 1s are deleted and at least $m'$ of the 0s are not deleted.
  Hence, any substring or superstring of this preimage $c'$ of length $\floor{\frac{m}{4}}$ is a spurious buffer indicator.
  The same holds for the preimage of a spurious buffer of 1s.
  Hence, we may identify each spurious buffer with a corresponding spurious buffer indicator of $a_i^\circ$ of length $\frac{m}{4}$ (breaking ties arbitrarily).
  Because distinct spurious buffers are disjoint, the spurious buffer indicator are distinct.
  Thus, the number of runs of decoded buffer bits is bounded above by the number of spurious buffer indicator.
  Since $a_i$ is $m$-protected, every substring of length $\floor{\frac{m}{4}}$ has at least $\floor{\frac{m}{16}}$ 0s and at least $\floor{\frac{m}{16}}$ 1s.
  Thus, the probability all the 0s are deleted is at most $q^{-\floor{m/16}} < e^{-(1-q)m/20}$, and the probability all the 1s are deleted is also at most $e^{-(1-q)m/20}$.
  Any run of $\floor{\frac{m}{4}}$ bits has at most $\floor{\frac{m}{4}}$ 0s, and the probability that at least $m'$ 0s are not deleted is bounded above by the probability that the binomial random variable $B(\floor{\frac{m}{4}}, 1-q)$ is at least $m'=\frac{m(1-q)}{2}$, which, by the Chernoff bound \eqref{eq:cher-2} is at most $e^{-(1-q)\floor{m/12}}$.
  The same probability holds for the 1s.
  Hence, the probability that any run of $\floor{\frac{m}{4}}$ bits is a spurious buffer indicator is at most $4e^{-(1-q)m/20}$, so the expected number of spurious buffer indicator, and thus the number of spurious buffers, is at most $4n_{R}\cdot e^{-(1-q)m/20}$ by linearity of expectation, which, by definition of $m$ and $n_{R}$, is at most $e^{-(1-q)m/40}$.
\end{proof}

\begin{lemma}
  \label{lem:good-1}
  For $t\in[T], i\in[n_{out}]$, the probability index $i$ is intact is at least $1-\gamma$.
\end{lemma}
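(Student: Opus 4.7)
The plan is to fix $t\in[T]$ and $i\in[n_{out}]$ and union bound over the five failure events corresponding to the five conditions in Definition~\ref{def:parse} (where in condition 5 I read the intended event as $\Dec_S(\tau\ind{t}(b_i))=s_i$; the stated ``$\neq$'' appears to be a typo, since otherwise intactness could never have probability close to $1$). Since each condition depends only on how $\tau\ind{t}$ acts on the substring $a_i b_i$, we can analyse each in isolation and then add up the bad probabilities.

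First I would handle conditions 1 and 2 (buffer survival on the boundaries of the content block $a_i$). The number of surviving bits among the $m$ leading $0$s is $B(m,1-q)$, whose mean is $(1-q)m = 2m'$. By the Chernoff bound~\eqref{eq:cher-1} applied with $\delta=1/2$, the probability that fewer than $m'$ survive is at most $e^{-(1-q)m/4}$. The same bound handles the trailing $1$s. For condition 4, recall that by Lemma~\ref{lem:bdc}, each of the $2K$ runs of $b_i$ has length exactly $m$ or $2m$, so again the number of surviving bits in any fixed run dominates (stochastically) a $B(m,1-q)$ random variable. A union bound over the $2K=40$ runs gives probability at most $2K\cdot e^{-(1-q)m/4}$ that some run in $b_i$ shrinks below length $m'$. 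Condition 3 is exactly bounded by Lemma~\ref{lem:good-0}: the expected number of spurious buffers in $\tau\ind{t}(a_i)$ is at most $e^{-(1-q)m/40}$, so by Markov the probability of at least one is at most $e^{-(1-q)m/40}$. Condition 5 is controlled by Lemma~\ref{lem:bdc} applied to $C_S$: the decoder $\Dec_S$ fails with probability at most $\delta_S = 6K\cdot 2^{-(1-q)m/40}$.

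Summing these five bad-event probabilities yields
\begin{align}
\Pr[\text{index $i$ not intact in trace $t$}]
\le 2e^{-(1-q)m/4} + 2K\cdot e^{-(1-q)m/4} + e^{-(1-q)m/40} + 6K\cdot 2^{-(1-q)m/40}.
\end{align}
Every term is bounded by $2^{-(1-q)m/40}$ up to a constant factor, so for $m$ sufficiently large (which holds as $\varepsilon$ is sufficiently small, since $m=\floor{\beta\log n_R}$ grows with $1/\varepsilon$) the entire right-hand side is at most $\gamma = 2^{-(1-q)m/80}$, as desired.

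The calculations are essentially routine Chernoff and black-box applications of Lemma~\ref{lem:good-0} and Lemma~\ref{lem:bdc}; the only place that requires a bit of care is verifying that the constant factors ($2K$, the Markov bound for spurious buffers, and the $6K$ in $\delta_S$) can indeed all be absorbed into the slack between $2^{-(1-q)m/40}$ and the target $\gamma=2^{-(1-q)m/80}$. This is where the large constant $\beta=10^4/(1-q)^3$, and hence a sufficiently large $m$, is used.
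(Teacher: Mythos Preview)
Your proposal is correct and follows essentially the same approach as the paper: a union bound over the five conditions of Definition~\ref{def:parse}, with Chernoff for buffer survival (conditions 1, 2, 4), Markov plus Lemma~\ref{lem:good-0} for spurious buffers (condition 3), and the guarantee of Lemma~\ref{lem:bdc} for condition 5, then absorbing all constant prefactors into the slack between $2^{-(1-q)m/40}$ and $\gamma=2^{-(1-q)m/80}$. One minor arithmetic slip: applying \eqref{eq:cher-1} with $\delta=1/2$ and mean $(1-q)m$ gives $e^{-(1-q)m/8}$, not $e^{-(1-q)m/4}$ (this matches the paper's $e^{-m'/4}$ since $m'=(1-q)m/2$); the conclusion is unaffected since $e^{-(1-q)m/8}\le 2^{-(1-q)m/40}$ anyway.
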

\begin{proof}
  We bound the probability each property in Definition~\ref{def:parse} fails.
  \begin{enumerate}
  \item Since $a_i$ begins with $m = \frac{2m'}{1-q}$ leading 0s, the expected number of undeleted 0s among the leading 0s is distributed as the binomial $B(m,1-q)$, which has mean $2m'$.
  Hence, the probability that property 1 fails is at most probability that this binomial is less than $m'$, which, by the Chernoff bound \eqref{eq:cher-1}, is at most $e^{-m'/4}$.
  \item By the same reasoning, the probability that property 2 fails is at most $e^{-m'/4}$.

  \item If property 3 fails, $\tau\ind{t}(a_i)$ has a spurious buffer.
  Since $\Pr[X>0]\le \E[X]$ for all nonnegative integer random variables $X$, the probability $\tau\ind{t}(a_i)$ has a spurious buffer is at most $e^{-(1-q)/40}$ by Lemma~\ref{lem:good-0}.

  \item By construction, $b_j$ has $2K$ runs of length at least $m$.
  Property 4 fails if some run has less than $m'$ non-deleted bits.
  The number of non-deleted bits in a run is distributed as one of the binomials $B(m,1-q)$ or $B(2m,1-q)$.
  Hence, by the Chernoff bound \eqref{eq:cher-1} and union bound, property 4 fails with probability at most $2K \cdot e^{-m'/4}$.

  \item Property 5 fails with probability at most $\delta_S \le 6K\cdot 2^{-(1-q)m/40}$ by the definition of code $C_S$.
  \end{enumerate}
  By the union bound, the probability that index $i$ is not intact in trace $t$ is at most
  \begin{align}
    \underbrace{e^{-m'/4}}_{\text{property 1}}
    + \underbrace{e^{-m'/4}}_{\text{property 2}}
    + \underbrace{4n_{R}\cdot e^{-(1-q)m/12}}_{\text{property 3}}
    &+ \underbrace{2K\cdot e^{-m'/4}}_{\text{property 4}}
      + \underbrace{6K\cdot 2^{-(1-q)m/40}}_{\text{property 5}} \nonumber\\
    \ &\le \ (2+4n_{R}+8K) 2^{-(1-q)m/40}.
        \label{}
  \end{align}
  For our choice of $n_{R}$, we have $2+4n_{R}+8K < 2^{(1-q)m/80}$, so the probability index $i$ is not intact in trace $t$ is at most $2^{-(1-q)m/80} = \gamma$.
\end{proof}
A simple Chernoff bound gives the following corollary.
\begin{corollary}
  The probability that there exists a $t\in[T]$ with more than $6\gamma n_{out}$ incorrectly parsed indices is at most $2^{-\Omega(n_{out})}$.
  \label{cor:good-1}
\end{corollary}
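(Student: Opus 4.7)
\textbf{Proof proposal for Corollary~\ref{cor:good-1}.} The plan is to reduce the ``correctly parsed'' event to a sum of independent indicator variables ``index $i$ intact in trace $t$'' and then apply a Chernoff bound followed by a union bound over the $T$ traces.

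First I would fix $t\in[T]$ and, for each $i\in\{1,\ldots,n_{out}\}$, define the indicator $X_i\ind{t} = \mathbbm{1}[\text{index } i \text{ is not intact in trace } t]$. Crucially, whether index $i$ is intact depends only on the bits of $\tau\ind{t}$ restricted to $a_i$ and $b_i$ (see Definition~\ref{def:parse}), and the blocks $a_i, b_i$ for distinct $i$ occupy disjoint positions in $c$ with deletions applied independently across positions. Hence for each fixed $t$ the family $\{X_i\ind{t}\}_{i\in[n_{out}]}$ is independent, and by Lemma~\ref{lem:good-1} each $X_i\ind{t}$ has mean at most $\gamma$.

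Next, by the definition of ``correctly parsed,'' an index $i\in[n_{out}]$ is incorrectly parsed in trace $t$ only if at least one of indices $i-1, i, i+1$ is not intact (indices $0$ and $n_{out}+1$ are always intact by convention). Each not-intact index can cause at most three consecutive indices to be incorrectly parsed, so the number of incorrectly parsed indices in trace $t$ is at most $3\sum_{i=1}^{n_{out}} X_i\ind{t}$. Applying the Chernoff bound~\eqref{eq:cher-2} with $\delta = 1$ to $\sum_i X_i\ind{t}$ (whose mean is at most $\gamma n_{out}$) gives
\begin{align}
\Pr\Big[\textstyle\sum_{i=1}^{n_{out}} X_i\ind{t} > 2\gamma n_{out}\Big] \le e^{-\gamma n_{out}/3}.
\end{align}
Hence the probability that trace $t$ contains more than $6\gamma n_{out}$ incorrectly parsed indices is at most $e^{-\gamma n_{out}/3}$.

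Finally, I would union bound over the $T$ traces: the probability that some $t\in[T]$ has more than $6\gamma n_{out}$ incorrectly parsed indices is at most $T\cdot e^{-\gamma n_{out}/3} = 2^{-\Omega(n_{out})}$, where the $\Omega$-constant may depend on $\gamma$ and $T$ but not on $n_{out}$ (recall both $\gamma$ and $T$ are functions of $\varepsilon, q$ only). There are no real obstacles here; the only thing worth double-checking is the independence of the $X_i\ind{t}$'s for fixed $t$, which follows from Definition~\ref{def:parse} referencing only the deletions within $a_i$ and $b_i$.
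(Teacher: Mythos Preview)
Your proposal is correct and follows essentially the same approach as the paper: define the indicator that index $i$ is not intact in trace $t$, use independence (across $i$ for fixed $t$) together with Lemma~\ref{lem:good-1} to apply the Chernoff bound with threshold $2\gamma n_{out}$, multiply by~$3$ to pass from non-intact to incorrectly parsed indices, and union bound over the $T$ traces. The only cosmetic difference is that the paper asserts independence of all the events $\mathcal{E}_{t,i}$ jointly (across both $t$ and $i$), but since the Chernoff bound is applied per trace this is immaterial.
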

\begin{proof}
  For $t\in[T]$ and $i\in[n_{out}]$, let $\mathcal{E}_{t,i}$ denote the event that index $i$ is not intact in trace $t$.
  The events $\mathcal{E}_{t,i}$ are all independent, and any such event happens with probability at most $\gamma$ by Lemma~\ref{lem:good-1}.
  Hence, by the Chernoff bound \eqref{eq:cher-2} and the union bound, the probability there exists a $t\in[T]$ with more than $2\gamma n_{out}$ events $\mathcal{E}_{t,i}$ occurring is at most $T\cdot 2^{-\gamma n_{out}/3}$.
  Hence, as the number of incorrectly parsed indices $i$ in a trace $t$ is at most 3 times the number of non-intact pairs, the probability that there exists a $t$ with more than $6\gamma n_{out}$ incorrectly parsed pairs is also at most $T\cdot 2^{-\gamma n_{out}/3} = 2^{-\Omega(n_{out})}$.
\end{proof}

\subsubsection{Most traces of content bits are recovered}
Note that a trace may have more than $n_{out}$ decoded content blocks if there are spurious buffers.
The next lemma shows that, with high probability, this does not happen too much.
\begin{lemma}
  \label{lem:cher-2}
  For any $t$, with probability $1-\exp(-\Omega(n_{out}))$, we have $n\ind{t} \le (1+\gamma)n_{out}$.
\end{lemma}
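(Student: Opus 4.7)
The plan is to upper bound $n\ind{t}$ in terms of $n_{out}$ plus a count of rare ``extra'' decoded 0-buffers, and then show the latter count is small via independence and Chernoff.

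\emph{Structural bound.} Every decoded content block in $z\ind{t}$ begins with a decoded 0-buffer (a maximal 0-run of length $> m'$). Each such 0-buffer arises from exactly one of four sources: (a) the image of the leading 0-buffer of some $a_i$ (at most $n_{out}$ candidates); (b) a 0-run of some synchronization block $b_j$; (c) a spurious 0-buffer contained in $\tau\ind{t}(a_i^\circ)$ for some $i$; or (d) a ``spanning'' 0-buffer whose existence requires that all bits of some long 1-run in the codeword (the trailing 1-buffer of some $a_i$ or a 1-run of some $b_j$) be fully deleted in $\tau\ind{t}$. I would rule out case (b) as follows: each 0-run of $b_j$ is immediately followed in the codeword by a 1-run of $b_j$ of length $m$ or $2m$, with no intervening content bits, so the candidate substring $0^{t_0} w 1^{t_1}$ starting at such a 0-buffer has empty $w$, which fails the definition of a decoded content block. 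Writing $S$ and $E$ for the number of 0-buffers of types (c) and (d) respectively, this gives $n\ind{t} \le n_{out} + S + E$.

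\emph{Concentration.} Let $s_i$ be the number of spurious 0-buffers in $\tau\ind{t}(a_i^\circ)$, so $S = \sum_i s_i$. Lemma~\ref{lem:good-0} gives $\E[s_i] \le e^{-(1-q)m/40}$, and since $\tau\ind{t}$ restricted to different content blocks is independent, the $s_i$'s are independent; we also have the trivial bound $s_i \le n_R$. Rescaling $s_i/n_R \in [0,1]$ and invoking the continuous Chernoff bound \eqref{eq:cher-4} with tail threshold $\gamma n_{out}/(2n_R)$, which vastly exceeds the mean $e^{-(1-q)m/40}/n_R$ by our choice of $m = \floor{\beta \log n_R}$ with $\beta = 10^4/(1-q)^3$, yields $\Pr[S \ge \gamma n_{out}/2] \le \exp(-\Omega(n_{out}))$. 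For $E$, each of the $(K+1)n_{out}$ long 1-runs in the codeword is fully deleted with probability at most $q^m \ll \gamma$, independently across runs, and each spanning 0-buffer requires at least one fully deleted long 1-run, so by the Chernoff bound \eqref{eq:cher-2} we get $E \le \gamma n_{out}/2$ with probability $1-\exp(-\Omega(n_{out}))$.

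Combining these, with probability $1-\exp(-\Omega(n_{out}))$ we have $n\ind{t} \le n_{out} + S + E \le (1+\gamma) n_{out}$, as required. The main obstacle is the structural case analysis in step 1, particularly the observation ruling out case (b) and the reduction of case (d) to the event that a long codeword 1-run is fully deleted; once this setup is in place, the Chernoff bounds in step 2 are routine thanks to the independence across content blocks and the large gap between $\gamma$ and the probability of each rare event.
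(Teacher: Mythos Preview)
Your overall strategy---upper bound $n^{(t)}$ by $n_{out}$ plus a count of ``extra'' decoded 0-buffers, then Chernoff-bound the extras---matches the paper's, and your case analysis is in fact more careful than the paper's one-line assertion that $n^{(t)}\le n_{out}+\#(\text{spurious buffers})$. However, your dismissal of case~(b) contains a genuine gap.

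You argue that a 0-run of $b_j$ is ``immediately followed in the codeword by a 1-run of $b_j$ \ldots\ with no intervening content bits, so the candidate substring $0^{t_0} w 1^{t_1}$ \ldots\ has empty $w$.'' But \emph{decoded content bit} is a property of the trace, not the codeword: if that following 1-run survives with only $\ell\in(0,m']$ bits in the trace, those $\ell$ bits are decoded content bits and $w$ is nonempty. Concretely, within $\tau^{(t)}(b_j)$ one can have the pattern (0-run image of length $>m'$)(1-run image of length $\le m'$)(0-run image of length $\le m'$)(1-run image of length $>m'$), which is a bona fide decoded content block lying entirely inside $\tau^{(t)}(b_j)$ and is captured by none of your cases (a)--(d). (The paper's bare assertion is, strictly speaking, vulnerable to the same scenario.)

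The repair is routine and fits your framework. Rather than ruling out case~(b), bound the number of type-(b) 0-buffers that can begin a decoded content block by the number of long 1-runs in the $b_j$'s whose image has length at most $m'$: each type-(b) 0-buffer that starts a content block is immediately followed by such a short 1-run image, and distinct 0-buffers point to distinct 1-runs. For a 1-run of length $m$ or $2m$, the image length is binomial with mean at least $2m'$, so by~\eqref{eq:cher-1} the short-image event has probability at most $e^{-m'/4}\ll\gamma$. There are $Kn_{out}$ such 1-runs and the events are independent, so the same Chernoff argument as in your case~(d) gives that this count is at most $\gamma n_{out}/4$ with probability $1-\exp(-\Omega(n_{out}))$. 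Reallocating the $\gamma n_{out}$ budget among (b), (c), (d) then completes the proof. A minor additional point: your phrase ``arises from exactly one of four sources'' is imprecise (the cases can overlap, e.g.\ a 0-buffer containing bits from the leading $0^m$ of $a_i$ and from $a_i^\circ$), but since you only need an upper bound, ``at least one of'' suffices.
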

\begin{proof}
  The number of decoded blocks is bounded above by $n_{out}$ plus the number of spurious buffers.
  Let $X_i$ denote the number of spurious buffers in block $\tau\ind{t}(a_i)$.
  By the definition of spurious buffer, $X_1,\dots,X_{n_{out}}$ are all independent.
  Additionally, each block has at most $n_{R}$ bits, so it certainly has at most $n_{R}$ spurious buffers.
  Hence, $\frac{X_i}{n_{R}} \in[0,1]$ for all $i$.
  By the Lemma~\ref{lem:good-0}, $\E[\frac{X_i}{n_{R}}] \le \frac{e^{-(1-q)m/40}}{n_R} = \frac{\gamma^2}{n_{R}}$ for all $i=1,\dots,n$.
  Thus, by the Chernoff bound \eqref{eq:cher-4} on the variables $\frac{X_1}{n_{R}},\dots, \frac{X_{n_{out}}}{n_{R}}$, the probability that $X_1+\cdots+X_{n_{out}}\ge \gamma n_{out}$ is at most $2^{-\gamma^2 n_{out}/3} \le 2^{-\Omega(n_{out})}$.
\end{proof}

\begin{lemma}
  Let $t\in[T]$.
  If there are at least $(1-6\gamma)n_{out}$ correctly parsed indices in trace $t$ and if $n\ind{t}\le (1+\gamma)n_{out}$, then there are at least $(1-46\gamma)n_{out}$ indices $i$ such that $\tau\ind{t}(a_i) = \widehat{\tau\ind{t}(a_i)}$.
  \label{lem:health-3}
\end{lemma}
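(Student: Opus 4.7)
The plan is to combine the structural guarantees on correctly parsed indices (Lemmas~\ref{lem:good-2} and \ref{lem:good-3}) with the misdecoding bound for synchronization strings (Theorem~\ref{thm:ss-2}). First, I would use Lemma~\ref{lem:good-3} to conclude that the sequences of synchronization symbols $(s_i)_{i=1}^{n_{out}}$ and $(\hat s_j\ind{t})_{j=1}^{n\ind{t}}$ share a common subsequence of length at least $(1-6\gamma)n_{out}$, arising from the correctly parsed indices.

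Next, I would translate this common-subsequence length into an insertion-deletion distance bound: using $|S_1|+|S_2| - 2k$ and the assumption $n\ind{t}\le (1+\gamma)n_{out}$, the two sync-symbol sequences differ by at most $n_{out} + (1+\gamma)n_{out} - 2(1-6\gamma)n_{out} \le 13\gamma\, n_{out}$ insertions and deletions. This is exactly the input quality tolerated by the $(n_{out}, 13\gamma)$-indexing algorithm used in Step~1(c). Applying Theorem~\ref{thm:ss-2} with $\eta=1/3$ then gives at most $\frac{2 \cdot 13\gamma \,n_{out}}{1-\eta} = 39\gamma\, n_{out}$ misdecoded positions $j\in[n\ind{t}]$.

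The final step is to charge each "bad" index $i$ (one for which $\widehat{\tau\ind{t}(a_i)}\neq \tau\ind{t}(a_i)$) to either an incorrectly parsed index or a misdecoding. By Lemma~\ref{lem:good-2}, for every correctly parsed $i$ there is a unique $j_i\in[n\ind{t}]$ with $x_{j_i}\ind{t}=\tau\ind{t}(a_i)$ and $y_{j_i}\ind{t}=\tau\ind{t}(b_i)$, hence $\hat s_{j_i}\ind{t}=s_i$. If $j_i$ is not a misdecoding, the indexing algorithm assigns $\hat i_{j_i}\ind{t}=i$, so Step~1(d) sets $\widehat{\tau\ind{t}(a_i)}=x_{j_i}\ind{t}=\tau\ind{t}(a_i)$. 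Hence the count of bad indices is at most $6\gamma\, n_{out} + 39\gamma\, n_{out} = 45\gamma\, n_{out} \le 46\gamma\, n_{out}$, which gives the stated bound (with some slack to spare).

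The main obstacle is not conceptual but notational: one has to carefully distinguish between the indices $i\in[n_{out}]$ on the codeword side and the indices $j\in[n\ind{t}]$ on the trace side, and verify that the string matching implicitly constructed from correctly parsed indices is precisely the matching to which Theorem~\ref{thm:ss-2} applies. Once the correspondence $i \leftrightarrow j_i$ is set up cleanly, the proof is essentially a bookkeeping argument combining the two bounds.
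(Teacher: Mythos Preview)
Your proposal is correct and follows essentially the same route as the paper's proof: use Lemma~\ref{lem:good-3} to get a length-$(1-6\gamma)n_{out}$ common subsequence, convert this (together with $n\ind{t}\le(1+\gamma)n_{out}$) into a string matching with at most $13\gamma n_{out}$ insertions/deletions, apply Theorem~\ref{thm:ss-2} with $\eta=\tfrac13$ to bound misdecodings by $\tfrac{2\cdot13\gamma}{1-\eta}n_{out}<40\gamma n_{out}$, and then count good indices as correctly parsed minus misdecoded. The only cosmetic difference is that you compute the misdecoding bound as exactly $39\gamma n_{out}$ while the paper writes $<40\gamma n_{out}$; both land at the stated $(1-46\gamma)n_{out}$.
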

\begin{proof}
  Suppose there are at most $\gamma n_{out}$ incorrectly parsed indices in trace $t$ and also that $n\ind{t}\le (1+\gamma)n$.
  By Lemma~\ref{lem:good-3}, there is a common subsequence between $(\tau\ind{t}(a_1),s_1),\dots,(\tau\ind{t}(a_n),s_n)$ and $(x_1\ind{t}, \hat s_1\ind{t}),\dots,(x_{n\ind{t}}\ind{t},\hat s_{n\ind{t}})$ of length at least $(1-6\gamma)n$.
  Since $n\ind{t}\le (1+\gamma)n$, there exists a string matching between $(\tau\ind{t}(a_1),s_1),\dots,(\tau\ind{t}(a_n),s_n)$ and $(x_1\ind{t}, \hat s_1\ind{t}),\dots,(x_{n\ind{t}}\ind{t},\hat s_{n\ind{t}})$ with at most $6\gamma n$ deletions and at most $7\gamma n$ insertions, for a total of at most $13\gamma n$ insertions or deletions.
  In particular, there are at least $(1-6\gamma)n_{out}$ correctly transmitted indices.
  This string matching gives a corresponding string matching between $s_1,\dots,s_n$ and $\hat s_1\ind{t},\dots,\hat s_{n\ind{t}}\ind{t}$.
  In this string matching, for the correctly transmitted symbols $j\in[n\ind{t}]$, let $i_j\ind{t}$ be such that $\hat s_{j}=s_{i_j\ind{t}}$, so that $x_j\ind{t} = \tau\ind{t}(a_{i_j\ind{t}})$.
  By Theorem~\ref{thm:ss-2}, the $(n,13\gamma)$-indexing algorithm for the $\eta$-synchronization string $s_1,\dots,s_n$ guarantees that there are at most $\frac{2}{1-\eta}\cdot 13\gamma n < 40\gamma n$ misdecodings.
  That is, there are at most $40\gamma n$ correctly transmitted indices $j=1,\dots,n\ind{t}$ such that $\hat i_j\ind{t} \neq i_j\ind{t}$.
  For the other correctly transmitted indices $j\in n\ind{t}$, we have $\tau\ind{t}(a_{i_j\ind{t}}) = x_j\ind{t} = \widehat{\tau\ind{t}(a_{\hat i_j\ind{t}})}$.
  Hence, there are at least $(1-6\gamma)n - 40\gamma n = (1-46\gamma)n$ indices $i$ such that $\tau\ind{t}(a_i) = \widehat{\tau\ind{t}(a_i)}$.
\end{proof}

\subsubsection{Finishing the proof}
We now complete the proof.
The next lemma shows that, with high probability, most of the inner trace reconstructions succeed ``in theory''.
That is, they succeed assuming the trace alignment steps recovered the images of the $a_i$'s successfully in all traces.
\begin{lemma}
  \label{lem:cher-3},
  With probability $1-\exp(-\Omega(n_{out}))$, for all but at most $2\delta_{R}$ fraction of indices $i\in[n_{out}]$, we have $\Dec_{R}(\tau\ind{1}(a_i),\dots,\tau\ind{T}(a_i))=r_i$.
\end{lemma}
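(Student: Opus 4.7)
The plan is to apply a standard Chernoff bound to the events that the inner trace reconstruction fails on individual content blocks. Define, for each $i \in [n_{out}]$, the indicator random variable $X_i$ that equals $1$ if $\Dec_{R}(\tau\ind{1}(a_i),\dots,\tau\ind{T}(a_i)) \neq r_i$ and $0$ otherwise. The goal is to show $\sum_i X_i \le 2\delta_R n_{out}$ with probability $1 - \exp(-\Omega(n_{out}))$.

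First I would establish independence of the $X_i$'s. Since each $\BDC_q$ trace deletes every bit of $c$ independently, and the $T$ traces are drawn independently of one another, the joint distribution of $(\tau\ind{1}(a_i),\dots,\tau\ind{T}(a_i))$ is a function of the deletion coins attached to the bits of $a_i$ in each of the $T$ channel instances. Because the content blocks $a_1,\dots,a_{n_{out}}$ are disjoint substrings of $c$, the collections of coins governing different blocks are disjoint, so $X_1,\dots,X_{n_{out}}$ are mutually independent.

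Second, I would bound each $\E[X_i]$ using the inner code guarantee. The code $C_R$ is $(T,q,\delta_R)$ trace reconstructible (by the choice made in the construction via Lemma~\ref{lem:short}), so for every codeword $a_i = \Enc_R(r_i)$ the reconstruction fails on $T$ independent traces of $a_i$ with probability at most $\delta_R$. Hence $\E[X_i] \le \delta_R$ for every $i$.

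Finally, I would apply the Chernoff bound \eqref{eq:cher-2} with $\delta = 1$ and $\mu = \delta_R$ to the independent indicators, yielding
\begin{align}
  \Pr\left[\sum_{i=1}^{n_{out}} X_i \ge 2\delta_R n_{out}\right] \;\le\; e^{-\delta_R n_{out}/3}.
\end{align}
Because $\delta_R = n_R^{-3\beta}$ depends only on $\varepsilon$ and $q$ and not on $n_{out}$, the right-hand side is $\exp(-\Omega(n_{out}))$, completing the proof. There is no real obstacle here; the statement is a routine concentration bound whose sole purpose is to bookkeep the (small) fraction of inner decoding failures that the outer code $C_{out}$ must later absorb together with the trace-alignment errors handled by Corollary~\ref{cor:good-1} and Lemma~\ref{lem:health-3}.
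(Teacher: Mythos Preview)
Your proposal is correct and matches the paper's proof essentially line for line: define the failure indicators, note their independence and that each has mean at most $\delta_R$ by the $(T,q,\delta_R)$ trace reconstructibility of $C_R$, then apply the Chernoff bound \eqref{eq:cher-2} with $\delta=1$ to get the $e^{-\delta_R n_{out}/3}$ tail. The paper's version is terser but identical in substance.
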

\begin{proof}
  Call an index $i$ \emph{incorrect} if $\Dec_{R}(\tau\ind{1}(a_i),\dots,\tau\ind{T}(a_i))\neq r_i$.
  The probability an index $i$ is incorrect is at most $\delta_{R}$ as $C_R$ is $(T,q,\delta_{R})$ trace reconstructible and $\tau\ind{1}(a_i),\dots,\tau\ind{T}(a_i)$ are independent traces of $a_i$.
  Furthermore, for all $i$, the events that $i$ is incorrect are independent of each other.
  The expected number of incorrect $i$ is at most $\delta_{R}n$, so by the Chernoff bound \eqref{eq:cher-2}, the probability the number of incorrect $i$ is larger than $2\delta_{R}n$ is at most $2^{-\delta_{R}n/3}$, as desired.
\end{proof}

Now we can prove Theorem~\ref{thm:main}.
\begin{proof}[Proof of Theorem~\ref{thm:main}]
  By Corollary~\ref{cor:good-1}, Lemma~\ref{lem:cher-2}, and Lemma~\ref{lem:cher-3}, with probability $1-2^{-\Omega(n)}$, all the following occur:
  \begin{enumerate}
  \item Every trace has at most $6\gamma n_{out}$ incorrectly parsed indices.
  \item For all $t\in[T]$, we have $n\ind{t} \le (1+\gamma)n_{out}$.
  \item All but at most $2\delta_{R}n_{out}$ indices $i\in[n_{out}]$ satisfy $\Dec_{R}(\tau\ind{1}(a_i),\dots,\tau\ind{T}(a_i))=r_i$.
  \end{enumerate}
  We show that, when all of the above occur, the decoding succeeds.
  By Lemma~\ref{lem:health-3} and properties 1 and 2 above, there are at least $(1-46\gamma)n_{out}$ indices $i$ such that $\tau\ind{t}(a_i) = \widehat{\tau\ind{t}(a_i)}$.
  Thus, there are at least $(1-46\gamma T)n_{out}$ indices $i$ such that $\tau\ind{t}(a_i) = \widehat{\tau\ind{t}(a_i)}$ for all $t\in[T]$.
  Hence, by property 3 above, there are at least $(1-46\gamma T - 2\delta_{R})n_{out} > (1-\delta_{out})n_{out}$ indices $i$ such that $\tau\ind{t}(a_i) = \widehat{\tau\ind{t}(a_i)}$ for all $t\in[T]$ and $\Dec_{R}(\tau\ind{1}(a_i),\dots,\tau\ind{T}(a_i))=r_i$.
  For all such indices $i$, we have
  \begin{align}
    \hat r_i
    = \Dec_{R}\left(\widehat{\tau\ind{1}(a_i)},\dots,\widehat{\tau\ind{T}(a_i)}\right)
    = \Dec_{R}\left(\tau\ind{1}(a_i),\dots,\tau\ind{T}(a_i)\right)
    = r_i.
  \end{align}
  As $C_{out}$ tolerates $\delta_{out}n_{out}$ errors, the outer decoding finds the correct message in $\mathcal{M}$, as desired.
\end{proof}

\subsection{Extending to all sufficiently large \texorpdfstring{$n$}{n}}
For some $n_0=\poly\frac{1}{\varepsilon}$, the above construction gives error probability $\delta<\frac{1}{3}$ for all $n\ge n_0$ that are multiples of $n_R+n_S= \Theta(\frac{1}{\varepsilon}\log\frac{1}{\varepsilon})$: synchronization strings exist for all lengths $n_{out}$, the codes in Proposition~\ref{prop:justesen} exist for all lengths $n_{out}$ at least $\Omega(\frac{1}{\varepsilon^3})$, and the overall error probability is bounded by $2^{-\Omega(\gamma^2n)}$, so it suffices to take $n\gg \Omega(\frac{1}{\gamma^2}) = \poly\frac{1}{\varepsilon}$.

To extend to larger values of $n$, we take our constructed code of length $n - (n\mod (n_R+n_S))$ and pad the beginning of all codewords with $(n\mod (n_R+n_S))$ 0s.
This multiplies the rate by at least $1 - \frac{n_R + n_S}{n} < 1 - o(\varepsilon)$, so the rate is still at least $1-\varepsilon$.
Lemma~\ref{lem:good-2} still holds for all indices $i\in[n_{out}]$ except possibly the first ($x_1\ind{t}$ may have some extra 0s padded to $\tau\ind{t}(a_1)$), so the number of incorrect parsed indices is now bounded by $6\gamma n_{out}+1$.
For $n$ (and thus $n_{out}$) a sufficiently large polynomial in $\frac{1}{\varepsilon}$, the total fraction of incorrect (outer) content symbols $\hat r_i \neq r_i$ is still less than $\delta_{out}$ with high probability, so the decoding still succeeds with high probability.

\section{Lower bound on traces for binary codes}
\label{sec:binary-lb}

In this section, we prove the following theorem, which implies Theorem~\ref{thm:lb}.
\begin{theorem}\label{thm:lb-bin}
  Let $q\in(0,1)$ and $\varepsilon<\frac{1}{4}$. 
  Let $m=\floor{\sqrt{\frac{1/\varepsilon}{128\log(1/\varepsilon)}}}$ and $T = T_{q,0}\rand(m) - 1$.
  Then, for all $\delta \in(0,1)$, there exists $n_0=O_{\delta}(1/\eps^{2})$ such that all rate $1-\varepsilon$ codes of length at least $n_0$ are not $(T, q, \delta)$-trace reconstructible.
\end{theorem}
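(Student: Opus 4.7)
The plan is to reduce coded trace reconstruction to coding over a discrete memoryless channel (DMC) and invoke Shannon's strong converse, as sketched in the techniques discussion. For $n = m \cdot n_{out}$, I view a code $C \subseteq \{0,1\}^n$ as a code of blocklength $n_{out}$ over the alphabet $\mathcal{X} = \{0,1\}^m$, and consider the ``block'' DMC $W$ whose output for each $m$-bit input block is $T$ independent $\BDC_q$-traces of that block. The $T$ traces of the whole string through $\BDC_q$ can be simulated from $W$'s output by concatenating the $t$-th trace of each block, so a decoder for the original channel yields a decoder for $W$ with the same error and rate. By Wolfowitz's strong converse for DMCs, it thus suffices to show that the per-channel-use capacity $C_m := \max_\lambda I(X_\lambda, Y_\lambda)$ of $W$ satisfies $C_m < m(1-\varepsilon)$ with a positive gap, since then any rate $1-\varepsilon$ code has error tending to $1$ as $n_{out} \to \infty$, yielding an $n_0 = O_\delta(m) \subseteq O_\delta(1/\varepsilon^2)$.

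To bound $I_\lambda$ uniformly in $\lambda$, I split on how close $\lambda$ is to the uniform distribution $u$ on $\mathcal{X}$. By the choice of $m = \floor{\sqrt{(1/\varepsilon)/(128\log(1/\varepsilon))}}$ and $\varepsilon < 1/4$, one has $m\varepsilon \leq 1/32$, so $c := 2m\varepsilon$ is a small absolute constant. In Case 1, when $H_\lambda(X) \leq m - c$, trivially $I_\lambda \leq H_\lambda(X) \leq m(1-2\varepsilon) < m(1-\varepsilon)$. In Case 2, when $H_\lambda(X) > m - c$, then the KL divergence satisfies $D(\lambda \| u) = m - H_\lambda(X) < c$ (in bits), so Pinsker's inequality gives total variation distance $d_{\mathrm{TV}}(\lambda, u) < 1/6$.

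In Case 2 I apply the trace reconstruction lower bound. By definition of $T = T\rand_{q,0}(m) - 1$, the MAP decoder under uniform input fails with probability $p_e^{u,*} > 1/3$. Since $P_\lambda$ and $P_u$ on $(X,Y)$ differ only in the $X$-marginal, their joint total variation equals $d_{\mathrm{TV}}(\lambda, u) < 1/6$; letting $f_\lambda$ be the MAP decoder under $\lambda$ and using its suboptimality on the uniform joint compared with $f_u$, we get $P_\lambda(f_\lambda(Y) = X) \leq P_u(f_\lambda(Y) = X) + 1/6 \leq P_u(f_u(Y) = X) + 1/6 < 2/3 + 1/6 = 5/6$, i.e., $\E_y[\max_x P_\lambda(X = x \mid Y = y)] < 5/6$. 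Applying the pointwise bound $H(p) \geq -\log \max_i p_i$ to each posterior $P_\lambda(X \mid Y = y)$ and Jensen's inequality on the convex function $-\log$, I conclude $H_\lambda(X|Y) \geq -\log(5/6) = \log(6/5)$. Hence $I_\lambda \leq m - \log(6/5) < m(1-\varepsilon)$, since $m\varepsilon \leq 1/32 < \log(6/5)$. Combining, both cases yield a constant positive gap in $m(1-\varepsilon) - I_\lambda$, and Wolfowitz's converse then gives the desired $n_0$.

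The main obstacle I anticipate is Case 2: Fano's inequality goes the wrong direction (it upper bounds $H(X|Y)$ in terms of the MAP error rather than lower bounds it). The workaround of using the cruder inequality $H(X|Y) \geq -\log \E_y \max_x P(x \mid y)$ with Jensen produces only a universal constant entropy lower bound, but this suffices precisely because the careful choice $m \approx \sqrt{1/(\varepsilon \log(1/\varepsilon))}$ makes $m\varepsilon$ much smaller than this constant. It is also important to invoke the strong converse rather than the weak (Fano-based) converse, so that the conclusion applies to any fixed $\delta < 1$, not only $\delta \to 0$.
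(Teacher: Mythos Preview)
Your argument is correct in substance and follows the same high-level reduction as the paper (view the code over $\mathcal X=\{0,1\}^m$, bound $\max_\lambda I(X_\lambda,Y_\lambda)$ strictly below $m(1-\varepsilon)$, invoke the Wolfowitz strong converse), but your Case~2 is genuinely different from the paper's. The paper splits on the size of $\mathcal X'=\{x:\lambda(x)\ge 1/((m\log m)2^m)\}$; in the ``close to uniform'' case it writes $\lambda$ as a convex combination $\tfrac{1}{2m\log m}\mu+(1-\tfrac{1}{2m\log m})\nu$ with $\mu$ uniform on $\mathcal X'$, pushes the average-case failure probability $>1/8$ through the mixture, and then applies the Tebbe--Dwyer inequality $H(X\mid Y)\ge p_e/2$. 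Your route---split on $H_\lambda(X)$, use Pinsker to get $d_{\mathrm{TV}}(\lambda,u)<1/6$, then compare the MAP success probability under $\lambda$ to that under $u$ via TV, and finally lower bound $H(X\mid Y)$ by $-\log\E_y\max_x P(x\mid y)$ using Jensen---is more elementary and avoids both the mixture trick and Tebbe--Dwyer. The paper's decomposition yields a bound $I\le m-1/(32m\log m)$ that scales explicitly with $m$, whereas your Case~2 gives a fixed constant $\log(6/5)$; both suffice because of the choice of $m$.

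Two small issues to clean up. First, the gap $m(1-\varepsilon)-R_{cap}$ is governed by your Case~1 bound (since $2m\varepsilon<\log(6/5)$), so $\gamma=m\varepsilon$, not an absolute constant; consequently $n_0'=O_\delta(1/(m\varepsilon)^2)$ and $n_0=m\cdot n_0'=O_\delta(1/(m\varepsilon^2))$, which is indeed $O_\delta(1/\varepsilon^2)$ but not $O_\delta(m)$ as you wrote. Second, you only treat $n$ divisible by $m$; the paper handles general $n$ by a short pigeonhole: any rate $1-\varepsilon$ code of length $n$ contains, after fixing a suitable suffix, a rate $>1-2\varepsilon$ subcode of length $m\lfloor n/m\rfloor$, to which the divisible case applies. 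You should add this step.
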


\subsection{Mutual information and Shannon's theorem}

Recall that the entropy of a random variable $X$ is $H(X) \defeq -\sum_{x} \Pr[X = x] \log \Pr[X = x]$.
For two random variables $X$ and $Y$ their conditional entropy of $Y$ given $X$ is defined to be $H(X|Y)\defeq \sum_{y}^{} \Pr[Y=y]\cdot H(X|Y=y)$, where $H(X|Y=y)$ is the entropy of the random variable $X$ given that $Y=y$.
From this, we can define their mutual information $I(X, Y)$ to be $I(X, Y) \defeq H(X) - H(X | Y)$.
A \emph{discrete memoryless channel} has finite input alphabet $\mathcal{X}$ and finite output alphabet $\mathcal{Y}$, and is given by a matrix $w(y|x)$, denoting, for each $x\in \mathcal{X}$, a distribution over received symbols $y\in\mathcal{Y}$.
With $w$, any probability distribution over $\mathcal{X}$ gives a joint distribution on $\mathcal{X},\mathcal{Y}$.

Given a discrete memoryless channel $w$, we say a code $C\subset \mathcal{X}^n$ is \emph{decodable with failure probability at most $\delta$} if there exists a map $f:\mathcal{Y}^n\to \mathcal{X}^n$ such that, for all $x_1\cdots x_n\in C$, we have
\begin{align}
  \Pr_{y_i\sim w(\cdot|x_i)}\left[f(y_1,\dots,y_n)\neq x_1\cdots x_n\right] \le \delta.
\end{align}
We need the following result, which provides a strong converse to Shannon's noisy channel coding theorem \cite{Shannon48}.
\begin{theorem}[e.g. Theorem 3.3.1 of \cite{Wolfowitz78}]
  \label{thm:wolfowitz}
  Let $w(\cdot|\cdot)$ define a discrete memoryless channel with inputs $\mathcal{X}$ and outputs $\mathcal{Y}$.
  Let 
  \begin{align}
    \label{eq:cap}
    R_{cap}\defeq \max_{p(x)} I(X,Y),
  \end{align}
  where the maximum is taken over probability distributions on $\mathcal{X}$, and let $\gamma>0$.
  Then, for all $\delta\in(0,1)$, there exists $n_0 = O_\delta(\frac{1}{\gamma^2})$ such that, for all $n\ge n_0$ there do not exist codes of rate $\frac{R_{cap}+\gamma}{\log|\mathcal{X}|}$ decodable with failure probability at most $\delta$ under the channel $w(\cdot|\cdot)$.\footnote{The quantity $R$ is often referred to as the \emph{capacity} of the channel}\footnote{Typically the normalizing term $\frac{1}{\log|\mathcal{X}|}$ is not present when stating Shannon's capacity theorem. This is because the ``rate'' used in Shannon capacity is often defined as $\frac{\log|C|}{n}$, whereas the rate for us is defined as $\frac{\log|C|}{n\log|\mathcal{X}|}$.}
\end{theorem}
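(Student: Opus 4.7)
The plan is to apply Theorem~\ref{thm:wolfowitz} to a derived memoryless block channel. Given a code $C\subset\{0,1\}^n$ of rate $1-\varepsilon$, assume for simplicity $m\mid n$ (padding handles the remainder) and partition each codeword into $n/m$ consecutive $m$-bit blocks, so that $C$ becomes a length-$(n/m)$, rate-$(1-\varepsilon)$ code over the alphabet $\mathcal{X}=\{0,1\}^m$. The key observation is that $T$ independent traces of each block separately strictly dominate $T$ traces of the full concatenated string (the former can simulate the latter), so it suffices to show that $C$ is not decodable with failure probability $\le\delta$ over the memoryless block channel $w$ that sends each block through $\BDC_q$ exactly $T$ times independently.

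Let $R_{cap}=\max_\lambda I(X_\lambda,Y_\lambda)$ denote the capacity of $w$, where $X_\lambda\in\mathcal{X}$ is drawn from $\lambda$ and $Y_\lambda\in(\{0,1\}^*)^T$ is its $T$-tuple of traces. I will show $R_{cap}\le m-1/72$ by casework on the total variation distance between $\lambda$ and the uniform distribution $u$ on $\{0,1\}^m$. If $d_{\mathrm{TV}}(\lambda,u)\ge 1/12$, Pinsker's inequality gives $\DKL(\lambda\|u)\ge 2(1/12)^2=1/72$, so using $H(\lambda)=m-\DKL(\lambda\|u)$ we get $I(X_\lambda,Y_\lambda)\le H(X_\lambda)\le m-1/72$. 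If instead $d_{\mathrm{TV}}(\lambda,u)<1/12$, I use that $T=T\rand_{q,0}(m)-1$ means every decoder has error strictly exceeding $1/3$ on uniform input. For any decoder $D$ we have $|P_e^\lambda(D)-P_e^u(D)|\le 2d_{\mathrm{TV}}(\lambda,u)$; applying this to $D=\mathrm{MAP}_\lambda$ and observing $P_e^u(\mathrm{MAP}_\lambda)\ge P_e^u(\mathrm{MAP}_u)>1/3$ gives $P_e^\lambda(\mathrm{MAP}_\lambda)>1/3-1/6=1/6$. Combining the min-entropy inequality $\max_x p(x\mid y)\ge 2^{-H(X_\lambda\mid Y_\lambda=y)}$ with Jensen's inequality on the convex function $t\mapsto 2^{-t}$ yields $1-P_e^\lambda(\mathrm{MAP}_\lambda)=\E_Y[\max_x p(x\mid Y)]\ge 2^{-H(X_\lambda\mid Y_\lambda)}$, so $H(X_\lambda\mid Y_\lambda)\ge\log(6/5)>1/72$ and hence $I(X_\lambda,Y_\lambda)\le m-\log(6/5)\le m-1/72$.

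The definition of $m$ yields $m\varepsilon\le\sqrt{\varepsilon/(128\log(1/\varepsilon))}\le\sqrt\varepsilon/16$, which for $\varepsilon$ below an absolute constant satisfies $m\varepsilon+1/144\le 1/72$, so $R_{cap}+1/144\le m(1-\varepsilon)$. Theorem~\ref{thm:wolfowitz} with $\gamma=1/144$ then rules out rate-$(1-\varepsilon)$ codes over $\mathcal{X}$ whenever $n/m\ge n_0'(\delta)=O_\delta(1)$, i.e.\ $n\ge O_\delta(m)=O_\delta(1/\sqrt{\varepsilon\log(1/\varepsilon)})$, well within the claimed $n_0=O_\delta(1/\varepsilon^2)$. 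The main obstacle is Case 2: standard Fano only upper bounds $H(X|Y)$ in terms of $P_e$, so knowing $P_e>1/6$ does not \emph{directly} produce a lower bound $H(X|Y)\ge\log(6/5)$. Instead, the min-entropy/Jensen argument above converts the lower bound on MAP error into a lower bound on conditional entropy, which is exactly what is needed to upper bound the mutual information and invoke Shannon's converse.
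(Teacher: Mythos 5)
Your proposal does not prove the statement in question: it \emph{invokes} Theorem~\ref{thm:wolfowitz} as a black box (``The plan is to apply Theorem~\ref{thm:wolfowitz} to a derived memoryless block channel''), which is circular if the goal is to establish Theorem~\ref{thm:wolfowitz} itself. What you have actually written is an argument for the downstream application --- essentially Theorem~\ref{thm:lb-bin} together with a variant of Lemma~\ref{lem:lb-2}, where the paper bounds $I(X_\lambda,Y_\lambda)$ by splitting on whether $\lambda$ is close to uniform (the paper uses a mass-threshold decomposition and Lemma~\ref{lem:td68} where you use total variation, Pinsker, and a min-entropy/Jensen step; your route there is a legitimate alternative). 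But none of that addresses the statement at hand, which is the strong converse to Shannon's noisy channel coding theorem for a \emph{general} discrete memoryless channel: that any code of rate exceeding $\frac{R_{cap}+\gamma}{\log|\mathcal{X}|}$ must have decoding failure probability exceeding $\delta$ once $n\ge O_\delta(1/\gamma^2)$.

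The paper itself does not prove this theorem either --- it is cited from \cite{Wolfowitz78} (Theorem 3.3.1 there) as an external result. A genuine proof would require a strong-converse argument for DMCs, e.g.\ Wolfowitz's counting argument over conditional type classes, an information-spectrum/Chebyshev argument on the information density $\log\frac{w(y|x)}{p(y)}$ showing that a code of size $2^{n(R_{cap}+\gamma)}$ forces error probability at least $1-\delta$ (not merely at least some positive constant, which is what the weak converse via Fano gives), or the blowing-up lemma. The quantitative dependence $n_0=O_\delta(1/\gamma^2)$ comes precisely from a second-moment/Chebyshev control of the information density and cannot be recovered from Fano-type reasoning alone. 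None of this machinery appears in your write-up, so the statement remains unproved.
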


A classic result known as Fano's inequality can be used to lower bound the mutual information $I(X,Y)$ in \eqref{eq:cap} with a quantity involving the probability of error.
The following result by Tebbe and Dwyer \cite{TebbeD68} helps bound the mutual information $I(X,Y)$ in the other direction, and is useful in our proof.
\begin{lemma}[\cite{TebbeD68}]
  \label{lem:td68}
  Let $\delta\in(0,1)$.
  Suppose we are given a probability distribution $\mathcal{D}$ over $\mathcal{X}\times \mathcal{Y}$ such that, for all maps $f:\mathcal{Y}\to \mathcal{X}$, we have $\Pr_{X,Y}[f(Y)\neq X]\ge \delta$.
  Then $H(X|Y)\ge \frac{\delta}{2}$. 
\end{lemma}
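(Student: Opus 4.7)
The plan is to reduce the lemma to a pointwise entropy estimate via the Bayes-optimal classifier. First, I would invoke the classical fact that among all maps $f:\mathcal{Y}\to\mathcal{X}$, the Bayes rule $f^*$ that sends each $y\in\mathcal{Y}$ to a mode of the conditional distribution of $X$ given $Y=y$ achieves the minimum error probability $P_e^{*} = 1 - \E_Y[p_Y]$, where $p_y\defeq \max_x \Pr[X=x\mid Y=y]$. Applying the hypothesis of the lemma to the specific choice $f^*$ gives $\E_Y[1-p_Y] = P_e^{*}\ge \delta$. It therefore suffices to establish the pointwise bound $H(X\mid Y=y)\ge 1-p_y$, since taking expectations would then yield $H(X\mid Y)\ge \E_Y[1-p_Y]\ge \delta \ge \delta/2$.

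Next, I would prove the pointwise bound $H(X\mid Y=y)\ge 1-p_y$. The only non-routine ingredient is the elementary inequality $-p\log_2 p\ge p(1-p)$ for all $p\in[0,1]$, equivalently $\log_2(1/p)\ge 1-p$; this follows by checking that both sides vanish at $p=1$ and comparing derivatives on $(0,1]$. Writing $p_x\defeq \Pr[X=x\mid Y=y]$ and applying this inequality term by term in the definition of entropy,
\begin{align}
H(X\mid Y=y) \;=\; \sum_{x}-p_x\log_2 p_x \;\ge\; \sum_{x}p_x(1-p_x) \;=\; 1-\sum_{x}p_x^2 \;\ge\; 1-p_y,
\end{align}
where the final step uses $\sum_{x}p_x^2 \le (\max_x p_x)\cdot\sum_x p_x = p_y$.

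Combining the two steps gives $H(X\mid Y)\ge \delta$, a factor of $2$ stronger than the stated bound of $\delta/2$. I do not anticipate a real obstacle here: the whole argument is a single-line reduction to the pointwise estimate plus one elementary calculus inequality, which is likely why the authors cite Tebbe--Dwyer rather than reproduce the proof. The only place where the factor of $2$ slack could be needed would be if one wanted a cleaner proof that avoided the log-versus-linear estimate and used instead $-p\log_2 p \ge \tfrac{1}{2}(1-p)$ on a more restricted range; since our stronger pointwise bound holds on all of $[0,1]$, no such restriction is required.
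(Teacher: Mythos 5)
Your proof is correct, and it is self-contained where the paper simply cites Tebbe--Dwyer without reproducing an argument, so there is nothing in the paper to compare it against step by step. Both of your ingredients check out: the Bayes rule $f^*(y)\in\arg\max_x\Pr[X=x\mid Y=y]$ minimizes the error probability since $\Pr[f(Y)\neq X]=\E_Y\bigl[1-\Pr[X=f(Y)\mid Y]\bigr]\ge \E_Y[1-p_Y]$ for every $f$, so the hypothesis applied to $f^*$ gives $\E_Y[1-p_Y]\ge\delta$; and the pointwise estimate $H(X\mid Y=y)\ge 1-\sum_x p_x^2\ge 1-p_y$ follows from $\log_2(1/p)\ge\ln(1/p)\ge 1-p$ on $(0,1]$ together with $\sum_x p_x^2\le p_y$. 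Averaging over $y$ (the degenerate $y$ with $\Pr[Y=y]=0$ contribute nothing) gives $H(X\mid Y)\ge\delta$, which is a factor of $2$ stronger than the cited bound and, being stated for all finite alphabets with no side conditions, is a perfectly valid drop-in replacement for the lemma as used in the proof of Theorem~\ref{thm:lb-bin}.
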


\subsection{Random to coded lower bound}
Let $\mathcal{X}_m\defeq \{0,1\}^m$ and $\mathcal{Y}_{m,T} \defeq (\{0,1\}^{\le m})^T$.
For all $q$, $m$ and $T$, there is a natural channel with inputs $\mathcal{X}_m$ and outputs $\mathcal{Y}_{m,T}$. We induce a joint probability distribution on $\mathcal{X}_m, \mathcal{Y}_{m, T}$ as follows. Let $\lambda$ be a probability density function on $\mathcal X_m$. Let $X_{\lambda} \sim \mathcal X_m$ be the distribution where $x$ is sampled with probability $\lambda(x)$. We let $Y_{\lambda}$ be the output of $T$ independent traces of the sampled $x \sim X_\lambda$ across the $\BDC_q.$ 

Note that since $H(X_\lambda) \le m$, for any distribution $X_\lambda \sim \mathcal X_m$, we have that $I(X_\lambda, Y_\lambda) \le m$. We show in Lemma~\ref{lem:lb-2} that if $T \approx T\rand_{q,0}(m)$, then this upper bound can be improved by a significant amount. This upper bound is subsequently used in Theorem~\ref{thm:lb-bin} to show a limitation of the capacity of coded trace reconstruction.

\begin{lemma}
  \label{lem:lb-2}
  Let $\beta\ge 1$.
  Suppose $T=T\rand_{q,0}(m)-1$ for $m\ge 32$.
  For all probability distributions $X_\lambda$ on $\mathcal{X}_m$, if $Y_\lambda\in \mathcal{Y}_{m,T}$ is distributed as $T$ independent traces of $X_\lambda$, then
  \begin{align}
    I(X_\lambda,Y_\lambda)\le m - \frac{1}{32 m\log m}.
  \end{align}
\end{lemma}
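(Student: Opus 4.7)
I would prove Lemma~\ref{lem:lb-2} by a clean dichotomy on how close $\lambda$ is to the uniform distribution $U$ on $\mathcal{X}_m$, with threshold $\delta_0 \defeq \frac{1}{32m\log m}$. If $H(X_\lambda) \le m - \delta_0$, then the bound $I(X_\lambda, Y_\lambda) \le H(X_\lambda) \le m - \delta_0$ is immediate from the elementary inequality $I(X,Y) \le H(X)$. So the only interesting regime is $H(X_\lambda) > m - \delta_0$, i.e., $\lambda$ is extremely close to uniform. The plan there is to transfer the average-case hardness from $U$ to this nearly-uniform $\lambda$ via a total-variation coupling and then convert the resulting failure probability into an entropy lower bound using Lemma~\ref{lem:td68}.

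More concretely, in the second regime $\DKL(\lambda \| U) = m - H(X_\lambda) < \delta_0$, so Pinsker's inequality gives $\|\lambda - U\|_{TV} = O(\sqrt{\delta_0})$. Because the channel ``input $x \in \{0,1\}^m$, output $T$ i.i.d.\ traces across $\BDC_q$'' is fixed, the joint laws of $(X_\lambda, Y_\lambda)$ and $(X_U, Y_U)$ are also within total variation $O(\sqrt{\delta_0})$, which is $o(1)$ for $m \ge 32$. By the definition of $T = T\rand_{q,0}(m) - 1$, no decoder $f : \mathcal{Y}_{m,T} \to \mathcal{X}_m$ achieves failure probability at most $1/3$ on the uniform input, so every $f$ fails on $(X_U, Y_U)$ with probability at least $1/3$. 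The TV coupling then transfers this to $(X_\lambda, Y_\lambda)$, giving failure probability at least $1/3 - O(\sqrt{\delta_0}) \ge 1/4$. Lemma~\ref{lem:td68} then yields $H(X_\lambda \mid Y_\lambda) \ge 1/8$, so $I(X_\lambda, Y_\lambda) = H(X_\lambda) - H(X_\lambda \mid Y_\lambda) \le m - 1/8 \le m - \delta_0$, as required.

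The only real obstacle is bookkeeping: picking $\delta_0$ so the two regimes fit together, and checking that Pinsker's slack $O(\sqrt{\delta_0})$ is small enough (automatic once $m \ge 32$, since then $\delta_0 \le 1/5120$) to keep the failure probability comfortably above the threshold needed by Lemma~\ref{lem:td68}. Note that in regime two we actually obtain the much stronger $I \le m - 1/8$; the stated bound $m - \delta_0$ is tight only at the boundary, where regime one kicks in. No trace-reconstruction-specific calculation is needed beyond the black-box definition of $T\rand_{q,0}(m)$, which is exactly what makes this lemma the clean bridge from average-case hardness to the coded lower bound used in Theorem~\ref{thm:lb-bin}.
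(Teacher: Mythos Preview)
Your proof is correct, and in fact cleaner than the paper's, but it follows a genuinely different route. The paper splits on the size of the set $\mathcal{X}' = \{x : \lambda(x) \ge \frac{1}{(m\log m)2^m}\}$ of ``heavy'' atoms: if $|\mathcal{X}'| \le 2^{m-1/3}$ it bounds $H(X_\lambda)$ directly by an explicit calculation separating heavy and light atoms; if $|\mathcal{X}'| \ge 2^{m-1/3}$ it writes $\lambda$ as a convex combination $\lambda = \frac{1}{2m\log m}\mu + (1-\frac{1}{2m\log m})\nu$, where $\mu$ is uniform on $\mathcal{X}'$, shows $\nu$ is a legitimate distribution, and uses the mixture to lower-bound $\Pr[f(Y_\lambda)\neq X_\lambda]$ by $\frac{1}{2m\log m}\Pr[f(Y_\mu)\neq X_\mu]$. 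This is where the factor $\frac{1}{m\log m}$ in the final bound originates for the paper.

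You instead threshold on $H(X_\lambda)$ itself, making Case~1 a one-liner, and handle Case~2 with the identity $\DKL(\lambda\|U)=m-H(X_\lambda)$ plus Pinsker. This is more elegant: it avoids the explicit entropy estimate in the paper's Case~1 and replaces the mixture trick by a total-variation coupling, and it actually yields the stronger conclusion $H(X_\lambda\mid Y_\lambda)\ge 1/8$ in Case~2 (the paper only gets $\frac{1}{32m\log m}$ there). The trade-off is that you import Pinsker's inequality as a black box, whereas the paper's argument is self-contained. Either way, the bottleneck $m-\frac{1}{32m\log m}$ comes from the boundary between the two cases, not from the average-case transfer step.
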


\begin{proof}
  Let $\mathcal{X}'$ be the elements of $\mathcal{X}$ with $\lambda (x)\ge \frac{1}{(m\log m)2^m}$.
  We consider two cases.

  \textbf{Case 1: $|\mathcal{X}'|\le 2^{m-1/3}$.}
  We have
  \begin{align}
    I(X_\lambda,Y_\lambda)
    \ &\le \  H(X_\lambda) \nonumber\\
    \ &= \  \sum_{x\in \mathcal{X}'}^{} \lambda(x)\log\frac{1}{\lambda(x)} + \sum_{x\notin \mathcal{X}'}^{} \lambda(x)\log\frac{1}{\lambda(x)} \nonumber\\
    \ &\le \  \log|\mathcal{X}'| + \sum_{x\notin \mathcal{X}'}^{} \lambda(x)\log\frac{1}{\lambda(x)} \label{eq:step2}\\
    \ &\le \ \log|\mathcal{X}'| + \sum_{x\notin \mathcal{X}'}^{} \frac{1}{(m\log m)2^m}\cdot \log ((m\log m)2^m) \label{eq:step3}\\
    \ &\le \ \log|\mathcal{X}'| + 2^m\cdot \frac{1}{m(\log m)2^m}\log(m(\log m)2^m) \nonumber\\
    \ &= \ m - \frac{1}{3} + \frac{m + \log m+\log\log m}{m\log m}
    \ < \ m - \frac{1}{3}+\frac{1}{4}
    \ < \ m - \frac{1}{32 m\log m}. \label{eq:last-step}
  \end{align}
  In \eqref{eq:step2} we used that $\sum_{x\in\mathcal{X}}^{} \lambda(x)\le 1$ and that $z\log\frac{1}{z}$ is concave.
  In \eqref{eq:step3} we used that $z\log\frac{1}{z}$ is increasing for $z<1/3$.
  In \eqref{eq:last-step} we used that $m$ is sufficiently large. 

  \textbf{Case 2: $|\mathcal{X}'|\ge 2^{m-1/3}$.}

  For this case, a similar argument appears in \cite{Holden:2018tx} (Proposition 4.1).
  Let $\sigma(x)$ be the uniform distribution on the elements of $\mathcal X$. Let $\mu(x)$ be the uniform distribution on the elements of $\mathcal{X}'$. Consider any trace reconstruction algorithm $f:\mathcal{Y}_{m,T}\to\mathcal{X}_m$. Note that 
  \begin{align}
    \Pr[f(Y_\sigma) \neq X_\sigma] &\le \frac{|\mathcal X \setminus \mathcal X'|}{|\mathcal X|} + \frac{|\mathcal X'|}{|\mathcal X|} \Pr[f(Y_\mu) \neq X_\mu].\nonumber
  \end{align}
  By definition, $T\defeq T\rand_{q,0}(m) - 1$ and $|\mathcal X'| \ge 2^{-1/3}|\mathcal X|$, so
  \begin{align}
  \Pr[f(Y_\mu) \neq X_\mu] &\ge 2^{-1/3}\Pr[f(Y_\sigma) \neq X_\sigma] - (1-2^{-1/3})
  \ge \frac{2^{-1/3}}{3} - 1 + 2^{-1/3} > \frac{1}{8}.\label{eq:bound-fix}
  \end{align}
  Let $\nu(x)$ be the probability distribution on $\mathcal{X}$ given by
  \begin{align}
    \nu(x) \ &\defeq \  \frac{\lambda(x) - \frac{1}{2m\log m}\mu(x)}{1-\frac{1}{2m\log m}}.
  \end{align}
  We have $|\mathcal{X}'|\ge \frac{1}{2}|\mathcal{X}|$, so $\mu$ assigns probability at most $\frac{2}{2^m}$ to each element of $\mathcal{X}'$.
  Since $\lambda$ assigns probability at least $\frac{1}{(m\log m)2^m}$ to each element of $\mathcal{X}'$, $\nu(x)\ge 0$ for all $x$.
  Furthermore, it is easy to check that $\sum_{x\in\mathcal{X}}^{} \nu(x)=1$, so $\nu(x)$ is a legitimate probability distribution.
  We can sample from $\lambda$ as follows: with probability $\frac{1}{2m\log m}$ sample from $\mu$, otherwise, sample from $\nu$. Thus, for any recovery algorithm $f:\mathcal{Y}_{m,T}\to\mathcal{X}_m$.
  \begin{align}
    \Pr[f(Y_\lambda)\neq X_\lambda] &= \frac{1}{2m\log m}\Pr[f(Y_\mu)\neq X_\mu] + \left(1 - \frac{1}{2m\log m}\right)\Pr[f(Y_\nu) \neq X_\nu] \nonumber\\
    &\ge \frac{1}{2m\log m}\cdot \Pr[f(Y_\mu)\neq X_\mu] \nonumber\\
    &\ge \frac{1}{2m\log m}\cdot \frac{1}{8} 
    = \frac{1}{16m\log m}.
  \end{align}
  The last inequality is uses (\ref{eq:bound-fix}). Thus, $H(X_{\lambda}|Y_{\lambda}) \ge \frac{1}{32 m\log m}$ by Lemma~\ref{lem:td68}.
  We thus may bound
  \begin{align}
    I(X_\lambda,Y_\lambda)
    = H(X_\lambda) - H(X_\lambda|Y_\lambda) 
    \le \log|\mathcal{X}| - H(X_\lambda|Y_\lambda)
    \le m - \frac{1}{32m\log m}.
  \end{align}
  This covers all cases, completing the proof.
\end{proof}

\begin{proof}[Proof of Theorem~\ref{thm:lb-bin}]
  Recall $m=\floor{\sqrt{\frac{1/\varepsilon}{128\log(1/\varepsilon)}}}$ and $T = T_{q,0}\rand(m) - 1$.
  Let $n_0'$ be the constant given by Theorem~\ref{thm:wolfowitz} with the parameter $\gamma\defeq \varepsilon m$.
  Let $n_0 \defeq m\cdot n_0' \le O(\frac{1}{\varepsilon^2})$.

  We first prove that codes of rate $1-2\varepsilon$ are not $(T,q,\delta)$ trace reconstructible when $n$ is any sufficiently large multiple of $m$.
  Let $C$ be a code that is $(T,q,\delta)$ trace reconstructible when $n\ge n_0$ is a multiple of $m$.
  We show $C$ must have rate less than $1-2\varepsilon$.
  Let $n_{out} \defeq \frac{n}{m}$.
  For each $i\in[n_{out}]$, given a codeword $c=(c_1,\dots,c_n)\in C$, let $X_i$ denote the string
  \begin{align}
    X_i \defeq c_{(i-1)m+1},c_{(i-1)m+2},\dots,c_{im}.
  \end{align}
  Let $Y_i\in\mathcal{Y}_{m,T}$ be a tuple of $T$ of strings distributed as independent traces of $X_i$ under the $\BDC_q$.
  By assumption of our code, it is possible to recover $c$ from $Y_1,\dots,Y_{n_{out}}$ with failure probability at most $\delta$: take the trace-wise concatenation of $Y_1,\dots,Y_{n_{out}}$ and use the trace reconstruction algorithm that is assumed.
  Hence, the code $C$, when interpreted as a code in $\mathcal{X}^{n_{out}}$, achieves failure probability $\delta$ on the memoryless channel $w(\cdot|\cdot)$ with inputs $\mathcal{X}_{m}$ and outputs $\mathcal{Y}_{m,T}$  where $Y$ is distributed as $T$ independent traces of $X$. 
  By Lemma~\ref{lem:lb-2}, we have
  \begin{align}
    \max_{\lambda\text{ on }\mathcal{X}_{m}}I(X_\lambda,Y_\lambda) \le m\left(1 - \frac{1}{32m^2\log m}\right) \le m(1-4\eps),
  \end{align}
  since $\eps$ sufficiently small.
  By Theorem~\ref{thm:wolfowitz}, since $\gamma\defeq \varepsilon m$ and $n_{out}\ge n_0'$, our code $C$, when interpreted as a code in $\mathcal{X}^{n_{out}}$, must have rate less than 
    \begin{align}
    \frac{1}{\log|\mathcal{X}|}\left(\max_{\lambda\text{ on }\mathcal{X}_{m}}I(X_\lambda,Y_\lambda) + \gamma\right) < 1 - 2\eps,
    \end{align}
    as desired.

Now suppose $n$ is not a multiple of $m$. 
Then, suppose for contradiction that $C\subset \{0,1\}^n$ is a code of length $n$ and rate $1-\varepsilon$ that is $(T,q,\delta)$ trace reconstructible.
By a simple counting argument, there exists a code $C'\subset \{0,1\}^{n'}$ of rate $1-\varepsilon-\frac{\varepsilon n'}{n-n'} > 1-2\varepsilon$ and a string $w$ such that $c'||w\in C$ for all $c'\in C'$.
Furthermore, recovering all codewords of $C$ requires recovering all codewords of the form $c'||w$ for $c'\in C'$.
The failure probability of recovering $c'$ from $T$ traces of $c'||w$ is at least the failure probability of recovering $c'$ from $T$ traces of $c'$, which, as we showed, is more than $\delta$, a contradiction.
\end{proof}

\section{Conclusion and Open Problems}

In this paper, we considered the coded trace reconstruction problem. We obtain lower and upper bounds on the problem which show that the average-case trace reconstruction problem is essentially equivalent to the coded trace reconstruction problem. Even with this contribution, there are still many questions left unanswered.

\begin{enumerate}
    \item  The most fundamental open question in this space is closing the exponential gaps for the worst-case trace reconstruction and average-case trace-reconstruction. For worst-case trace reconstruction, the optimal number of traces is between $\tilde\Omega(n^{3/2})$ and $\exp(O(n^{1/3}))$ (or $\exp(O(n^{1/5}))$ for $q\le 1/2$), and for average-case trace reconstruction, the optimal number of traces is between $\tilde\Omega(\log^{5/2}n)$ and $\exp(O(\log^{1/3}n))$. %

    \item One way to generalize the coded trace reconstruction model considered in this paper to consider a more general synchronization channel, such as with insertions and deletions. For example, such a model could insert $k$ random bits between $x_i$ and $x_{i+1}$ with probability $(1-q)q^{k}$ and then apply i.i.d. deletions with probability $q$. See the recent survey by Cheraghchi and Ribeiro~\cite{cheraghchi2019overview} for an overview of various models for random insertions, deletions, substitutions and replications. The authors suspect that similar primitives to those used in this paper could be useful in these more general settings.

    \item Another combinatorial variant of this question is \emph{necklace reconstruction}. This question is similar to ordinary trace reconstruction, except a random cyclic shift is also applied to each trace, and the original string needs to be recovered up to an arbitrary cyclic shift. Many protocols for the traditional trace reconstruction problem exploit that the initial prefix of the trace can be easily determined by looking at the prefixes of the traces. For necklace reconstruction, this strategy would no longer work (due to the random shift), so new techniques need to be developed. Even beating $O((1-q)^{-n})$ traces, the probability of receiving the whole necklace as a trace, seems nontrivial. A recent paper~\cite{narayanan2020circular} studies this problem.
    
    \item A challenging question in the context of coded trace reconstruction is formulating other interesting models beyond i.i.d. deletions. Adversarial deletions is not an interesting model because the adversary could delete the same bits on each trace, reducing the problem to the deletion code problem. One possibility of such a model would be adversarial deletions subject to some global constraints--such as the distribution of deletions being approximately $k$-wise independent.
    
    \item Another challenge is coming up with deletion models and codes that more accurately correspond to practical use cases and string lengths. Trace reconstruction as used in DNA computing often considers string of approximately length $100$ (e.g., \cite{organick2018random}). Constructing such codes may require different techniques than those used in this paper.
    
    \item We do not know if Theorem~\ref{thm:large-ub} achieves the smallest alphabet size for $O(\log_{1/q}\frac{1}{\varepsilon})$ traces. It would be interesting to determine the trade-off between alphabet size and number of traces.
\end{enumerate}

\section{Acknowledgements}

We thank Mary Wootters for sponsoring the Coding Theory Reading Group at Stanford where this project was started.
We thank Nina Holden for helpful discussions on the error probabilities in the paper \cite{HPP18}.
We thank Venkatesan Guruswami for help discussions on high rate error correcting codes and suggesting the Justesen code construction in Proposition~\ref{prop:justesen}.
We thank Jo\~ao Ribeiro for helpful discussions about the work \cite{Cheraghchi:2019vd} and feedback on an earlier draft of the paper.
We thank Wesley Pegden for suggesting one of the open problems.
We thank Venkatesan Guruswami, Mary Wootters, Aviad Rubinstein, Moses Charikar, and Sivakanth Gopi for helpful discussions, encouragement, and feedback on an earlier draft of the paper.

\bibliography{trace_biblio}

\newcommand{\etalchar}[1]{$^{#1}$}
\begin{thebibliography}{HMPW08}

\bibitem[AVDiF19]{Abroshan:2019wt}
Mahed Abroshan, Ramji Venkataramanan, Lara Dolecek, and Albert~Guill{\'e}n
  i~F{\`a}bregas.
\newblock Coding for deletion channels with multiple traces.
\newblock In {\em 2019 IEEE International Symposium on Information Theory
  (ISIT)}, pages 1372--1376. IEEE, 2019.

\bibitem[BCF{\etalchar{+}}19]{BanCFSS19}
Frank Ban, Xi~Chen, Adam Freilich, Rocco~A Servedio, and Sandip Sinha.
\newblock Beyond trace reconstruction: Population recovery from the deletion
  channel.
\newblock In {\em 2019 IEEE 60th Annual Symposium on Foundations of Computer
  Science (FOCS)}, pages 745--768. IEEE, 2019.

\bibitem[BCSS19]{BCSS2019b}
Frank Ban, Xi~Chen, Rocco~A Servedio, and Sandip Sinha.
\newblock Efficient average-case population recovery in the presence of
  insertions and deletions.
\newblock In {\em Approximation, Randomization, and Combinatorial Optimization.
  Algorithms and Techniques (APPROX/RANDOM 2019)}. Schloss
  Dagstuhl-Leibniz-Zentrum fuer Informatik, 2019.

\bibitem[BKKM04]{Batu:2004um}
Tugkan Batu, Sampath Kannan, Sanjeev Khanna, and Andrew McGregor.
\newblock {Reconstructing strings from random traces.}
\newblock {\em SODA}, 2004.

\bibitem[CGMR20]{Cheraghchi:2019vd}
Mahdi Cheraghchi, Ryan Gabrys, Olgica Milenkovic, and Joao Ribeiro.
\newblock Coded trace reconstruction.
\newblock {\em IEEE Transactions on Information Theory}, 2020.

\bibitem[Cha19]{Chase:2019tb}
Zachary Chase.
\newblock {New Lower Bounds for Trace Reconstruction}.
\newblock {\em arXiv.org}, May 2019.

\bibitem[Cha20]{chase2020upper}
Zachary Chase.
\newblock New upper bounds for trace reconstruction.
\newblock {\em arXiv preprint arXiv:2009.03296}, 2020.

\bibitem[Che18]{Cheraghchi18}
Mahdi Cheraghchi.
\newblock Capacity upper bounds for deletion-type channels.
\newblock In {\em Proceedings of the 50th Annual {ACM} {SIGACT} Symposium on
  Theory of Computing, {STOC} 2018, Los Angeles, CA, USA, June 25-29, 2018},
  pages 493--506, 2018.

\bibitem[CR20]{cheraghchi2019overview}
Mahdi Cheraghchi and Jo{\~a}o Ribeiro.
\newblock An overview of capacity results for synchronization channels.
\newblock {\em IEEE Transactions on Information Theory}, 2020.
\newblock to appear.

\bibitem[CS20]{ConS19}
Roni Con and Amir Shpilka.
\newblock Explicit and efficient constructions of coding schemes for the binary
  deletion channel.
\newblock In {\em 2020 IEEE International Symposium on Information Theory
  (ISIT)}, pages 84--89. IEEE, 2020.

\bibitem[DG01]{DG2001}
Suhas Diggavi and Matthias Grossglauser.
\newblock On transmission over deletion channels.
\newblock In {\em Proceedings of the 39th Annual Allerton Conference on
  Communication, Control, and Computing}, pages 573--582, 2001.

\bibitem[DM06]{DrineaM06}
Eleni Drinea and Michael Mitzenmacher.
\newblock On lower bounds for the capacity of deletion channels.
\newblock {\em {IEEE} Trans. Information Theory}, 52(10):4648--4657, 2006.

\bibitem[DM07]{DrineaM07}
Eleni Drinea and Michael Mitzenmacher.
\newblock Improved lower bounds for the capacity of i.i.d. deletion and
  duplication channels.
\newblock {\em {IEEE} Trans. Information Theory}, 53(8):2693--2714, 2007.

\bibitem[DOS17]{De:2017jj}
Anindya De, Ryan O'Donnell, and Rocco~A Servedio.
\newblock {Optimal mean-based algorithms for trace reconstruction.}
\newblock {\em STOC}, pages 1047--1056, 2017.

\bibitem[DP09]{DBLP:books/daglib/0025902}
Devdatt~P. Dubhashi and Alessandro Panconesi.
\newblock {\em Concentration of Measure for the Analysis of Randomized
  Algorithms}.
\newblock Cambridge University Press, 2009.

\bibitem[FD10]{FertonaniD10}
Dario Fertonani and Tolga~M. Duman.
\newblock Novel bounds on the capacity of the binary deletion channel.
\newblock {\em {IEEE} Trans. Information Theory}, 56(6):2753--2765, 2010.

\bibitem[GI05]{GI05}
Venkatesan Guruswami and Piotr Indyk.
\newblock Linear-time encodable/decodable codes with near-optimal rate.
\newblock {\em {IEEE} Trans. Information Theory}, 51(10):3393--3400, 2005.

\bibitem[GL19]{GuruswamiL19}
Venkatesan Guruswami and Ray Li.
\newblock Polynomial time decodable codes for the binary deletion channel.
\newblock {\em {IEEE} Trans. Information Theory}, 65(4):2171--2178, 2019.

\bibitem[GM19]{Gabrys:2018tj}
Ryan Gabrys and Olgica Milenkovic.
\newblock Unique reconstruction of coded strings from multiset substring
  spectra.
\newblock {\em IEEE Transactions on Information Theory}, 65(12):7682--7696,
  2019.

\bibitem[HL{\etalchar{+}}20]{Holden:2018tx}
Nina Holden, Russell Lyons, et~al.
\newblock Lower bounds for trace reconstruction.
\newblock {\em Annals of Applied Probability}, 30(2):503--525, 2020.

\bibitem[HM14]{Haeupler:2014fn}
Bernhard Haeupler and Michael Mitzenmacher.
\newblock {Repeated deletion channels}.
\newblock {\em 2014 IEEE Information Theory Workshop (ITW 2014)}, pages
  152--156, August 2014.

\bibitem[HMPW08]{Holenstein:2008wy}
Thomas Holenstein, Michael Mitzenmacher, Rina Panigrahy, and Udi Wieder.
\newblock {Trace reconstruction with constant deletion probability and related
  results.}
\newblock {\em SODA}, 2008.

\bibitem[HPP18]{HPP18}
Nina Holden, Robin Pemantle, and Yuval Peres.
\newblock Subpolynomial trace reconstruction for random strings and arbitrary
  deletion probability.
\newblock In {\em Conference On Learning Theory}, pages 1799--1840, 2018.

\bibitem[HS17]{HS17}
Bernhard Haeupler and Amirbehshad Shahrasbi.
\newblock Synchronization {{Strings}}: {{Codes}} for {{Insertions}} and
  {{Deletions Approaching}} the {{Singleton Bound}}.
\newblock {\em Proceedings of the 49th Annual ACM SIGACT Symposium on Theory of
  Computing - STOC 2017}, pages 33--46, 2017.

\bibitem[HS18]{HS18}
Bernhard Haeupler and Amirbehshad Shahrasbi.
\newblock Synchronization strings: Explicit constructions, local decoding, and
  applications.
\newblock In {\em Proceedings of the 50th Annual ACM SIGACT Symposium on Theory
  of Computing}, pages 841--854, 2018.

\bibitem[Jus72]{Justesen72}
J{\o}rn Justesen.
\newblock Class of constructive asymptotically good algebraic codes.
\newblock {\em {IEEE} Trans. Information Theory}, 18(5):652--656, 1972.

\bibitem[KM13]{KanoriaM2013}
Yashodhan Kanoria and Andrea Montanari.
\newblock Optimal coding for the binary deletion channel with small deletion
  probability.
\newblock {\em {IEEE} Trans. Information Theory}, 59(10):6192--6219, 2013.

\bibitem[KMMP19]{KrishnamurthyMMP19}
Akshay Krishnamurthy, Arya Mazumdar, Andrew McGregor, and Soumyabrata Pal.
\newblock Trace reconstruction: Generalized and parameterized.
\newblock In {\em 27th Annual European Symposium on Algorithms (ESA 2019)}.
  Schloss Dagstuhl-Leibniz-Zentrum fuer Informatik, 2019.

\bibitem[KMS10]{KMS2010}
Adam Kalai, Michael Mitzenmacher, and Madhu Sudan.
\newblock Tight asymptotic bounds for the deletion channel with small deletion
  probabilities.
\newblock In {\em 2010 IEEE International Symposium on Information Theory},
  pages 997--1001. IEEE, 2010.

\bibitem[Lev01a]{Levenshtein:2001fk}
Vladimir~I Levenshtein.
\newblock {Efficient reconstruction of sequences.}
\newblock {\em IEEE Trans. Information Theory}, 47(1):2--22, 2001.

\bibitem[Lev01b]{Levenshtein:2001vu}
Vladimir~I Levenshtein.
\newblock {Efficient Reconstruction of Sequences from Their Subsequences or
  Supersequences.}
\newblock {\em Journal of Combinatorial Theory}, 2001.

\bibitem[MPV14a]{MPV14}
Andrew McGregor, Eric Price, and Sofya Vorotnikova.
\newblock Trace {{Reconstruction Revisited}}.
\newblock In Andreas~S. Schulz and Dorothea Wagner, editors, {\em Algorithms -
  {{ESA}} 2014}, volume 8737, pages 689--700. {Springer Berlin Heidelberg},
  Berlin, Heidelberg, 2014.

\bibitem[MPV14b]{MPV:2014el}
Andrew McGregor, Eric Price, and Sofya Vorotnikova.
\newblock {Trace Reconstruction Revisited.}
\newblock {\em ESA}, 8737(2):689--700, 2014.

\bibitem[NP17]{Nazarov:2017vz}
Fedor Nazarov and Yuval Peres.
\newblock {Trace reconstruction with exp(O(n1/3)) samples.}
\newblock {\em STOC}, 2017.

\bibitem[NR20]{narayanan2020circular}
Shyam Narayanan and Michael Ren.
\newblock Circular trace reconstruction.
\newblock {\em arXiv preprint arXiv:2009.01346}, 2020.

\bibitem[OAC{\etalchar{+}}18]{organick2018random}
Lee Organick, Siena~Dumas Ang, Yuan-Jyue Chen, Randolph Lopez, Sergey Yekhanin,
  Konstantin Makarychev, Miklos~Z Racz, Govinda Kamath, Parikshit Gopalan,
  Bichlien Nguyen, et~al.
\newblock Random access in large-scale dna data storage.
\newblock {\em Nature biotechnology}, 36(3):242, 2018.

\bibitem[PZ17]{PeresZ17}
Yuval Peres and Alex Zhai.
\newblock Average-case reconstruction for the deletion channel: Subpolynomially
  many traces suffice.
\newblock In {\em 58th {IEEE} Annual Symposium on Foundations of Computer
  Science, {FOCS} 2017, Berkeley, CA, USA, October 15-17, 2017}, pages
  228--239, 2017.

\bibitem[RD15]{RahmatiD15}
Mojtaba Rahmati and Tolga~M. Duman.
\newblock Upper bounds on the capacity of deletion channels using channel
  fragmentation.
\newblock {\em {IEEE} Trans. Information Theory}, 61(1):146--156, 2015.

\bibitem[Sha48]{Shannon48}
Claude Shannon.
\newblock A mathematical theory of communication.
\newblock {\em Bell Systems Technical Journal}, 27(3):379–423, 1948.

\bibitem[TI68]{TebbeD68}
D.~Tebbe and Samuel J.~Dwyer III.
\newblock Uncertainty and the probability of error (corresp.).
\newblock {\em {IEEE} Trans. Information Theory}, 14(3):516--518, 1968.

\bibitem[VT65]{varshamov1965codes}
RR~Varshamov and GM~Tenengolts.
\newblock Codes which correct single asymmetric errors (in russian).
\newblock {\em Automatika i Telemkhanika}, 161(3):288--292, 1965.

\bibitem[Wol78]{Wolfowitz78}
Jacob Wolfowitz.
\newblock {\em Coding theorems of information theory}.
\newblock Springer-Verlag, 1978.

\bibitem[YGM17]{yazdi2017}
S.~M. Hossein~Tabatabaei Yazdi, Ryan Gabrys, and Olgica Milenkovic.
\newblock {Portable and Error-Free DNA-Based Data Storage}.
\newblock {\em Scientific Reports}, 7:5011, 2017.

\end{thebibliography}
\bibliographystyle{alpha}

\appendix

\section{Omitted Details}\label{app:missing}
\subsection{Proof of Lemma~\ref{lem:short}}
\label{app:short}
The following lemma shows that a significant fraction of strings of length $n$ are $m$-protected for $m\ge \Omega(\log n)$.
\begin{lemma}
  \label{lem:pad}
  Let $m\ge 10^3$ be an integer and $n\in[3m, 2^{m/150}]$, the number of $m$-protected codewords is at least $2^{n-2m-3}$.
\end{lemma}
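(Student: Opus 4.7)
The plan is to count $m$-protected strings directly via the probabilistic method. A string $w$ of length $n$ is $m$-protected iff it decomposes as $w = 0^m w^\circ 1^m$ where $w^\circ \in \{0,1\}^{n-2m}$ starts with $1$, ends with $0$, and satisfies the balance property that every substring of $w^\circ$ of length $m' \ge m/4$ contains between $m'/4$ and $3m'/4$ ones. Moreover this decomposition is uniquely determined by $w$, because the requirement that $w^\circ$ begin with $1$ and end with $0$ pins down the prefix and suffix of $0$s and $1$s. So it suffices to count valid $w^\circ$; I will show that a uniformly random $v \in \{0,1\}^{n-2m}$ is valid with probability at least $1/8$, yielding at least $2^{n-2m}/8 = 2^{n-2m-3}$ valid strings.

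The event that $v$ starts with $1$ and ends with $0$ has probability exactly $1/4$, so it suffices to show the balance property fails with probability at most $1/8$. For any fixed substring of $v$ of length $m'$, the number of $1$s follows a $B(m', 1/2)$ distribution, so applying the Chernoff bounds \eqref{eq:cher-1} and \eqref{eq:cher-2} with deviation parameter $\delta = 1/2$, that substring is unbalanced with probability at most $e^{-m'/16} + e^{-m'/20} \le 2 e^{-m'/20}$. There are at most $n - 2m \le n$ substrings of any given length, so summing over lengths $m' \in [\lceil m/4 \rceil, n-2m]$ and applying the union bound, the failure probability is at most
\begin{align}
  \sum_{m' \ge m/4} 2n \, e^{-m'/20} \;\le\; \frac{2n\, e^{-m/80}}{1 - e^{-1/20}} \;\le\; 42 \, n \, e^{-m/80}.
\end{align}

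Plugging in $n \le 2^{m/150}$, this quantity is bounded by $42 \exp\!\bigl((\ln 2)\, m/150 - m/80\bigr)$, which decays exponentially in $m$; a direct calculation shows it is well below $1/8$ for every $m \ge 10^3$. Thus the probability that a uniform $v$ satisfies the boundary condition and the balance property is at least $1/4 - 1/8 = 1/8$, which gives the claimed count. The only obstacle is bookkeeping: choosing the Chernoff constants carefully enough that the union bound still beats $1/8$ throughout the stated range $m \ge 10^3$, $n \le 2^{m/150}$; the proof is otherwise a routine probabilistic estimate.
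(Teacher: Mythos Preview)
Your proof is correct and follows essentially the same probabilistic-method-plus-Chernoff-plus-union-bound argument as the paper; the only differences are cosmetic (you sample $v$ uniformly from $\{0,1\}^{n-2m}$ and then subtract off the boundary-bit failure via $P(A\cap B)\ge P(A)-P(B^c)$, whereas the paper fixes the boundary bits up front and uses a cruder single-term $n^2$ union bound instead of your geometric sum over lengths). The constants work out in both versions.
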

\begin{proof}
  There are $2^{n-2m-2}$ strings of the form  $s= 0^{m}s^\circ1^m$.
  Choose one such string at random, so that $s^\circ$ is a uniformly random string in $1||\{0,1\}^{n-2m-2}||0$.
  For a substring of length $m'$, the probability it has at least $3/4$ or at most $1/4$ fraction of 1s is, by the Chernoff bound \eqref{eq:cher-1}, at most $2\cdot 2^{-m'/16}$.
  Since $m'\ge m/4$, this is at most $2\cdot 2^{-m/64}$.
  Since there are at most $n^2$ substrings of $s^\circ$ of length at least $m/4$, by the union bound, the resulting string is \emph{not} $m$-protected with probability at most
  $$2n^2\cdot 2^{-m/64} \leq 2\cdot 2^{2m/150-m/64} = 2\cdot 2^{-m/300} < \frac{1}{2}.$$
  Hence, at least half of all strings of the form $0^ms^\circ1^m$ are $m$-protected, as desired.
\end{proof}

With the protected strings from Lemma~\ref{lem:pad} and the codes for trace reconstruction from Lemma~\ref{lem:avg-tr}, we can prove Lemma~\ref{lem:short}. Intuitively, both the codes with protected codewords and codes which are efficiently trace reconstructible are both very large, so we can find the desired code in their intersection.

\begin{proof}[Proof of Lemma~\ref{lem:short}]
  By Lemma~\ref{lem:avg-tr} with parameter $\beta'=3\beta$, there exists a code $C_1$ with $|C_1|\ge (1-n^{-3\beta})2^n = 2^n - 2^{n-3m}$ that is $(T,q,n^{-3\beta})$ trace reconstructible.
  
  Let $C_2$ be the set of length $n$ strings that are $m$-protected.
  Assume $\varepsilon$ is sufficiently small so that $n \ge 6\frac{1}{\varepsilon}m > 3m$.
  Note also that by our choice of $n$,
  $$2^{m/150} = 2^{\floor{\beta \log n}/150} \geq 2^{\log n} = n.$$
  Then, by Lemma~\ref{lem:pad}, we have $|C_2|\ge 2^{n-2m-3}$.
  
  Let $C = C_1\cap C_2$.
  We have $|C| = |C_1\cap C_2| \ge 2^{n-2m-3} - 2^{n-3m} > 2^{n-3m}$, so $C$ has rate at least $1 - \frac{3m}{n} > 1 - \frac{\varepsilon}{2}$.
  Furthermore, since $C_1$ is $(T,q,n^{-3\beta})$ trace reconstructible, $C$ is as well.
\end{proof}

\subsection{Proof of Lemma~\ref{lem:bdc}}
\label{app:bdc}
In this section, we show how to construct codes for the binary deletion channel of length $O(\log \frac{1}{\delta})$ and failure probability at most $\delta$.
\begin{proof}[Proof of Lemma~\ref{lem:bdc}]
  \textbf{Encoding.}
  Map every element $\sigma\in[2^K]$ to a string $\tilde c_\sigma\in\{0,1\}^{3K}$ that starts with a 0, ends with a 1, has $K$ runs are length 1, and has $K$ runs are length 2.
  There are $\binom{2K}{K} \ge 2^K$ such strings as each string is uniquely determined by its sequence of run lengths, so each $\sigma$ can be assigned to a distinct string.
  Let $c_\sigma$ be $\tilde c_{\sigma}$ with every symbol duplicated $m$ times.

  \textbf{Decoding.}
  To decode a received word $s$ under the BDC$_q$, we first recover $\tilde c_\sigma$, and then recover $\sigma$.
  To recover $\tilde c_\sigma$, suppose $s$ is of the form $0^{k_1'}1^{\ell_1'}\cdots 0^{k_K'}1^{\ell_K'}$ where $k_i',\ell_i'\ge 1$ for all $i$.
  If $s$ is not of this form, return an arbitrary symbol in $[2^K]$ (give up).
  For each $i=1,\dots,K$, if $k_i'\ge 1.4(1-q)m$, let $x_i'=2$, and otherwise let $x_i'=1$.
  Similarly, if $\ell_i'\ge 1.4(1-q)m$, let $y_i'=2$, and otherwise let $y_i'=1$.
  The decoding returns the symbol $\sigma'$ such that
  \begin{align}
    \tilde c_{\sigma'} = 0^{x_1'}1^{y_1'}\cdots 0^{x_K'}1^{y_K'}.
  \end{align}

  \textbf{Analysis.}
  The decoding is clearly linear time.
  To prove correctness, suppose our input symbol $\sigma$ satisfies $c_\sigma = 0^{x_1}1^{y_1}\cdots 0^{x_K}1^{y_K}$, where $x_i,y_i\in\{1,2\}$ for all $i$.
  Let $k_1,\ell_1,\dots,k_K,\ell_K$ denote the number of bits not deleted in the corresponding runs $0^{x_1},1^{y_1},\dots,1^{y_K}$.
  We bound the probability each of the following happen.
  \begin{enumerate}
  \item There exists some $i$ such that $k_i=0$ ($\ell_i=0$)
  \item There exists some $i$ with $x_i=1$ ($y_i=1$) such that $k_i\ge 1.4(1-q)m$ ($\ell_i\ge 1.4(1-q)m$).
  \item There exists some $i$ with $x_i=2$ ($y_i=2$) such that $k_i< 1.4(1-q)m$ ($\ell_i< 1.4(1-q)m$).
  \end{enumerate}
  If $x_i=1$, then $k_i$ is distributed as the binomial distribution $B(m, 1-q)$.
  If $x_i=2$, then $k_i$ is distributed as the binomial distribution $B(2m, 1-q)$.
  In either case, we have
  \begin{align}
    \Pr[k_i=0] =\Pr[\ell_i=0]\le q^m < e^{-(1-q)m} < 2^{-(1-q)m/20}
  \end{align}
  By the Chernoff bound \eqref{eq:cher-2}, for $i$ such that $x_i=1$, we have
  \begin{align}
    \Pr[x_i\neq x_i'] 
    \ &= \ \Pr[k_i\ge 1.4(1-q)m]
    \ \le \ e^{-\frac{0.4^2}{2+0.4}(1-q)m}
    \ < \ 2^{-(1-q)m/20}
  \end{align}
  On the other hand, for $i$ such that $x_i=2$, we have, by the Chernoff bound \eqref{eq:cher-1},
  \begin{align}
    \Pr[x_i\neq x_i'] 
    \ &= \ \Pr[k_i < 1.4(1-q)m]
    \ \le \ e^{-\frac{0.3^2}{2}(1-q)m}.
    \ < \ 2^{-(1-q)m/20}
  \end{align}
  The same probabilities hold for $y_i$'s.
  Hence the probability any of events 1, 2, or 3 happen is at most $6K\cdot 2^{-(1-q)m/20}$, as desired.
  However, if event 1 does not happen then the decoding guarantees that $k_i'=k_i$ and $\ell_i'=\ell_i$ for all $i$.
  If additionally, events 2 and 3 do not happen, the decoding guarantees that $x_i'=x_i$ and $y_i'=y_i$, and hence $\sigma'=\sigma$.
  Thus, the decoding fails with probability at most $6K\cdot 2^{-(1-q)m/40}$, as desired.
\end{proof}

\subsection{High rate error correcting codes}
\label{app:ecc}
In this section, we show how Proposition~\ref{prop:gi} follows from the construction of Guruswami and Indyk \cite{GI05}.
Guruswami and Indyk prove the following.
\begin{theorem}[Theorem 5 of \cite{GI05}]
  \label{thm:gi-2}
For every $\varepsilon>0$ and any $R\in(0,1)$, there exists a family of binary codes of rate $R$ encodable in linear time and decodable in linear time from up to a fraction $\delta$ of substitution errors, where
\begin{align}
  \delta\ge \max_{R < r<  1}\frac{(1-r-\varepsilon)H^{-1}(1-R/r))}{2}.
\end{align}
\end{theorem}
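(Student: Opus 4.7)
The plan is to prove this by code concatenation combined with Forney-style generalized-minimum-distance (GMD) decoding. The stated bound $\delta \ge (1-r-\varepsilon) H^{-1}(1-R/r)/2$ factors as half the product of two relative distances: $1-r-\varepsilon$ for a near-Singleton outer code of rate $r$, and $H^{-1}(1-R/r)$ for a binary inner code of rate $R/r$ lying on the Gilbert--Varshamov (GV) curve. The concatenated code then has relative minimum distance at least the product of these two quantities, so any unique decoder that reaches half the concatenated minimum distance delivers exactly the claimed $\delta$.

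First I would invoke a linear-time encodable and decodable outer code $C_{out}$ of rate $r$ over a constant-size alphabet $\Sigma$ whose relative minimum distance is at least $1-r-\varepsilon$, and which furthermore admits linear-time errors-and-erasures decoding. Such near-Singleton codes are produced by the expander-based line of work that \cite{GI05} itself belongs to, and combine Spielman-style linear-time machinery with a small near-MDS ingredient such as Reed--Solomon at constant block length. Next I pick a binary inner code $C_{in}$ of rate $R/r$ and block length $\ell = \Theta_\varepsilon(\log|\Sigma|) = O_\varepsilon(1)$ sitting on the GV bound, so that its relative minimum distance is at least $H^{-1}(1-R/r) - o(1)$; the $o(1)$ slack is absorbed into $\varepsilon$. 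Since $\ell$ is a constant depending only on $\varepsilon$ and $R$, I can locate $C_{in}$ once and for all by exhaustive search in $O_\varepsilon(1)$ time, and every inner block is then encoded and maximum-likelihood decoded in $O_\varepsilon(1)$ time, contributing only $O(n)$ total work. The concatenation of $C_{out}$ with $C_{in}$ is a binary code of rate $R$, block length linear in $n$, and relative minimum distance at least $(1-r-\varepsilon) H^{-1}(1-R/r)$.

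For decoding, I would run Forney's GMD: maximum-likelihood decode each inner block to a tentative outer symbol together with a Hamming-distance confidence score, then invoke the linear-time errors-and-erasures outer decoder against $O(\ell)$ different erasure thresholds and return the best of the resulting codewords. Because $\ell$ is a constant, the GMD loop makes $O_\varepsilon(1)$ calls to the outer decoder, each taking linear time, so the overall decoder is still linear in $n$. Forney's standard analysis guarantees that whenever the fraction of bit errors is less than $(1-r-\varepsilon)H^{-1}(1-R/r)/2$, at least one threshold choice recovers the transmitted codeword; maximizing over $r \in (R,1)$ produces the stated lower bound on $\delta$.

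The main obstacle I anticipate is not the GMD loop itself, which is benign here thanks to $\ell$ being constant, but rather producing the outer code $C_{out}$ with a genuinely linear-time errors-and-erasures decoder that hits the near-Singleton distance $1-r-\varepsilon$. This is the technical heart of the construction and is what forces the reliance on expander-based linear-time codes; losing the linear-time property anywhere in the outer decoder would drop the theorem to a polynomial-time statement and defeat the whole purpose of going through \cite{GI05} rather than a generic Forney argument.
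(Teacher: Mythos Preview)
The paper does not prove this theorem at all: it is cited verbatim as Theorem~5 of \cite{GI05} and used as a black box to derive Proposition~\ref{prop:gi} by a two-line parameter calculation. There is therefore no ``paper's own proof'' to compare your attempt against.

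That said, your sketch is a faithful high-level outline of what \cite{GI05} actually does: concatenate a linear-time near-Singleton outer code with a constant-length GV inner code, then decode via a GMD-style loop that calls the outer errors-and-erasures decoder a constant number of times. You have correctly identified that the nontrivial ingredient is the linear-time errors-and-erasures decodable outer code approaching the Singleton bound, which is precisely the contribution of \cite{GI05} and its predecessors. One point worth tightening: for the GMD argument you want the number of thresholds to be bounded by the number of \emph{distinct} inner confidence scores, which is $O(\ell)=O_\varepsilon(1)$ as you say, so this is fine. If you were to turn this into a real proof you would need to cite (or reproduce) the specific linear-time near-MDS construction rather than gesture at ``the expander-based line of work,'' since that construction is the whole theorem.
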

By setting $R = 1-\varepsilon'$ and $\varepsilon = \frac{\varepsilon'}{10}$, and taking $r=1-\frac{\varepsilon'}{2}$, we have
\begin{align}
  \delta
  \ge \frac{2\varepsilon'}{5}\cdot H^{-1}\left(\frac{\varepsilon'}{2(1-\varepsilon'/2)}\right)
  \ge \frac{2\varepsilon'}{20}\cdot \left(\frac{\varepsilon'}{2(1-\varepsilon'/2)}\right)^2
  \ge \frac{2\varepsilon'}{20}\cdot \left(\frac{\varepsilon'}{2}\right)^2
  \ge \frac{(\varepsilon')^3}{40}. 
\end{align}
Here we used that $H^{-1}(x) \ge \frac{x^2}{4}$ for all $x\in(0,1)$.

Further for every $\Sigma$ whose size is $2^\ell$ a power of 2, every family binary codes of rate $R$ and tolerating a $\delta$ fraction of worst-case substitution errors can be made into a family of codes over $\Sigma$ with the same asymptotic rate and error tolerance: pad each codeword so that its length is a multiple of $\ell$ (this has a negligible effect on the asymptotic rate and error tolerance), then map each length $\ell$ string $b_1,\dots,b_\ell$ to a unique element of $\Sigma$. For a codeword $c=(c_1,\dots,c_n)\in\{0,1\}^n$, create a codeword over $\Sigma^{n/\ell}$ whose $i$th symbol is the image of $c_{(i-1)\ell+1}, c_{(i-1)\ell+2},\dots,c_{i\ell}$ under this mapping.
Then to correct a string in $\Sigma^{n/\ell}$,  interpret it as a binary string of length $n$: $\delta$ fraction of substitution errors in a codeword in $\Sigma^{n/\ell}$ yields at most a $\delta$ fraction of worst-case substitution errors over the underlying binary codeword, which can be corrected by assumption.

We now prove Proposition~\ref{prop:justesen}.
\cite{Justesen72}
\begin{lemma}
  For all positive integers $s\le m$, there exists a linear code $C:\mathbb{F}_{2^m}\to \mathbb{F}_2^{m+s}$ of dimension $m$ and length $m+s$ tolerating $\half\floor{(m+s)\cdot H^{-1}(\frac{s}{m+s})}$ errors.
  Furthermore, such a code can be found in time $\tilde O(2^{2m})$.
\label{lem:just-1}
\end{lemma}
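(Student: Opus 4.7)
The plan is to invoke the Varshamov bound via a greedy construction of the parity-check matrix, seeded with a standard basis to control the rank. Set $n = m+s$ and target minimum distance $d = \lfloor n \, H^{-1}(s/n) \rfloor + 2$, so that the resulting code corrects at least $\lfloor (d-1)/2 \rfloor \ge \frac{1}{2} \lfloor n \, H^{-1}(s/n) \rfloor$ errors. Varshamov's sufficient condition for a linear $[n, m, d]_2$ code is $\sum_{i=0}^{d-2} \binom{n-1}{i} < 2^s$, which I would verify via the standard entropy estimate $\sum_{i \le j} \binom{N}{i} \le 2^{N H(j/N)}$ (applicable since $s \le m$ forces $s/n \le 1/2$ and hence $H^{-1}(s/n) \le 1/2$).

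For the explicit construction, I would build the $s \times n$ parity-check matrix $H$ one column at a time. For $i \le s$, take $h_i$ to be the $i$-th standard basis vector of $\mathbb{F}_2^s$; this guarantees $H$ has full rank $s$, so the resulting code $\{c \in \mathbb{F}_2^n : H c = 0\}$ has dimension exactly $m$. For each subsequent step $i > s$, greedily choose $h_i$ to avoid the forbidden set $F_{i-1} \subseteq \mathbb{F}_2^s$ consisting of all sums of at most $d-2$ of the previously chosen columns (including the empty sum $0$). The Varshamov inequality guarantees $|F_{i-1}| < 2^s$, so a valid $h_i$ always exists, and the invariant that no $d-1$ of the chosen columns are linearly dependent is preserved, yielding final minimum distance $\ge d$.

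For the runtime, at each of the $n$ steps, enumerating all subset sums of size at most $d-2$ of the previous columns takes $O(2^s \cdot s)$ time (incrementally maintained) and marking them on a bit-array of size $2^s$ lets us scan for an unmarked value in an additional $O(2^s)$ time. The total is $O(n \cdot s \cdot 2^s) = \tilde O(2^s) \subseteq \tilde O(2^{2m})$, comfortably within the claimed bound. The one subtlety that requires care is the off-by-one in error-correction: the requirement $\lfloor (d-1)/2 \rfloor \ge \frac{1}{2} \lfloor n \, H^{-1}(s/n) \rfloor$ is what forces adding $2$ (rather than $1$) inside $d$, and I would confirm it via a small case analysis on the parity of $\lfloor n \, H^{-1}(s/n) \rfloor$; beyond that, the argument is routine.
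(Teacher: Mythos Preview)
Your approach is correct and genuinely different from the paper's. The paper does \emph{not} use the Gilbert--Varshamov greedy parity-check construction; instead it works with the Wozencraft ensemble: for each $\alpha\in\mathbb{F}_{2^m}$ it considers the systematic code $C_\alpha$ given by $x\mapsto (x,\sigma'(\alpha\cdot\sigma^{-1}(x)))$, where $\sigma:\mathbb{F}_{2^m}\to\mathbb{F}_2^m$ is an $\mathbb{F}_2$-linear bijection and $\sigma'$ truncates to the first $s$ bits. A counting argument shows each nonzero low-weight word lands in only $2^{m-s}$ of the $2^m$ codes $C_\alpha$, so some $\alpha$ avoids all of them; brute-forcing over $\alpha$ and checking all $2^m$ codewords of each gives the $\tilde O(2^{2m})$ bound. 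Your route is more elementary (no extension-field arithmetic, just $\mathbb{F}_2$ linear algebra) and in fact runs in $\tilde O(2^{s})\le\tilde O(2^{m})$, comfortably inside the stated budget. What the paper's choice buys is that the resulting code is exactly the classical Justesen inner code, with a one-multiplication encoder; this is natural given that the lemma is immediately fed into a Justesen-style concatenation in Proposition~\ref{prop:justesen}, but nothing in the lemma statement requires that structure, so your construction is a perfectly valid substitute. Both arguments share the same mild sloppiness about the strict inequality in the entropy volume bound, which you at least flag; it is harmless once one notes either that $\lfloor nH^{-1}(s/n)\rfloor<nH^{-1}(s/n)$ generically, or that the bound $\sum_{i\le pn}\binom{n}{i}\le 2^{nH(p)}$ is strict for $0<p<1/2$.
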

\begin{proof}
  Since $\mathbb{F}_{2^m}$ is a $\mathbb{F}_2$ vector space, there exists a linear bijection $\sigma:\mathbb{F}_{2^m}\to \mathbb{F}_2^m$.
  Let $\sigma':\mathbb{F}_{2^m}\to\mathbb{F}_2^s$ be given by taking the first $s$ bits of $\sigma(x)$.
  Let $e \defeq \floor{(m+s)\cdot H^{-1}(\frac{s}{m+s})}$.

  For $\alpha\in\mathbb{F}_{2^m}$, let $C_\alpha$ be the code given by the encoding $\Enc_\alpha:\mathbb{F}_{2}^m\to \mathbb{F}_{2}^{m+s}$ with
  \begin{align}
    x\mapsto (x, \sigma'(\alpha\cdot\sigma^{-1}(x))).
  \end{align}
  Since multiplication by $\alpha$ is bijective and $\mathbb{F}_2$ linear, and $\sigma$ and $\sigma'$ are linear, all such codes $C_\alpha$ are linear.
  For any $x\in\mathbb{F}_2^m$, for a random $\alpha\in\mathbb{F}_{2^m}$, we have $\alpha\sigma^{-1}(x)$ is uniform on $\mathbb{F}_2^s$, so $\sigma'(\alpha\sigma^{-1}(x))$ is uniform on $\mathbb{F}_2^s$.
  Thus, each element of $\mathbb{F}_2^{m+s}$ appears exactly $2^{m-s}$ times in $\{C_\alpha:\alpha\in\mathbb{F}_q\}$ 
  Let $X_{bad}$ denote the set of nonzero element of $\mathbb{F}_2^{m+s}$ with Hamming weight at most $e$.
  This set has size at most $\sum_{i=1}^{e} \binom{m+s}{i}< 2^{(m+s)H(\frac{e}{m+s})}$.
  Thus, for a uniformly random $\alpha\in\mathbb{F}_2^m$, the probability that there exists a nonzero element of $X_{bad}$ in $C_\alpha$
  \begin{align}
    \frac{|X_{bad}|\cdot 2^{m-s}}{2^m} < \frac{2^{(m+s)H(\frac{e}{m+s})}}{2^s} \le 1
  \end{align}
  Hence, there exists some $\alpha$ such that $C_\alpha$ has no elements in $X_{bad}$.
  We can find such an $\alpha$ by brute force in time $\tilde O(2^{2m})$: each $\alpha$ takes time $\tilde O(2^m)$ to compute all codewords and check their hamming weight, and there are $2^m$ such $\alpha$.
  In $C_\alpha$, any two codewords have Hamming distance at least $e$, so it tolerates up to $\frac{e}{2}$ errors.
\end{proof}

\begin{proof}[Proof or Proposition~\ref{prop:justesen}]
  Let $m$ be the smallest integer larger than $\frac{12}{\varepsilon}$ such that $m\cdot 2^m\ge n$.
  Let $s$ be the largest integer such that $\frac{m}{m+s}\ge 1 - \frac{\varepsilon}{3}$, so that $\frac{s}{m+s} \ge \frac{\varepsilon}{4}$.
  
  By Lemma~\ref{lem:just-1}, there exists a code $C_{in}:\mathbb{F}_{2^m}\to \mathbb{F}_2^{m+s}$ of dimension $m$ and length $m+s$ with minimum distance $\floor{(m+s)H^{-1}(\frac{s}{m+s})}$. 
  Let $C_{out}:$ be a Reed Solomon code over $\mathbb{F}_{2^m}$ of length $n'\defeq \floor{n/(m+s)}$ and dimension $k'=\ceil{n'(1-\frac{\varepsilon}{3})}$.
  Let $C\subset \{0,1\}^n$ be the concatenation of $C_{in}$ and $C_{out}$ with $n-n'm$ 0s padded on the end.
  The code $C_{in}$ has rate $\frac{m}{m+s} > 1 - \frac{\varepsilon}{3}$.
  The code $C_{out}$ has rate at least $1 - \frac{\varepsilon}{3}$.
  The padding of $0$s multiplies the rate by $\frac{n'(m+s)}{n} \ge 1 - \frac{m+s}{n} > 1 - \frac{\varepsilon}{3}$.
  Thus, the total rate is at least $(1-\frac{\varepsilon}{3})^3 > 1-\varepsilon$.

  To decode a received word $c_1,\dots,c_n$, we first run the inner decoding to obtain symbols $\alpha_i\in\mathbb{F}_{2^m}$ for $i=1,\dots,n'$, where $\alpha_i$ is the decoding of $c_{(i-1)(m+s)+1},\dots,c_{i(m+s)}$ under $C_{in}$.
  Then, we run the outer decoding on $\alpha_1,\dots,\alpha_{n'}$ to obtain the message.
  The inner decoding can be computed by brute force in time $O(m2^{m})<O_\varepsilon(n)$.
  The outer decoding can be computed in time $O(n^2)$ using the Berlekamp-Massey algorithm.
  Thus, the total decoding run time is $O_\varepsilon(n^2)$.
  The encoding takes time $O_\varepsilon(n^2)$, and construction takes time $\tilde 
  O(2^{2m}) = O_\varepsilon(n^2)$ because we need to construct the inner code.
  
  The outer code tolerates $n' - k' > \frac{\varepsilon n'}{4}$ errors.
  The inner code tolerates up to $\frac{1}{2}\floor{(m+s)H^{-1}(\frac{s}{m+s})}$ errors, and thus every incorrect $\alpha_i$ accounts for at least $\half (m+s)H^{-1}(\frac{s}{m+s}) > \half(m+s)H^{-1}(\frac{\varepsilon}{4})$ errors.
  Thus, for the outer decoding to fail, we need at least $\frac{\varepsilon n'}{4} \cdot \frac{1}{2}(m+s)\cdot H^{-1}(\frac{\varepsilon}{4}) > \frac{\varepsilon^2 n }{500\log\frac{1}{\varepsilon}}$ errors.
  Here, we used that $(m+s)n' > 0.9n$ and that $H^{-1}(x) > \frac{x}{2\log(6/x)}$, so $H^{-1}(\frac{\varepsilon}{4}) > \frac{\varepsilon}{48\log(1/\varepsilon)}$.
\end{proof}

\end{document}